\documentclass[pdflatex,sn-mathphys-num]{sn-jnl}

\usepackage{amsmath}
\usepackage{amssymb}
\usepackage{mathtools}
\usepackage{bm}
\usepackage{physics}
\usepackage{graphicx}
\usepackage{hyperref}

\theoremstyle{thmstyleone}
\newtheorem{theorem}{Theorem}[section]
\newtheorem{proposition}[theorem]{Proposition}

\theoremstyle{thmstyletwo}

\theoremstyle{thmstylethree}
\newtheorem{definition}[theorem]{Definition}

\begin{document}
	
	\title[
	Fuchsian Resonance and Pole-Skipping
	]
	{Fuchsian Resonance and a Horizon-to-Boundary Dictionary for Pole-Skipping}
	
	
	\author*[1]{\fnm{Yoon-Seok} \sur{Choun}}
	\email{ychoun@gmail.com}

	\affil*[1]{
		\orgdiv{Department of Physics},
		\orgname{POSTECH},
		\orgaddress{ \city{ Pohang, Gyeongbuk 37673},
			\country{Republic of Korea}	}
	} 
	
	\abstract{

Pole-skipping occurs at special complex frequencies and momenta where the retarded Green function of a black-hole or black-brane background is not uniquely defined. The local near-horizon mechanism is well known: at resonance, a Frobenius recurrence matrix loses rank and the space of smooth horizon solutions enlarges. We address the global question of how two boundary-normalized solutions enter this resonant horizon solution space.

For a general second-order scalar radial equation on a nonextremal background analytic near the horizon, the Frobenius recurrence at the resonant order yields a solvability condition. We prove that its vanishing is equivalent to singularity of the horizon recurrence matrix, absence of the logarithmic Frobenius term, and existence of two independent smooth horizon solutions. Away from resonance, the source zero is equivalent to horizon smoothness of the response-normalized solution, while the response zero is equivalent to horizon smoothness of the source-normalized solution.

We then analyze the parameter-dependent simple pole of the nonresonant ingoing solution as resonance is approached. After removing this singularity, its resonant limit is proportional to the larger-root Frobenius solution, with proportionality factor given by the same solvability function. Continuing to the boundary shows that the same condition is equivalent to simultaneous vanishing of the two regularized boundary connection coefficients. This establishes a horizon-to-boundary dictionary between local Fuchsian resonance and the boundary source-response 0/0 structure of pole-skipping.

Finally, different directions of approach can select different resonant solutions, with the selection governed by the first parameter variation of the same solvability function.
	}
	
	\keywords{
		Fuchsian differential equations,
		Frobenius resonance,
		black holes,
		pole-skipping,
		holography
	}
	
	\maketitle
	
\section{Introduction}
\label{sec:introduction}

\subsection{Pole-skipping and the known local horizon structure}
\label{subsec:intro-local}

In real-time holography, the retarded Green function is computed by imposing an ingoing boundary condition at the future black-hole horizon.
At generic frequencies and momenta, this condition determines the bulk solution uniquely up to an overall normalization, and therefore also fixes the ratio between the source and response at the boundary
\cite{SonStarinets2002,HerzogSon2003,SkenderisVanRees2008,VanRees2009,KovtunStarinets2005}.
Let $H_{\rm in}$ denote the global solution obtained by selecting the ingoing Frobenius branch near the horizon and continuing it to the boundary.
Near the boundary, let $R_1$ and $R_2$ be two independent solutions normalized with respect to the two independent asymptotic channels.
The boundary expansion of $H_{\rm in}$ then takes the form
\begin{equation}
	H_{\rm in}
	=
	A R_1+B R_2.
\end{equation}
In the standard holographic setting, $R_1$ and $R_2$ may be chosen as the source-normalized and response-normalized solutions, respectively.
This source--response identification is the standard AdS/CFT boundary dictionary
\cite{Maldacena1998,Witten1998,GubserKlebanovPolyakov1998}.
With this convention, the retarded Green function is, up to an overall normalization,
\begin{equation}
	G_R\propto\frac{B}{A},
\end{equation}
as in Refs.~\cite{SonStarinets2002,HerzogSon2003,KovtunStarinets2005}.
The special complex frequency--momentum structure associated with pole-skipping first appeared in the context of black-hole scrambling and holographic energy dynamics
\cite{GrozdanovSchalmScopelliti2018,BlakeDavisonGrozdanovLiu2018},
and was subsequently formulated in terms of the nonuniqueness of the ingoing solution and the intersection of poles and zeros of generic holographic Green functions
\cite{BlakeDavisonVegh2020,NatsuumeOkamura2019,NatsuumeOkamura2020}.
At the boundary, the source and response coefficients may then vanish simultaneously, producing a $0/0$ Green function whose limiting value can depend on the direction of approach in parameter space.
Pole-skipping has since been studied in the presence of finite-coupling and higher-curvature corrections, at zero temperature, for nonmaximal chaos, for fermionic and Rarita--Schwinger fields, in general bosonic and gauge systems, for hyperbolic black holes, in Lifshitz and AdS$_2$ geometries, and in two-dimensional gravity and CFT
\cite{NatsuumeOkamuraFinite2019,WuHigherCurvature2019,NatsuumeOkamuraZeroTemp2021,
ChoiMezeiSarosi2021,CeplakRamdialVegh2020,CeplakVegh2021,WangWang2022,
NingWangWang2023,AhnHyperbolicScalarVector2020,AhnHyperbolicBlackHoles2019,
YuanGe2021,Ramirez2021,YuanGeKimJiAhn2023,AhnClassifying2021}.
More recently, the relation between pole-skipping modes and replica geometry, as well as bulk reconstruction from boundary ambiguity, has also been discussed
\cite{ChuaHartmanWeng2026,LuRanWu2026}.

The local near-horizon mechanism underlying this phenomenon is well known.
At special resonant frequencies, one step of the horizon Taylor or Frobenius recurrence degenerates, and the corresponding finite recurrence matrix loses rank.
As a result, the regular horizon solution space, which is one-dimensional at generic parameter values, may become two-dimensional at resonance.
Schematically,
\begin{equation}
	\dim \mathcal{S}_{\rm smooth}:1\longrightarrow 2.
\end{equation}
Here and below, horizon smoothness means that the radial solution extends as a $C^\infty$ function to the horizon in ingoing Eddington--Finkelstein coordinates.
In a formulation using static coordinates, where a universal ingoing factor is explicitly extracted, this is equivalent to $C^\infty$ smoothness of the residual radial function after that factor has been removed.
The local rank loss and the increase in the dimension of the smooth horizon solution space are not new results of this work.
The starting point of the present analysis is instead the following question: how is this local horizon degeneracy related to the two independently defined source and response conditions at the boundary?

\subsection{The global horizon-to-boundary problem}
\label{subsec:intro-global}

The Frobenius analysis near the horizon and the source--response decomposition at the boundary use different bases.
Near the horizon, two local Frobenius solutions arise naturally, whereas at the boundary one introduces two independent solutions
\begin{equation}
	R_1,\qquad R_2,
\end{equation}
normalized with respect to two independent asymptotic channels.
In the standard holographic setting, $R_1$ may be chosen as the source-normalized solution and $R_2$ as the response-normalized solution.
The global question in pole-skipping is therefore not merely whether two regular solutions exist at the horizon, but rather \emph{how two solutions defined independently at the boundary enter the enlarged smooth horizon solution space at resonance}.
The local near-horizon recurrence by itself does not answer this global connection problem.
For the exactly solvable JT/AdS$_2$ scalar problem, a concrete form of this global horizon-to-boundary relation was recently exhibited \cite{ChounKim2026}.
The goal of the present paper is to extend that structure to a general Fuchsian setting without relying on exact solvability.

Away from resonance, the horizon-to-boundary relation to be established in this paper already takes a simple form.
After removing a common overall normalization from the boundary coefficients,
\begin{equation}
	A=0
	\qquad\Longleftrightarrow\qquad
	R_2\in C^\infty \quad \text{at the horizon},
	\label{eq:intro-dictionary-A}
\end{equation}
and
\begin{equation}
	B=0
	\qquad\Longleftrightarrow\qquad
	R_1\in C^\infty \quad \text{at the horizon}.
	\label{eq:intro-dictionary-B}
\end{equation}
The apparent interchange is a consequence of the boundary normalization.
If the source coefficient $A$ vanishes, the source-normalized branch is absent and the response-normalized solution remains; if the response coefficient $B$ vanishes, the reverse occurs.
Thus, under suitable assumptions, a resonant point at which the source and response coefficients vanish simultaneously is related to a situation in which two solutions that are independent at the boundary both belong to the smooth resonant horizon solution space.
For this interpretation, however, the two boundary solutions must genuinely be independent, so that
\begin{equation}
	W[R_1,R_2]\neq0.
	\label{eq:intro-wronskian}
\end{equation}
The purpose of this paper is to derive this horizon-to-boundary relation for a general Fuchsian radial problem without relying on a particular exact solution or a special-function identity.

\subsection{Main results}
\label{subsec:intro-results}

The first result clarifies the local Fuchsian structure at a resonant horizon.
In ingoing Eddington--Finkelstein coordinates, define
\begin{equation}
	\nu=\frac{i\omega}{\kappa}.
\end{equation}
At the resonant frequencies
\begin{equation}
	\nu=N,
	\qquad
	N\in\mathbb{N},
\end{equation}
the two horizon indicial roots are $0$ and $N$.
In the Frobenius recurrence associated with the smaller indicial root, the coefficients with $n<N$ are determined successively, whereas at $n=N$ the recurrence coefficient $D_N$ multiplying $h_N$ vanishes.
We denote by $\Omega_N$ the left-hand side that remains as the solvability condition at this order.
For the finite $N\times N$ recurrence matrix $M_N$ used in pole-skipping analyses, we show that
\begin{equation}
	\det M_N=[(N-1)!]^2\Omega_N,
	\label{eq:intro-det}
\end{equation}
and that the logarithmic Frobenius coefficient $\ell_N$ associated with integer-separated indicial roots is
\begin{equation}
	\ell_N=-\frac{\Omega_N}{N}.
	\label{eq:intro-log}
\end{equation}
Consequently,
\begin{equation}
	\det M_N=0
	\quad\Longleftrightarrow\quad
	\Omega_N=0
	\quad\Longleftrightarrow\quad
	\ell_N=0
	\quad\Longleftrightarrow\quad
	\dim\mathcal{S}_{\rm smooth}=2 .
	\label{eq:intro-local-equivalence}
\end{equation}
Thus the singularity of the finite horizon recurrence matrix is not an independent condition; it is the matrix representation of the Frobenius solvability condition at the resonant order.
Integer resonance and logarithmic Frobenius solutions themselves belong to classical Fuchsian theory \cite{Ince1956,CoddingtonLevinson1955}, but Eq.~\eqref{eq:intro-local-equivalence} provides the local starting point for the subsequent horizon-to-boundary analysis.

The second result is the central horizon-to-boundary relation.
We write
\begin{equation}
	\lambda=(k,\Delta,\mathbf g)
\end{equation}
for the parameters other than the frequency.
If the nonresonant ingoing solution is normalized at the horizon as
\begin{equation}
	H_{\rm in}(x;\nu,\lambda)
	=
	1+\sum_{n\geq1}h_n(\nu,\lambda)x^n,
\end{equation}
then near $\nu=N$ the coefficient at the resonant order has the form
\begin{equation}
	h_N(\nu,\lambda)
	=
	\frac{\Omega_N(\nu,\lambda)}
	{N(\nu-N)}.
	\label{eq:intro-hN}
\end{equation}
Hence, at a generic integer resonance for which
\begin{equation}
	\Omega_N(N,\lambda)\neq0,
\end{equation}
the nonresonant ingoing family develops a simple pole in parameter space.
To remove this singularity, choose
\begin{equation}
	g_N(\nu,\lambda)
	=
	(\nu-N)u_N(\nu,\lambda),
	\qquad
	u_N(N,\lambda)\neq0,
\end{equation}
and define
\begin{equation}
	H_{\rm in}^{\rm reg}
	=
	g_N H_{\rm in}.
\end{equation}
We prove that at exact resonance the following identity between solutions holds:
\begin{equation}
	H_{\rm in}^{\rm reg}(x;N,\lambda)
	=
	\frac{u_N(N,\lambda)}{N}
	\Omega_N(N,\lambda)
	H_{\rm out}^{(N)}(x;\lambda),
	\label{eq:intro-central-identity}
\end{equation}
where
\begin{equation}
	H_{\rm out}^{(N)}(x;\lambda)
	=
	x^N\bigl(1+O(x)\bigr)
\end{equation}
is the resonant Frobenius solution associated with the larger indicial root $N$.
Equation~\eqref{eq:intro-central-identity} is not merely a formal power-series relation but an identity between actual analytic solutions.

Continuing this identity from the horizon to the boundary shows that the same $\Omega_N$ simultaneously controls the two boundary connection coefficients of the regularized ingoing family.
Denoting these coefficients by $\widehat A$ and $\widehat B$, and choosing the boundary basis to depend analytically on the parameters, we obtain
\begin{equation}
	\Omega_N(N,\lambda)=0
	\quad\Longleftrightarrow\quad
	\widehat A(N,\lambda)
	=
	\widehat B(N,\lambda)
	=
	0.
	\label{eq:intro-boundary-equivalence}
\end{equation}
Thus the local Frobenius solvability condition at the resonant horizon and the source--response $0/0$ structure at the boundary are governed by the same parameter-dependent quantity.
Equations~\eqref{eq:intro-central-identity} and \eqref{eq:intro-boundary-equivalence} constitute the core of the horizon-to-boundary dictionary established in this paper.

The relation $\widehat A=\widehat B=0$ does not mean that the boundary coefficients of a nonzero resonant solution both vanish.
Rather, $\widehat A$ and $\widehat B$ are the coefficient functions obtained by regularizing the nonresonant ingoing family in parameter space and then continuing it to resonance.
The resonant differential equation itself has two independent smooth solutions when $\Omega_N=0$.

The third result concerns which smooth solution at exact resonance is selected as the limit of the nonresonant ingoing solution.
At a resonant point with $\Omega_N=0$, there are two independent smooth horizon solutions, so smoothness alone does not uniquely determine the retarded solution.
With a suitable normalization, a resonant solution may be written as
\begin{equation}
	R_{\rm res}
	=
	H_0^{(N)}
	+
	\alpha_N H_{\rm out}^{(N)}.
	\label{eq:intro-res-solution}
\end{equation}

Let
\begin{equation}
	P_*=
	(\omega_N,k_*,\Delta_*,\mathbf g_*)
\end{equation}
be an exact resonant point in parameter space, with
\begin{equation}
	D_N(P_*)=0,
	\qquad
	\Omega_N(P_*)=0.
\end{equation}
Consider a $C^1$ curve approaching $P_*$,
\begin{equation}
	\gamma(s),
	\qquad
	\gamma(0)=P_*,
\end{equation}
and assume
\begin{equation}
	\left.
	\frac{d}{ds}
	D_N\!\left(\gamma(s)\right)
	\right|_{s=0}
	\neq0.
	\label{eq:intro-transverse-D}
\end{equation}
Then the coefficient selected in the resonant limit of the nonresonant ingoing solution is
\begin{equation}
	\alpha_N^{(\gamma)}
	=
	-
	\frac{
		\left.
		\dfrac{d}{ds}
		\Omega_N\left(\gamma(s)\right)
		\right|_{s=0}
	}{
		\left.
		\dfrac{d}{ds}
		D_N\left(\gamma(s)\right)
		\right|_{s=0}
	}.
	\label{eq:intro-alpha-general}
\end{equation}

In particular, if the background and surface gravity $\kappa$ are fixed and $\omega$ can be used as a local parameter along the curve, then
\begin{equation}
	\gamma(\omega)
	=
	\bigl(
	\omega,
	k(\omega),
	\Delta(\omega),
	\mathbf g
	\bigr),
	\qquad
	\gamma(\omega_N)=P_*,
\end{equation}
and
\begin{equation}
	D_N(\omega)
	=
	N\left(
	N-\frac{i\omega}{\kappa}
	\right),
	\qquad
	\frac{dD_N}{d\omega}
	=
	-\frac{iN}{\kappa}.
\end{equation}
Equation~\eqref{eq:intro-alpha-general} then reduces to
\begin{equation}
	\alpha_N^{(\gamma)}
	=
	\frac{\kappa}{iN}
	\left.
	\frac{d}{d\omega}
	\Omega_N\left(\gamma(\omega)\right)
	\right|_{\omega=\omega_N}.
	\label{eq:intro-alpha}
\end{equation}
Here the derivative is the total derivative along the parameter-space curve $\gamma$, not the partial derivative $\partial_\omega\Omega_N$.
For example, if $\Delta$ and the background are fixed while only $k=k(\omega)$ varies, then
\begin{equation}
	\left.
	\frac{d}{d\omega}
	\Omega_N\!\left(\gamma(\omega)\right)
	\right|_{\omega=\omega_N}
	=
	\left[
	\partial_\omega\Omega_N
	+
	k'(\omega_N)\partial_k\Omega_N
	\right]_{P_*}.
\end{equation}

The structure of the third result can therefore be summarized as follows.
First,
\begin{equation}
	D_N=0
\end{equation}
fixes the integer resonance, and on that resonance surface
\begin{equation}
	\Omega_N=0
\end{equation}
selects the points at which the Frobenius solvability condition is satisfied.
Which resonant solution is selected when such a point is approached is determined by the first variation of the same solvability function. Writing
$\dot\gamma_*:=\left.d\gamma/ds\right|_{s=0}$ for the tangent vector of the approach curve at $P_*$, this variation is
\begin{equation}
	\left.d\Omega_N\right|_{P_*}(\dot\gamma_*).
\end{equation}
Thus the three roles may be summarized as
\begin{equation}
	D_N=0
	\quad\text{fixes the resonance},\qquad
	\Omega_N=0
	\quad\text{fixes solvability at resonance},
\end{equation}
and
\begin{equation}
	\left.d\Omega_N\right|_{P_*}(\dot\gamma_*)
	\quad\text{controls the first-order resonant limit}.
	\label{eq:intro-value-derivative}
\end{equation}
This organizes the known direction dependence of pole-skipping in terms of the same resonant Frobenius data.

\subsection{Scope and assumptions}
\label{subsec:intro-scope}

The main results of this paper concern second-order linear scalar radial equations on backgrounds with a nonextremal horizon.
The metric coefficients and the coefficients of the radial equation are assumed to be analytic near the horizon, and the horizon is assumed to be a regular singular point in ingoing Eddington--Finkelstein coordinates.
We also assume analytic or meromorphic dependence of the coefficients on the frequency, momentum, and other external parameters.
The extremal or zero-temperature limit may have a pole-skipping structure different from that of the nonextremal case, so the present theorems are not applied directly in that regime \cite{NatsuumeOkamuraZeroTemp2021}.
General bosonic, gauge, and fermionic systems are also known to lead to multicomponent or gauge-redundant bulk equations
\cite{WangWang2022,NingWangWang2023,CeplakRamdialVegh2020,CeplakVegh2021}.
Extensions to such coupled systems and to higher-order radial equations require an analysis beyond the single scalar solvability function considered here.

For the global horizon-to-boundary statements, two independent asymptotic channels must be defined at the boundary, and the Wronskian of the corresponding solutions $R_1$ and $R_2$ must be nonzero.
The boundary itself need not be a regular singular point.
For example, if the boundary is an irregular singular point, the same reasoning applies provided that a physical ray or Stokes sector is fixed and that two independent asymptotic channels and their connection coefficients are well defined there.
If additional singularities or branch structures lie between the horizon and the boundary, the analytic continuation path and the associated monodromy data must be fixed explicitly.

The condition
\begin{equation}
	\widehat A=\widehat B=0
\end{equation}
in Eq.~\eqref{eq:intro-boundary-equivalence} means that the two boundary connection coefficients of the regularized ingoing family vanish simultaneously.
This simultaneous zero by itself does not imply that their ratio necessarily has nontrivial direction dependence near the pole-skipping point.
To obtain a stronger first-order directional ambiguity, the curves defined by $\widehat A=0$ and $\widehat B=0$ must have independent tangent directions at the pole-skipping point.

Subtle cases in which the near-horizon degeneracy alone does not determine the complete limiting behavior of boundary pole-skipping have already been noted in existing classifications \cite{AhnClassifying2021}.
In the exactly solvable JT/AdS$_2$ problem, path dependence under unrestricted approach directions also occurs \cite{ChounKim2026}.

For example, in a two-dimensional parameter slice $(\omega,k)$, a sufficient local nondegeneracy condition is
\begin{equation}
	\det
	\begin{pmatrix}
		\partial_\omega\widehat A &
		\partial_k\widehat A
		\\
		\partial_\omega\widehat B &
		\partial_k\widehat B
	\end{pmatrix}_{P_*}
	\neq 0.
	\label{eq:intro-boundary-jacobian}
\end{equation}
This excludes the case in which the two zero curves are tangent to each other at $P_*$ and ensures that they intersect with distinct tangent directions.

Under these assumptions, the results of this paper do not rely on hypergeometric solvability or on the closed-form structure of any particular metric.

\subsection{Organization of the paper}
\label{subsec:intro-organization}

The paper is organized as follows.
Section~\ref{sec:setup} formulates a general scalar radial equation on a nonextremal background in ingoing Eddington--Finkelstein coordinates and states the horizon indicial roots, the boundary asymptotic basis, and the assumptions needed for global continuation.

Section~\ref{sec:fuchsian} analyzes the recurrence at integer Frobenius resonance, defines $\Omega_N$, and proves the equivalence among the finite recurrence matrix, the logarithmic coefficient, and the smooth horizon solution space.

Section~\ref{sec:dictionary} studies the connection problem between the nonresonant horizon basis and the boundary-normalized basis, proves Eq.~\eqref{eq:intro-central-identity} for the ingoing family after removal of the parameter-dependent resonance singularity, and establishes the horizon-to-boundary dictionary.

Section~\ref{sec:resonant-limit} distinguishes the two-dimensional smooth solution space at exact resonance from the parameter-dependent resonant limit of the nonresonant ingoing solution, and analyzes which nonzero resonant solution is selected by different paths in parameter space.

Section~\ref{sec:pole-skipping-green} interprets these results in terms of pole-skipping and the holographic retarded Green function.
In particular, it distinguishes the simultaneous vanishing $\widehat A=\widehat B=0$ of the regularized boundary connection coefficients from the nonzero resonant boundary coefficient pair $(A_\gamma,B_\gamma)\neq(0,0)$ selected by a parameter-space curve $\gamma$, and discusses the first-order nondegeneracy condition under which different approach directions select different Green-function limits.

Finally, Section~\ref{sec:discussion} summarizes the roles of local Fuchsian resonance and the global horizon-to-boundary relation, discusses the scope and limitations of the results, and comments on extensions to coupled systems, extremal horizons, and more general radial problems.

\section{Scalar radial equations near a nonextremal horizon}
\label{sec:setup}

This section sets up the general scalar radial problem used in the subsequent analysis of Fuchsian resonance and the horizon-to-boundary connection.
The essential point is to distinguish two structures.
Near the horizon, we formulate a local Frobenius problem in ingoing Eddington--Finkelstein coordinates, whereas at the boundary we assume two independent asymptotic channels corresponding to the source and response.
Our subsequent goal is to analyze the connection between these two descriptions.

\subsection{Ingoing Eddington--Finkelstein geometry}
\label{subsec:setup-geometry}

Let the bulk spacetime dimension be
\begin{equation}
	D=d+2,
\end{equation}
and consider the static planar metric in ingoing Eddington--Finkelstein coordinates $(v,r,x^1,\ldots,x^d)$,
\begin{equation}
	ds^2
	=
	-A_{\rm met}(r)\,dv^2
	+
	2B_{\rm met}(r)\,dv\,dr
	+
	r^2\delta_{ij}dx^i dx^j,
	\qquad
	i,j=1,\ldots,d.
	\label{eq:setup-metric}
\end{equation}
Let the future horizon be located at
\begin{equation}
	r=r_h,
\end{equation}
and define the radial coordinate measured from the horizon by
\begin{equation}
	x:=r-r_h.
	\label{eq:setup-x}
\end{equation}

We assume that $A_{\rm met}$ and $B_{\rm met}$ are analytic near the horizon and expand them as
\begin{align}
	A_{\rm met}(r)
	&=
	A_1x+A_2x^2+A_3x^3+\cdots,
	\label{eq:setup-A-expansion}
	\\
	B_{\rm met}(r)
	&=
	B_0+B_1x+B_2x^2+\cdots.
	\label{eq:setup-B-expansion}
\end{align}
The nonextremality conditions are
\begin{equation}
	A_{\rm met}(r_h)=0,
	\qquad
	A_1=A_{\rm met}'(r_h)\neq0,
	\qquad
	B_0=B_{\rm met}(r_h)\neq0.
	\label{eq:setup-nonextremal}
\end{equation}

With this normalization, we define the surface gravity by
\begin{equation}
	\kappa
	:=
	\frac{A_1}{2B_0}.
	\label{eq:setup-kappa}
\end{equation}
The Hawking temperature is therefore
\begin{equation}
	T
	=
	\frac{\kappa}{2\pi}
	=
	\frac{A_1}{4\pi B_0}.
	\label{eq:setup-temperature}
\end{equation}
We choose the orientation of the future horizon so that $\kappa>0$.

The metric coefficients may also depend on background parameters other than the frequency and momentum.
We denote these background data collectively by $\mathbf g$.
For asymptotically AdS applications, the scalar mass may be parametrized in terms of a boundary scaling parameter $\Delta$; when convenient, we therefore abbreviate the parameters other than the frequency as
\begin{equation}
	\lambda
	=
	(k,\Delta,\mathbf g,\ldots).
	\label{eq:setup-lambda}
\end{equation}
For a more general boundary problem, the scalar mass $M$ itself may be used as a parameter in place of $\Delta$.

\subsection{Klein--Gordon radial equation in general dimension \texorpdfstring{$d$}{d}}
\label{subsec:setup-radial-equation}

Consider a probe scalar field $\Phi$ of mass $M$ on the background \eqref{eq:setup-metric}, satisfying
\begin{equation}
	(\Box-M^2)\Phi=0.
	\label{eq:setup-KG}
\end{equation}
We use the Fourier decomposition in time and the planar directions,
\begin{equation}
	\Phi(v,r,\vec x)
	=
	e^{-i\omega v+i\vec k\cdot\vec x}R(r).
	\label{eq:setup-Fourier}
\end{equation}
Here
\begin{equation}
	k^2:=\vec k\cdot\vec k,
\end{equation}
and in the complexified spectral problem this product does not involve complex conjugation.

The inverse metric components of \eqref{eq:setup-metric} are
\begin{equation}
	g^{vv}=0,
	\qquad
	g^{vr}=g^{rv}=\frac{1}{B_{\rm met}},
	\qquad
	g^{rr}=\frac{A_{\rm met}}{B_{\rm met}^2},
	\qquad
	g^{ij}=\frac{\delta^{ij}}{r^2},
	\label{eq:setup-inverse-metric}
\end{equation}
and, for the chosen EF orientation,
\begin{equation}
	\sqrt{-g}
	=
	B_{\rm met}(r)r^d.
	\label{eq:setup-determinant}
\end{equation}

Substituting these expressions into Eq.~\eqref{eq:setup-KG} gives the radial equation
\begin{equation}
	\frac{1}{r^d}
	\frac{d}{dr}
	\left(
	\frac{A_{\rm met}(r)r^d}{B_{\rm met}(r)}
	R'(r)
	\right)
	-
	2i\omega R'(r)
	-
	\frac{i\omega d}{r}R(r)
	-
	B_{\rm met}(r)
	\left(
	M^2+\frac{k^2}{r^2}
	\right)R(r)
	=
	0.
	\label{eq:setup-radial}
\end{equation}

For convenience, define
\begin{equation}
	F(r)
	:=
	\frac{A_{\rm met}(r)}{B_{\rm met}(r)}.
	\label{eq:setup-F}
\end{equation}
Then Eq.~\eqref{eq:setup-radial} can be written as
\begin{equation}
	F R''
	+
	\left(
	F'
	+
	\frac{d}{r}F
	-
	2i\omega
	\right)R'
	-
	\left[
	\frac{i\omega d}{r}
	+
	B_{\rm met}
	\left(
	M^2+\frac{k^2}{r^2}
	\right)
	\right]R
	=
	0.
	\label{eq:setup-radial-F}
\end{equation}

Equations~\eqref{eq:setup-A-expansion} and \eqref{eq:setup-B-expansion} imply
\begin{equation}
	F(r)
	=
	\frac{A_1}{B_0}x+O(x^2)
	=
	2\kappa x+O(x^2).
	\label{eq:setup-F-horizon}
\end{equation}
Hence the leading part of Eq.~\eqref{eq:setup-radial-F} near the horizon is
\begin{equation}
	\frac{A_1}{B_0}
	\left(
	xR''+R'
	\right)
	-
	2i\omega R'
	+
	O(1)R'
	+
	O(1)R
	=
	0.
	\label{eq:setup-leading-equation}
\end{equation}

\subsection{Regular singular horizon and indicial roots}
\label{subsec:setup-indicial}

Write Eq.~\eqref{eq:setup-radial-F} in standard form,
\begin{equation}
	R''
	+
	P(x)R'
	+
	Q(x)R
	=
	0.
	\label{eq:setup-standard-form}
\end{equation}
Using Eq.~\eqref{eq:setup-F-horizon}, we find
\begin{equation}
	P(x)
	=
	\frac{1-\nu}{x}+O(1),
	\qquad
	\nu
	:=
	\frac{i\omega}{\kappa},
	\label{eq:setup-P-leading}
\end{equation}
and
\begin{equation}
	Q(x)
	=
	O\left(\frac{1}{x}\right).
	\label{eq:setup-Q-leading}
\end{equation}
Thus
\begin{equation}
	xP(x),
	\qquad
	x^2Q(x)
\end{equation}
are analytic near $x=0$.
Therefore $x=0$, or equivalently $r=r_h$, is a regular singular point of Eq.~\eqref{eq:setup-standard-form}
\cite{Ince1956,CoddingtonLevinson1955}.

\begin{proposition}[Indicial roots at a nonextremal horizon]
	\label{prop:setup-indicial-roots}
	Under the assumptions \eqref{eq:setup-nonextremal}, the horizon indicial equation of the scalar radial equation \eqref{eq:setup-radial} is
	\begin{equation}
		I_h(\rho)
		=
		\rho(\rho-\nu)
		=
		0,
		\qquad
		\nu=\frac{i\omega}{\kappa}.
		\label{eq:setup-indicial-equation}
	\end{equation}
	Hence the two indicial roots are
	\begin{equation}
		\rho_{\rm in}=0,
		\qquad
		\rho_{\rm out}=\nu
		=
		\frac{i\omega}{\kappa}
		=
		\frac{i\omega}{2\pi T}.
		\label{eq:setup-indicial-roots}
	\end{equation}
\end{proposition}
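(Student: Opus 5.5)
The plan is to compute the leading behavior of $P$ and $Q$ in the standard form \eqref{eq:setup-standard-form} directly from \eqref{eq:setup-radial-F}, and then read off the indicial polynomial from the general Fuchsian formula
\begin{equation*}
I_h(\rho)=\rho(\rho-1)+p_0\rho+q_0,
\end{equation*}
where $p_0=\lim_{x\to0}xP(x)$ and $q_0=\lim_{x\to0}x^2Q(x)$ \cite{Ince1956,CoddingtonLevinson1955}. Both limits exist because $x=0$ has already been shown to be a regular singular point.

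First, divide \eqref{eq:setup-radial-F} by $F$ to get $P=(F'+dF/r-2i\omega)/F$ and $Q=-[i\omega d/r+B_{\rm met}(M^2+k^2/r^2)]/F$. Analyticity of $A_{\rm met}$ and $B_{\rm met}$, together with $B_0\neq0$, makes $F=A_{\rm met}/B_{\rm met}$ analytic at $x=0$ with $F=2\kappa x+O(x^2)$ by \eqref{eq:setup-F-horizon}. Since $A_1\neq0$, we have $\kappa\neq0$, so $1/F=(2\kappa x)^{-1}(1+O(x))$ is meromorphic with a simple pole. Also $r=r_h+x$ stays away from zero (with $r_h>0$), so $d/r$ and $1/r^2$ are analytic there.

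Next, expand the numerator of $P$: $F'=2\kappa+O(x)$, $dF/r=O(x)$, and $-2i\omega$ is constant. Hence
\begin{equation*}
xP(x)=\frac{2\kappa-2i\omega+O(x)}{2\kappa(1+O(x))},
\end{equation*}
so $p_0=1-i\omega/\kappa=1-\nu$, which confirms \eqref{eq:setup-P-leading}. The numerator of $Q$ is analytic and $1/F$ has only a simple pole, so $x^2Q(x)=O(x)$ and $q_0=0$, consistent with \eqref{eq:setup-Q-leading}. Substituting gives $I_h(\rho)=\rho(\rho-1)+(1-\nu)\rho=\rho^2-\nu\rho=\rho(\rho-\nu)$, whose roots are $0$ and $\nu$. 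Using \eqref{eq:setup-temperature}, $\kappa=2\pi T$, which yields the last form in \eqref{eq:setup-indicial-roots}.

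There is no serious obstacle; the only point needing care is sign and orientation conventions. The $-2i\omega R'$ term comes from the EF cross term $2B_{\rm met}\,dv\,dr$. With $\Phi\propto e^{-i\omega v}$, it must produce $p_0=1-\nu$ rather than $1+\nu$, so the ingoing root is $0$, corresponding to smoothness in EF coordinates. I would recheck that step of \eqref{eq:setup-radial} against $\sqrt{-g}$ in \eqref{eq:setup-determinant} and the inverse metric \eqref{eq:setup-inverse-metric}. I would also note that the normalization $\kappa=A_1/(2B_0)$ is exactly what makes the $F'$ and $-2i\omega$ contributions combine into $1-\nu$.
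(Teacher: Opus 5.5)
Your proof is correct and is essentially the paper's argument: the paper substitutes $R\sim x^\rho$ into the leading part of the radial equation and collects the $x^{\rho-1}$ terms to get $\rho(2\kappa\rho-2i\omega)=0$. That is the same computation as yours, where you find $p_0=1-\nu$ and $q_0=0$ and insert them into $I(\rho)=\rho(\rho-1)+p_0\rho+q_0$. Your version is slightly tighter. You show explicitly that $xP=(1-\nu)+O(x)$, whereas the paper's displayed remainder $O(1)R'$ would, read literally, contribute at leading order. You also justify the final identification $\kappa=2\pi T$, which the paper's proof leaves implicit.
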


\begin{proof}
	Substituting
	\begin{equation}
		R(x)\sim x^\rho
	\end{equation}
	into Eq.~\eqref{eq:setup-leading-equation}, the most singular terms give
	\begin{equation}
		\rho
		\left(
		\frac{A_1}{B_0}\rho-2i\omega
		\right)
		=
		0.
	\end{equation}
	Using $A_1/B_0=2\kappa$ from Eq.~\eqref{eq:setup-kappa}, this becomes
	\begin{equation}
		\rho
		\left(
		\rho-\frac{i\omega}{\kappa}
		\right)
		=
		0,
	\end{equation}
	which gives Eq.~\eqref{eq:setup-indicial-roots}.
\end{proof}

An important point is that the difference between the two indicial roots,
\begin{equation}
	\rho_{\rm out}-\rho_{\rm in}
	=
	\nu,
\end{equation}
is independent of the spatial dimension $d$.
The quantities $d$, $M$, $k$, and the higher Taylor coefficients of the background enter the subleading recurrence coefficients and the actual pole-skipping condition, but not the indicial-root difference responsible for the horizon resonance.

Integer resonance occurs when
\begin{equation}
	\nu=N,
	\qquad
	N\in\mathbb N=\{1,2,\ldots\}.
	\label{eq:setup-resonance-N}
\end{equation}
The corresponding frequencies are
\begin{equation}
	\omega_N
	=
	-iN\kappa
	=
	-i2\pi TN,
	\qquad
	N\in\mathbb N,
	\label{eq:setup-resonant-frequency}
\end{equation}
and at exact resonance the indicial roots are
\begin{equation}
	\rho_{\rm in}=0,
	\qquad
	\rho_{\rm out}=N.
	\label{eq:setup-resonant-roots}
\end{equation}

\subsection{Horizon branches and smoothness}
\label{subsec:setup-horizon-branches}

First consider the nonresonant case
\begin{equation}
	\nu\notin\mathbb Z.
\end{equation}
Normalizing the leading coefficient of each local Frobenius solution to unity, we write
\begin{align}
	H_{\rm in}(x;\nu,\lambda)
	&=
	1+h_1x+h_2x^2+\cdots,
	\label{eq:setup-Hin}
	\\
	H_{\rm out}(x;\nu,\lambda)
	&=
	x^\nu
	\left(
	1+b_1x+b_2x^2+\cdots
	\right).
	\label{eq:setup-Hout}
\end{align}
A branch of $x^\nu$ is fixed consistently with the chosen radial continuation path.

The exponent-zero branch in Eq.~\eqref{eq:setup-Hin} extends regularly through the future horizon in ingoing Eddington--Finkelstein coordinates.
Thus no separate universal ingoing radial factor needs to be removed in these coordinates.
Equivalently, in the EF formulation the universal ingoing factor is effectively
\begin{equation}
	\chi_{\rm in}^{\rm EF}=1.
\end{equation}

This may be compared with a formulation in static coordinates.
If the static time $t$ and the EF time $v$ are related by
\begin{equation}
	v=t+r_*,
\end{equation}
and the static-coordinate ingoing factor is
\begin{equation}
	\chi_{\rm in}^{(t)}
	=
	e^{-i\omega r_*},
\end{equation}
then
\begin{equation}
	R_t
	=
	\chi_{\rm in}^{(t)}R_{\rm EF}.
	\label{eq:setup-static-EF}
\end{equation}
Hence the condition that ``the residual radial function is smooth after the ingoing factor has been removed'' in static coordinates is equivalent to smoothness of the radial function itself in EF coordinates.
This is one reason for using EF coordinates throughout the present analysis
\cite{SonStarinets2002,HerzogSon2003,SkenderisVanRees2008,VanRees2009,KovtunStarinets2005}.

\begin{definition}[Horizon smoothness]
	\label{def:setup-horizon-smoothness}
	A radial solution $R$ is said to be smooth at the horizon if, as a function of the ingoing Eddington--Finkelstein radial coordinate $x=r-r_h$,
	\begin{equation}
		R\in C^\infty
		\qquad
		\text{at }x=0.
		\label{eq:setup-smoothness}
	\end{equation}
\end{definition}

In the nonresonant case, $H_{\rm in}$ is analytic and hence horizon-smooth.
By contrast, for generic
\begin{equation}
	\nu\notin\mathbb Z,
\end{equation}
the factor $x^\nu$ is not $C^\infty$ at $x=0$, so $H_{\rm out}$ is not horizon-smooth.

At exact integer resonance,
\begin{equation}
	\nu=N,
\end{equation}
the situation changes.
There always exists a local Frobenius solution associated with the larger root of the form
\begin{equation}
	H_{\rm out}^{(N)}(x;\lambda)
	=
	x^N
	\left(
	1+b_1x+b_2x^2+\cdots
	\right),
	\label{eq:setup-Hout-resonant}
\end{equation}
and this solution is analytic.
The smaller-root branch, however, may contain a logarithmic term because the indicial roots differ by an integer.
Thus the existence of two independent smooth horizon solutions at resonance is not determined by the indicial roots alone.
We analyze this question in Section~\ref{sec:fuchsian} using the resonant Frobenius recurrence.

When no confusion can arise, we use the same symbols $H_{\rm in}$ and $H_{\rm out}$ both for the local solutions defined near the horizon and for their global continuations along the chosen radial path.
Whenever necessary, the local germ and its global continuation will be distinguished explicitly in the text.

\subsection{Boundary asymptotic basis and connection coefficients}
\label{subsec:setup-boundary-basis}

Let the boundary endpoint be denoted by $\mathcal I$.
The horizon-to-boundary theorem developed below does not require $\mathcal I$ itself to be a regular singular point.
What is needed is the existence of two independent asymptotic solutions in a chosen asymptotic region, together with a unique expansion of any solution continued from the horizon in terms of those two solutions.

\begin{definition}[Admissible boundary basis]
	\label{def:setup-boundary-admissibility}
	Fix an asymptotic ray or Stokes sector $\mathcal S$ at the boundary endpoint $\mathcal I$.
	Two asymptotic solutions
	\begin{equation}
		\mathcal B_{\rm src}^{(\mathcal S)},
		\qquad
		\mathcal B_{\rm rsp}^{(\mathcal S)}
		\label{eq:setup-boundary-channels}
	\end{equation}
	will be called an admissible boundary basis if the following conditions hold.

	\begin{enumerate}
		\item
		The functions $\mathcal B_{\rm src}^{(\mathcal S)}$ and $\mathcal B_{\rm rsp}^{(\mathcal S)}$ are linearly independent in the chosen sector $\mathcal S$.

		\item
		At an ordinary point $r_0$ where both solutions are defined,
		\begin{equation}
			W\!\left[
			\mathcal B_{\rm src}^{(\mathcal S)},
			\mathcal B_{\rm rsp}^{(\mathcal S)}
			\right](r_0)\neq0,
			\label{eq:setup-boundary-Wronskian}
		\end{equation}
		where
		\begin{equation}
			W[f,g]:=fg'-f'g.
		\end{equation}

		\item
		A relevant solution continued from the horizon along a fixed continuation path has a unique asymptotic expansion in the boundary sector $\mathcal S$,
		\begin{equation}
			R
			\sim
			A\,\mathcal B_{\rm src}^{(\mathcal S)}
			+
			B\,\mathcal B_{\rm rsp}^{(\mathcal S)},
			\label{eq:setup-boundary-decomposition}
		\end{equation}
		where $A$ and $B$ are well-defined parameter-dependent connection coefficients.
	\end{enumerate}
\end{definition}

Let $R_1$ and $R_2$ denote the exact solutions corresponding to the asymptotic solutions in Eq.~\eqref{eq:setup-boundary-channels}, normalized so that
\begin{align}
	R_1
	&\sim
	\mathcal B_{\rm src}^{(\mathcal S)},
	\label{eq:setup-R1-boundary}
	\\
	R_2
	&\sim
	\mathcal B_{\rm rsp}^{(\mathcal S)}.
	\label{eq:setup-R2-boundary}
\end{align}
If these two solutions are independent in the chosen continuation domain, every global solution can be written exactly as
\begin{equation}
	R
	=
	A R_1+B R_2.
	\label{eq:setup-global-decomposition}
\end{equation}
The coefficients $A$ and $B$ in Eq.~\eqref{eq:setup-boundary-decomposition} are precisely the connection coefficients in Eq.~\eqref{eq:setup-global-decomposition}.

In a standard asymptotically AdS problem, $\mathcal B_{\rm src}$ and $\mathcal B_{\rm rsp}$ may be chosen as the source and response asymptotic behaviors, and $R_1$ and $R_2$ then become the source-normalized and response-normalized solutions, respectively.
The general theorem below does not, however, require these channels to be power-law Frobenius solutions.

If the boundary is a regular singular point, one may choose Frobenius channels of the form
\begin{equation}
	R_\pm(z)
	=
	z^{\rho_\pm}S_\pm(z).
\end{equation}
In that case, one may impose $C^\infty$ regularity on the residual functions $S_\pm$ after the leading powers have been removed.
If the boundary indicial roots differ by an integer, this regularity condition requires the absence of the logarithmic term.
Such regular-singular boundaries form an important subclass of Definition~\ref{def:setup-boundary-admissibility}, but they are not required.

Conversely, if the boundary is an irregular singular point, Definition~\ref{def:setup-boundary-admissibility} still applies after fixing a Stokes sector $\mathcal S$, provided that two independent asymptotic channels and their connection coefficients are well defined there.
Since the connection coefficients may differ between Stokes sectors through Stokes multipliers, the choice of sector $\mathcal S$ is regarded as part of the global data in an irregular problem.

\subsection{Global continuation and standing assumptions}
\label{subsec:setup-global-assumptions}

The theorems below connect the local Frobenius analysis at the horizon to the asymptotic basis data at the boundary.
We use the following conditions as standing assumptions.

\begin{enumerate}
	\item[(G1)]
	The functions $A_{\rm met}$ and $B_{\rm met}$ are analytic near the horizon and satisfy the nonextremality conditions \eqref{eq:setup-nonextremal}.

	\item[(G2)]
	The radial equation is a second-order linear equation of the form \eqref{eq:setup-radial}, and its coefficients depend analytically or meromorphically on the parameters in the parameter neighborhood of interest.

	\item[(G3)]
	A continuation path $\Gamma$ from the horizon to the boundary sector $\mathcal S$ is fixed.
	In the simplest case there are no additional singular points between the horizon and the boundary.
	If additional singularities, branch cuts, or nontrivial monodromy are present, $\Gamma$ and the corresponding monodromy data are fixed explicitly as part of the global problem.

	\item[(G4)]
	There exist two independent channels at the boundary satisfying Definition~\ref{def:setup-boundary-admissibility}, and the corresponding solutions $R_1$ and $R_2$ can be continued to the horizon.
	In particular, we require
	\begin{equation}
		W[R_1,R_2](r_0)\neq0.
		\label{eq:setup-W-R1R2}
	\end{equation}

	\item[(G5)]
	In a neighborhood of the resonance of interest, the boundary basis and the horizon-to-boundary connection map can be chosen to depend analytically on the parameters.
	If an initially chosen boundary basis itself develops a parameter pole, a logarithmic degeneration, or a Stokes degeneration, one first chooses a boundary basis that depends analytically on the parameters and then defines the connection coefficients with respect to that basis.

	\item[(G6)]
	In a sufficiently small punctured parameter neighborhood of the integer resonance of interest, namely a neighborhood with the resonant point removed, the nonresonant ingoing solution is well defined up to normalization and can be followed as a function of the parameters.
	The parameter singularity that actually develops as resonance is approached is derived directly from the Frobenius recurrence in Section~\ref{sec:dictionary}.
\end{enumerate}

Connection coefficients depend on the overall normalization of the solution.
For example, under
\begin{equation}
	H_{\rm in}
	\longrightarrow
	f(\nu,\lambda)H_{\rm in},
	\qquad
	f\neq0,
\end{equation}
one has
\begin{equation}
	(A,B)
	\longrightarrow
	f(\nu,\lambda)(A,B).
\end{equation}
Thus rescaling by a nonvanishing analytic factor does not change the source--response ratio or the zero condition of either coefficient.
By contrast, a parameter-dependent factor with a zero or pole at resonance can change the zero structure itself and must not be removed arbitrarily.
The simple parameter pole that appears at resonance, together with the analytic factor needed to remove it, is derived directly from the recurrence in Section~\ref{sec:dictionary}.

To avoid notational ambiguity, we adopt the following convention.
The ordinary connection coefficients obtained by continuing the nonresonant ingoing solution to the boundary are denoted by
\begin{equation}
	A,\qquad B.
\end{equation}
The connection coefficient functions obtained after applying the parameter regularization required for the ingoing family near resonance are denoted by
\begin{equation}
	\widehat A,\qquad \widehat B.
\end{equation}

The relation to be proved below,
\begin{equation}
	\widehat A(P_*)=\widehat B(P_*)=0,
\end{equation}
does not mean that a nonzero resonant solution has both boundary coefficients equal to zero.
Indeed, if $R_1$ and $R_2$ are independent, any nonzero resonant solution must be written as
\begin{equation}
	R_{\rm res}
	=
	A_*R_1+B_*R_2,
	\qquad
	(A_*,B_*)\neq(0,0).
\end{equation}
The simultaneous zero above instead states, in terms of boundary coefficients, that the parameter-dependent regularized family constructed from the nonresonant ingoing solution has the zero solution as its value at $P_*$.

\section{Fuchsian structure at a resonant horizon}
\label{sec:fuchsian}

In this section, we recast the local radial equation near the nonextremal horizon obtained in Section~\ref{sec:setup} in an abstract Fuchsian form and analyze the Frobenius recurrence at integer resonance.
The analysis is entirely local to the horizon and does not use boundary asymptotics or horizon-to-boundary continuation.

Our main purpose is to show that a single solvability condition at the resonant order is equivalent to the singularity of the finite recurrence matrix, the absence of the logarithmic Frobenius term, and the increase in the dimension of the smooth horizon solution space.

\subsection{Standard Fuchs form and Frobenius recurrence}
\label{subsec:fuchs-standard}

As shown in Section~\ref{subsec:setup-indicial}, the nonextremal horizon $x=r-r_h=0$ is a regular singular point of the radial equation.
After multiplying or dividing the equation by a nonvanishing analytic function near the horizon, it can be written in the standard Fuchs form
\begin{equation}
	LR
	:=
	x^2R''+x p(x)R'+q(x)R
	=
	0.
	\label{eq:fuchs-operator}
\end{equation}
Here
\begin{equation}
	p(x)
	=
	\sum_{j=0}^{\infty}p_jx^j,
	\qquad
	q(x)
	=
	\sum_{j=0}^{\infty}q_jx^j
	\label{eq:fuchs-pq}
\end{equation}
are analytic near $x=0$.
When necessary, $p_j$ and $q_j$ depend on $(\omega,\lambda)$, but we suppress this parameter dependence throughout the present section for notational simplicity.

The indicial polynomial of Eq.~\eqref{eq:fuchs-operator} is
\begin{equation}
	I(\rho)
	=
	\rho(\rho-1)+p_0\rho+q_0.
	\label{eq:fuchs-indicial}
\end{equation}
At integer resonance,
\begin{equation}
	\nu=N,
	\qquad
	N\in\mathbb N,
	\label{eq:fuchs-resonance}
\end{equation}
the two indicial roots found in Section~\ref{subsec:setup-indicial} are $0$ and $N$, so
\begin{equation}
	I(\rho)
	=
	\rho(\rho-N).
	\label{eq:fuchs-indicial-resonant}
\end{equation}
Consequently,
\begin{equation}
	p_0=1-N,
	\qquad
	q_0=0.
	\label{eq:fuchs-p0q0}
\end{equation}

Consider the formal Frobenius series associated with the smaller indicial root $\rho=0$,
\begin{equation}
	R(x)
	=
	\sum_{n=0}^{\infty}a_nx^n,
	\qquad
	a_0\neq0.
	\label{eq:fuchs-small-series}
\end{equation}
Substitution into Eq.~\eqref{eq:fuchs-operator} gives, for $n\geq1$,
\begin{equation}
	n(n-N)a_n
	+
	\sum_{j=1}^{n}
	\left[
	(n-j)p_j+q_j
	\right]
	a_{n-j}
	=
	0.
	\label{eq:fuchs-recurrence}
\end{equation}

For $1\leq n\leq N-1$,
\begin{equation}
	n(n-N)\neq0,
	\label{eq:fuchs-nonzero-pivots}
\end{equation}
so once $a_0$ is specified the coefficients
\begin{equation}
	a_1,a_2,\ldots,a_{N-1}
\end{equation}
are determined successively and uniquely.
At precisely $n=N$, however,
\begin{equation}
	N(N-N)=0,
	\label{eq:fuchs-resonant-pivot}
\end{equation}
so the recurrence coefficient $D_N$ multiplying $a_N$ vanishes.
This is the basic local degeneracy associated with integer Frobenius resonance
\cite{Ince1956,CoddingtonLevinson1955}.

\subsection{Solvability condition at the resonant order}
\label{subsec:fuchs-solvability}

Since the overall normalization is arbitrary, we set
\begin{equation}
	a_0=1.
	\label{eq:fuchs-a0-normalization}
\end{equation}
Then the recurrence equations \eqref{eq:fuchs-recurrence} with $n=1,\ldots,N-1$ determine $a_1,\ldots,a_{N-1}$ uniquely.

Substituting these values into the recurrence at $n=N$, the coefficient $a_N$ has already dropped out, and the only remaining condition is
\begin{equation}
	\sum_{j=1}^{N}
	\left[
	(N-j)p_j+q_j
	\right]
	a_{N-j}
	=
	0.
	\label{eq:fuchs-resonant-equation}
\end{equation}

\begin{definition}[Function associated with the solvability condition at the resonant order]
	\label{def:fuchs-Omega}
	Normalize the smaller-root solution by $a_0=1$ and determine $a_1,\ldots,a_{N-1}$ from the recurrences at $n=1,\ldots,N-1$.
	We define the left-hand side remaining in the $n=N$ equation by
	\begin{equation}
		\Omega_N
		:=
		\sum_{j=1}^{N}
		\left[
		(N-j)p_j+q_j
		\right]
		a_{N-j}.
		\label{eq:fuchs-Omega}
	\end{equation}
	Thus the vanishing of $\Omega_N$ is precisely the solvability condition at the resonant order.
	For brevity, we refer to $\Omega_N$ below as the solvability function.
	The condition for the smaller-root power-series solution to pass through the resonant order $n=N$ without a logarithm is therefore
	\begin{equation}
		\Omega_N=0.
		\label{eq:fuchs-Omega-zero}
	\end{equation}
\end{definition}

It is important that $\Omega_N$ is not an independently introduced dynamical quantity.
Its meaning is exactly
\begin{equation}
	\boxed{
		\begin{aligned}
			\Omega_N={}&
			\text{the quantity remaining in the $n=N$ recurrence}\\
			&\text{after solving the first $N-1$ recurrences}
		\end{aligned}
	}
	\label{eq:fuchs-Omega-meaning}
\end{equation}
.

Thus
\begin{equation}
	D_N=0
\end{equation}
identifies the integer resonance itself, whereas
\begin{equation}
	\Omega_N=0
\end{equation}
determines whether, once resonance has occurred, the smaller-root Frobenius series can be continued without a logarithmic correction.
The two conditions therefore play distinct roles.

\subsection{Finite recurrence matrix}
\label{subsec:fuchs-matrix}

We now write the first $N$ recurrence equations as a finite-dimensional linear system.
Define
\begin{equation}
	D_n
	:=
	n(n-N),
	\label{eq:fuchs-Dn}
\end{equation}
and, for $0\leq m\leq n-1$,
\begin{equation}
	C_{n,m}
	:=
	mp_{n-m}+q_{n-m}.
	\label{eq:fuchs-Cnm}
\end{equation}
Then Eq.~\eqref{eq:fuchs-recurrence} becomes
\begin{equation}
	C_{n,0}a_0
	+
	C_{n,1}a_1
	+\cdots+
	C_{n,n-1}a_{n-1}
	+
	D_na_n
	=
	0.
	\label{eq:fuchs-recurrence-C}
\end{equation}

For example, the first three equations are
\begin{align}
	n=1:\quad&
	q_1a_0+D_1a_1=0,
	\label{eq:fuchs-first-rec}
	\\
	n=2:\quad&
	q_2a_0+(p_1+q_1)a_1+D_2a_2=0,
	\label{eq:fuchs-second-rec}
	\\
	n=3:\quad&
	q_3a_0+(p_2+q_2)a_1
	+(2p_1+q_1)a_2+D_3a_3=0.
	\label{eq:fuchs-third-rec}
\end{align}

At exact resonance $D_N=0$.
Collecting the unknowns appearing in the first $N$ equations into
\begin{equation}
	\mathbf u
	=
	\begin{pmatrix}
		a_0\\
		a_1\\
		\vdots\\
		a_{N-1}
	\end{pmatrix},
	\label{eq:fuchs-u}
\end{equation}
we obtain
\begin{equation}
	M_N\mathbf u=0,
	\label{eq:fuchs-matrix-system}
\end{equation}
where
\begin{equation}
	M_N
	=
	\begin{pmatrix}
		C_{1,0} & D_1 & 0 & \cdots & 0
		\\
		C_{2,0} & C_{2,1} & D_2 & \cdots & 0
		\\
		C_{3,0} & C_{3,1} & C_{3,2} & \ddots & \vdots
		\\
		\vdots & \vdots & \vdots & \ddots & D_{N-1}
		\\
		C_{N,0} & C_{N,1} & C_{N,2} & \cdots & C_{N,N-1}
	\end{pmatrix}.
	\label{eq:fuchs-MN}
\end{equation}

Thus $M_N$ is not a new matrix introduced independently; rather,
\begin{equation}
	\boxed{
		M_N
		=
		\text{the coefficient matrix of the first $N$ Frobenius recurrence equations}
	}
	\label{eq:fuchs-MN-meaning}
\end{equation}
.
The reason $D_N$ does not appear in the final row is precisely that $D_N=0$ at resonance.

\subsection{Equivalence of the recurrence-matrix condition and the solvability condition}
\label{subsec:fuchs-det-omega}

We now prove directly the equivalence between singularity of $M_N$ and the condition $\Omega_N=0$.

Introduce
\begin{equation}
	\mathbf a
	=
	\begin{pmatrix}
		a_1\\
		\vdots\\
		a_{N-1}
	\end{pmatrix}
\end{equation}
and write $M_N$ in block form as
\begin{equation}
	M_N
	=
	\begin{pmatrix}
		\mathbf c & P
		\\
		s & \mathbf r^{\,T}
	\end{pmatrix}.
	\label{eq:fuchs-block-M}
\end{equation}
Here
\begin{equation}
	\mathbf c
	=
	\begin{pmatrix}
		C_{1,0}\\
		\vdots\\
		C_{N-1,0}
	\end{pmatrix}
	=
	\begin{pmatrix}
		q_1\\
		\vdots\\
		q_{N-1}
	\end{pmatrix},
	\qquad
	s=C_{N,0}=q_N,
	\label{eq:fuchs-cs}
\end{equation}
and
\begin{equation}
	\mathbf r^{\,T}
	=
	\begin{pmatrix}
		C_{N,1}&C_{N,2}&\cdots&C_{N,N-1}
	\end{pmatrix}.
	\label{eq:fuchs-r}
\end{equation}

The matrix $P$ is the lower-triangular matrix multiplying $a_1,\ldots,a_{N-1}$ in the first $N-1$ equations, and its diagonal entries are
\begin{equation}
	D_1,D_2,\ldots,D_{N-1}.
\end{equation}
Hence
\begin{align}
	\det P
	&=
	\prod_{n=1}^{N-1}n(n-N)
	\nonumber\\
	&=
	(-1)^{N-1}[(N-1)!]^2
	\neq0.
	\label{eq:fuchs-detP}
\end{align}
Therefore $P^{-1}$ exists.

For $a_0=1$, the first $N-1$ recurrence equations are
\begin{equation}
	\mathbf c+P\mathbf a=0,
\end{equation}
so
\begin{equation}
	\mathbf a
	=
	-P^{-1}\mathbf c.
	\label{eq:fuchs-a-solved}
\end{equation}

The final resonant equation is
\begin{equation}
	s+\mathbf r^{\,T}\mathbf a=0.
\end{equation}
Substituting Eq.~\eqref{eq:fuchs-a-solved} gives
\begin{equation}
	s-\mathbf r^{\,T}P^{-1}\mathbf c=0.
	\label{eq:fuchs-schur-condition}
\end{equation}
By Definition~\ref{def:fuchs-Omega},
\begin{equation}
	\boxed{
		\Omega_N
		=
		s-\mathbf r^{\,T}P^{-1}\mathbf c
	}
	\label{eq:fuchs-Omega-block}
\end{equation}
.

\begin{proposition}[Recurrence matrix and solvability condition]
	\label{prop:fuchs-det-Omega}
	In the standard Fuchs normalization \eqref{eq:fuchs-operator} with $a_0=1$,
	\begin{equation}
		\boxed{
			\det M_N
			=
			[(N-1)!]^2\Omega_N
		}
		\label{eq:fuchs-det-exact}
	\end{equation}
	and therefore
	\begin{equation}
		\det M_N=0
		\qquad\Longleftrightarrow\qquad
		\Omega_N=0.
		\label{eq:fuchs-det-Omega-equivalence}
	\end{equation}
\end{proposition}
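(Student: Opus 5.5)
The plan is to evaluate $\det M_N$ directly from the block decomposition \eqref{eq:fuchs-block-M} with a Schur-complement factorization, and then read off $\Omega_N$ through the identity \eqref{eq:fuchs-Omega-block}. The difficulty is that the invertible block $P$ sits in the upper-right corner of $M_N$ rather than the upper-left. So the first step is to permute columns: move the first column $(\mathbf c^{T},s)^{T}$ to the last position. This is a cyclic permutation of $N$ columns, realized by $N-1$ adjacent transpositions, so it multiplies the determinant by $(-1)^{N-1}$. The result is
\begin{equation}
	\det M_N=(-1)^{N-1}\det\begin{pmatrix} P & \mathbf c\\ \mathbf r^{\,T} & s\end{pmatrix}.
\end{equation}

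The second step is the standard Schur-complement identity, valid because $\det P\neq 0$ by \eqref{eq:fuchs-detP}. The lower-block elimination
\begin{equation}
	\begin{pmatrix} I & 0\\ -\mathbf r^{\,T}P^{-1} & 1\end{pmatrix}
	\begin{pmatrix} P & \mathbf c\\ \mathbf r^{\,T} & s\end{pmatrix}
	=
	\begin{pmatrix} P & \mathbf c\\ 0 & s-\mathbf r^{\,T}P^{-1}\mathbf c\end{pmatrix}
\end{equation}
uses a unimodular matrix, so the determinant of the rearranged matrix equals $\det P\,(s-\mathbf r^{\,T}P^{-1}\mathbf c)$. By \eqref{eq:fuchs-Omega-block} this equals $\det P\,\Omega_N$.

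Inserting \eqref{eq:fuchs-detP} then gives
\begin{equation}
	\det M_N=(-1)^{N-1}(-1)^{N-1}[(N-1)!]^2\,\Omega_N=[(N-1)!]^2\,\Omega_N ,
\end{equation}
which is \eqref{eq:fuchs-det-exact}. Since $[(N-1)!]^2\neq0$, the equivalence \eqref{eq:fuchs-det-Omega-equivalence} follows immediately.

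I would check the degenerate case $N=1$ separately. There $P$ is empty, with $\det P=1$ by convention, and $M_1=(C_{1,0})=(q_1)$. Definition~\ref{def:fuchs-Omega} with $a_0=1$ gives $\Omega_1=q_1$ directly, so the formula holds with $0!=1$.

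The only real obstacle is sign bookkeeping: the column-permutation sign must cancel exactly against the sign $(-1)^{N-1}$ in $\det P$. I would verify this on the $N=2$ case as a sanity check. There
\begin{equation}
	\det\begin{pmatrix} q_1 & D_1\\ q_2 & p_1+q_1\end{pmatrix}=q_1(p_1+q_1)+q_2 ,
\end{equation}
since $D_1=-1$. On the other hand, $a_1=q_1$ and $\Omega_2=(p_1+q_1)a_1+q_2$, so the two agree.

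An alternative route avoids the Schur complement altogether. Add to the first column of $M_N$ the linear combination $\sum_{m\ge1}a_m\,(\text{column } m)$, where $a_m$ are the recurrence coefficients computed with $a_0=1$. By the first $N-1$ recurrences, this kills the first $N-1$ entries of the first column and leaves $\Omega_N$ in the bottom-left corner. Expanding along that column gives $(-1)^{N+1}\Omega_N\det P$, which reproduces the same result and can serve as a cross-check on the signs.
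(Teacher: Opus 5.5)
Your proposal is correct and follows the paper's proof essentially step for step: the cyclic column permutation contributing $(-1)^{N-1}$, the Schur-complement row elimination giving $(\det P)\,\Omega_N$, and the cancellation against $\det P=(-1)^{N-1}[(N-1)!]^2$. Your explicit $N=1$ and $N=2$ checks and the alternative column-operation route (which yields $(-1)^{N+1}\Omega_N\det P$) are also correct, and are consistency checks the paper does not spell out.
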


\begin{proof}
	Moving the first column in Eq.~\eqref{eq:fuchs-block-M} to the last position requires $N-1$ column interchanges, so
	\begin{equation}
		\det M_N
		=
		(-1)^{N-1}
		\det
		\begin{pmatrix}
			P & \mathbf c
			\\
			\mathbf r^{\,T} & s
		\end{pmatrix}.
		\label{eq:fuchs-column-swap}
	\end{equation}

	Since $P$ is invertible, subtracting from the final row the linear combination of the first $N-1$ rows determined by $\mathbf r^{\,T}P^{-1}$ leaves the determinant unchanged and gives
	\begin{equation}
		\begin{pmatrix}
			P & \mathbf c
			\\
			\mathbf r^{\,T} & s
		\end{pmatrix}
		\longrightarrow
		\begin{pmatrix}
			P & \mathbf c
			\\
			0 & s-\mathbf r^{\,T}P^{-1}\mathbf c
		\end{pmatrix}.
		\label{eq:fuchs-row-elimination}
	\end{equation}
	Thus
	\begin{equation}
		\det
		\begin{pmatrix}
			P & \mathbf c
			\\
			\mathbf r^{\,T} & s
		\end{pmatrix}
		=
		(\det P)\Omega_N.
		\label{eq:fuchs-schur-det}
	\end{equation}
	Using Eq.~\eqref{eq:fuchs-detP}, we obtain
	\begin{align}
		\det M_N
		&=
		(-1)^{N-1}
		(-1)^{N-1}
		[(N-1)!]^2
		\Omega_N
		\nonumber\\
		&=
		[(N-1)!]^2\Omega_N.
	\end{align}
\end{proof}

The expression in Eq.~\eqref{eq:fuchs-Omega-block} may be viewed as the Schur complement of $P$ in linear algebra.
What matters here, however, is not the terminology but the fact that $\Omega_N$ is precisely
\begin{equation}
	\begin{aligned}
	&\text{the quantity remaining in the final equation}\\
	&\text{after eliminating the first $N-1$ recurrences}.
	\end{aligned}
\end{equation}

\subsection{Logarithmic Frobenius solution}
\label{subsec:fuchs-log}

We now relate $\Omega_N$ to the logarithmic term that appears at integer resonance.

Normalize the Frobenius solution associated with the larger indicial root $N$ as
\begin{equation}
	F_N(x)
	=
	x^N
	\left(
	1+b_1x+b_2x^2+\cdots
	\right).
	\label{eq:fuchs-FN}
\end{equation}
Writing an order above $N$ as $N+m$, the corresponding recurrence coefficient is
\begin{equation}
	(N+m)\bigl((N+m)-N\bigr)
	=
	m(N+m)
	\neq0,
	\qquad
	m\geq1.
	\label{eq:fuchs-larger-pivots}
\end{equation}
Hence $b_1,b_2,\ldots$ are determined successively and uniquely.

For integer-separated indicial roots, the smaller-root solution generally has the form
\begin{equation}
	F_0(x)
	=
	H(x)
	+
	\ell_N F_N(x)\log x,
	\label{eq:fuchs-F0-log}
\end{equation}
as in Refs.~\cite{Ince1956,CoddingtonLevinson1955}.
Here $H(x)$ is the analytic power-series part and $\ell_N$ is the logarithmic Frobenius coefficient.

Write the smaller-root power-series part as
\begin{equation}
	H(x)
	=
	a_0+a_1x+\cdots.
\end{equation}
After the preceding recurrence relations have fixed $a_0,\ldots,a_{N-1}$, one has
\begin{equation}
	LH
	=
	\Omega_N x^N
	+
	O(x^{N+1}).
	\label{eq:fuchs-LH}
\end{equation}

Using Eq.~\eqref{eq:fuchs-operator},
\begin{equation}
	L[F_N\log x]
	=
	(\log x)L[F_N]
	+
	2xF_N'
	+
	\bigl(p(x)-1\bigr)F_N.
	\label{eq:fuchs-LFNlog}
\end{equation}
Since $F_N$ is an exact solution,
\begin{equation}
	L[F_N]=0.
\end{equation}
Moreover,
\begin{equation}
	F_N=x^N+O(x^{N+1}),
	\qquad
	F_N'=Nx^{N-1}+O(x^N),
\end{equation}
and $p_0=1-N$, so
\begin{align}
	L[F_N\log x]
	&=
	2Nx^N
	+
	(p_0-1)x^N
	+
	O(x^{N+1})
	\nonumber\\
	&=
	Nx^N+O(x^{N+1}).
	\label{eq:fuchs-LFNlog-leading}
\end{align}

Applying $L$ to Eq.~\eqref{eq:fuchs-F0-log} gives
\begin{align}
	0
	=
	LF_0
	&=
	LH
	+
	\ell_N L[F_N\log x]
	\nonumber\\
	&=
	\left(
	\Omega_N+N\ell_N
	\right)x^N
	+
	O(x^{N+1}).
\end{align}
Therefore
\begin{equation}
	\boxed{
		\ell_N
		=
		-\frac{\Omega_N}{N}
	}
	\label{eq:fuchs-log-Omega}
\end{equation}
.

Combining this with Proposition~\ref{prop:fuchs-det-Omega},
\begin{equation}
	\ell_N
	=
	-\frac{\det M_N}
	{N[(N-1)!]^2}.
	\label{eq:fuchs-log-det}
\end{equation}
Hence
\begin{equation}
	\boxed{
		\det M_N=0
		\quad\Longleftrightarrow\quad
		\Omega_N=0
		\quad\Longleftrightarrow\quad
		\ell_N=0
	}
	\label{eq:fuchs-three-equivalence}
\end{equation}
.

\subsection{Smooth horizon solution space}
\label{subsec:fuchs-smooth-space}

We now relate the logarithmic coefficient directly to horizon smoothness.
At exact resonance, choose the two local solutions as
\begin{equation}
	F_N
	=
	x^N(1+O(x)),
	\qquad
	F_0
	=
	H+\ell_NF_N\log x.
	\label{eq:fuchs-two-local-solutions}
\end{equation}
The solution $F_N$ is always analytic.

First suppose
\begin{equation}
	\ell_N\neq0.
\end{equation}
A general local solution is
\begin{equation}
	R
	=
	C_0F_0+C_NF_N,
\end{equation}
so
\begin{equation}
	R
	=
	C_0H+C_NF_N
	+
	C_0\ell_NF_N\log x.
	\label{eq:fuchs-general-log-solution}
\end{equation}
Since $F_N=x^N(1+O(x))$, the logarithmic part contains
\begin{equation}
	x^N\log x,
\end{equation}
which is not $C^\infty$ at $x=0$.
Therefore horizon smoothness in the sense of Definition~\ref{def:setup-horizon-smoothness} requires
\begin{equation}
	C_0\ell_N=0.
\end{equation}
If $\ell_N\neq0$, then
\begin{equation}
	C_0=0,
\end{equation}
and the only smooth direction is
\begin{equation}
	R=C_NF_N.
\end{equation}
Thus
\begin{equation}
	\ell_N\neq0
	\qquad\Longrightarrow\qquad
	\dim\mathcal S_{\rm smooth}=1.
	\label{eq:fuchs-smooth-dim-one}
\end{equation}

Conversely, if
\begin{equation}
	\ell_N=0,
\end{equation}
then
\begin{equation}
	F_0=H,
	\qquad
	F_N=x^N(1+O(x))
\end{equation}
are both analytic and are linearly independent because their indicial behaviors differ.
Hence every linear combination
\begin{equation}
	R=C_0F_0+C_NF_N
\end{equation}
is smooth at the horizon, and
\begin{equation}
	\ell_N=0
	\qquad\Longrightarrow\qquad
	\dim\mathcal S_{\rm smooth}=2.
	\label{eq:fuchs-smooth-dim-two}
\end{equation}
We therefore obtain
\begin{equation}
	\boxed{
		\ell_N=0
		\quad\Longleftrightarrow\quad
		\dim\mathcal S_{\rm smooth}=2
	}
	\label{eq:fuchs-log-smooth-equivalence}
\end{equation}
.

\subsection{Local Fuchsian equivalence theorem}
\label{subsec:fuchs-main-theorem}

The preceding results can be summarized in a single local theorem.

\begin{theorem}[Local Fuchsian equivalence at a resonant horizon]
	\label{thm:fuchs-local-equivalence}
	Assume the nonextremal horizon conditions of Section~\ref{sec:setup} and suppose that, at exact integer resonance
	\begin{equation}
		\nu=N,
		\qquad
		N\in\mathbb N,
	\end{equation}
	the horizon radial equation can be written in the standard Fuchs form \eqref{eq:fuchs-operator}.
	Normalize the smaller-root branch by $a_0=1$, define $\Omega_N$ by Definition~\ref{def:fuchs-Omega}, and let $M_N$ be the coefficient matrix of the first $N$ recurrence equations.
	Then
	\begin{equation}
		\boxed{
			\det M_N=0
			\quad\Longleftrightarrow\quad
			\Omega_N=0
			\quad\Longleftrightarrow\quad
			\ell_N=0
			\quad\Longleftrightarrow\quad
			\dim\mathcal S_{\rm smooth}=2
		}
		\label{eq:fuchs-main-equivalence}
	\end{equation}
	.
	In the present standard Fuchs normalization,
	\begin{equation}
		\det M_N
		=
		[(N-1)!]^2\Omega_N,
		\qquad
		\ell_N
		=
		-\frac{\Omega_N}{N}.
		\label{eq:fuchs-main-identities}
	\end{equation}
\end{theorem}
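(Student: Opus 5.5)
The theorem collects results already established in this section, so the proof is an assembly of three links plus the two quantitative identities. I will first record that the standard Fuchs form \eqref{eq:fuchs-operator} at $\nu=N$ has $p_0=1-N$ and $q_0=0$ by \eqref{eq:fuchs-p0q0}. Hence the pivots $D_n=n(n-N)$ are nonzero for $1\le n\le N-1$ and vanish at $n=N$. This fixes $a_1,\dots,a_{N-1}$ uniquely once $a_0=1$, so $\Omega_N$ in Definition~\ref{def:fuchs-Omega} is well defined.

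For the first equivalence I will invoke Proposition~\ref{prop:fuchs-det-Omega}. Write $M_N$ in the block form \eqref{eq:fuchs-block-M} and move the first column to the end, which costs a sign $(-1)^{N-1}$. Then eliminate the last row using the invertible lower-triangular $P$; this gives a Schur complement $s-\mathbf r^{\,T}P^{-1}\mathbf c=\Omega_N$. Combined with $\det P=(-1)^{N-1}[(N-1)!]^2$, this yields $\det M_N=[(N-1)!]^2\Omega_N$. Because the prefactor never vanishes, $\det M_N=0\Leftrightarrow\Omega_N=0$.

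For the second equivalence I will use the classical fact that at integer-separated roots a smaller-root solution of the form $F_0=H+\ell_NF_N\log x$ exists, with $F_N=x^N(1+O(x))$ analytic because the pivots $m(N+m)$ are nonzero. I apply $L$ to $F_0$. The truncated series $H$ gives $LH=\Omega_Nx^N+O(x^{N+1})$, since the orders below $N$ are solved and the order-$N$ coefficient of $a_N$ is $D_N=0$. The log term gives $L[F_N\log x]=2xF_N'+(p-1)F_N=(2N+p_0-1)x^N+\dots=Nx^N+\dots$. Matching at order $x^N$ gives $\ell_N=-\Omega_N/N$.

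The step needing the most care is the last link, $\ell_N=0\Leftrightarrow\dim\mathcal S_{\rm smooth}=2$. If $\ell_N=0$, then $F_0=H$ and $F_N$ are both analytic, and they are independent because their leading exponents differ, so the smooth space is all of the two-dimensional solution space. If $\ell_N\neq0$, a general solution $C_0F_0+C_NF_N$ contains $C_0\ell_Nx^N\log x(1+O(x))$. I must argue this is not $C^\infty$. The $N$-th derivative of $x^N\log x$ contains $N!\log x$, which is unbounded. The remaining terms are analytic or of the form $x^{N+k}\log x$, whose $N$-th derivatives stay bounded, so no cancellation is possible. This forces $C_0=0$, leaving only the span of $F_N$, so $\dim\mathcal S_{\rm smooth}=1$. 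Chaining the three equivalences gives \eqref{eq:fuchs-main-equivalence}.
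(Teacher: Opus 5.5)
Your proposal is correct and follows the paper's proof essentially step for step. Like the paper, you use the column swap and Schur complement to get $\det M_N=[(N-1)!]^2\Omega_N$, match the $x^N$ coefficient of $L[H+\ell_N F_N\log x]$ using $p_0=1-N$, and use the $x^N\log x$ obstruction to smoothness when $\ell_N\neq0$; your check that the $N$-th derivative of $F_N\log x$ behaves like $N!\log x$ plus bounded terms is a slightly more careful version of the paper's bare assertion that $x^N\log x$ is not $C^\infty$.
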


\begin{proof}
	By Proposition~\ref{prop:fuchs-det-Omega},
	\begin{equation}
		\det M_N=0
		\Longleftrightarrow
		\Omega_N=0.
	\end{equation}
	Equation~\eqref{eq:fuchs-log-Omega} gives
	\begin{equation}
		\Omega_N=0
		\Longleftrightarrow
		\ell_N=0.
	\end{equation}
	Finally, Section~\ref{subsec:fuchs-smooth-space} gives
	\begin{equation}
		\ell_N=0
		\Longleftrightarrow
		\dim\mathcal S_{\rm smooth}=2.
	\end{equation}
	Combining these statements yields Eq.~\eqref{eq:fuchs-main-equivalence}.
\end{proof}

Theorem~\ref{thm:fuchs-local-equivalence} does not claim that integer resonance or logarithmic Frobenius theory itself is new.
Its role is to state, in a common notation and normalization, precisely how the rank loss of the finite recurrence matrix commonly used in pole-skipping analyses is related to the corresponding classical Fuchsian data.
In particular, the matrix condition
\begin{equation}
	\det M_N=0
\end{equation}
is not an independent criterion; it is the finite-dimensional representation of the solvability condition
\begin{equation}
	\Omega_N=0
\end{equation}
at the resonant order.

\subsection{Remarks on normalization}
\label{subsec:fuchs-normalization}

In the exact proportionality
\begin{equation}
	\det M_N
	=
	[(N-1)!]^2\Omega_N
\end{equation}
of Theorem~\ref{thm:fuchs-local-equivalence}, the factor $[(N-1)!]^2$ depends on the standard Fuchs normalization \eqref{eq:fuchs-operator} and on the row normalization of the recurrence equations.

For example, multiplying each recurrence equation by a nonzero constant rescales $\det M_N$ by the product of those constants.
Likewise, multiplying the entire radial equation by a nonvanishing analytic function near the horizon can change the exact normalization of the coefficient matrix.

Accordingly, in a more general normalization the invariant statement is
\begin{equation}
	\det M_N
	=
	\mathcal C_N\Omega_N,
	\qquad
	\mathcal C_N\neq0.
	\label{eq:fuchs-general-normalization}
\end{equation}
What matters in the present paper is the equivalence of the zero conditions,
\begin{equation}
	\boxed{
		\det M_N=0
		\quad\Longleftrightarrow\quad
		\Omega_N=0
	}.
\end{equation}

Similarly, if the original radial equation is written in a form such as $xR''+\cdots=0$ and is then multiplied by $x$ to obtain the standard Fuchs form \eqref{eq:fuchs-operator}, the power label of the coefficient equation can shift by one order.
This is only a change of normalization of the same resonant solvability condition and does not affect the condition
\begin{equation}
	\Omega_N=0.
\end{equation}

The local structure obtained in this section may be summarized as
\begin{equation}
	\boxed{
		\begin{gathered}
			\nu=N
			\quad\Longrightarrow\quad
			\text{the coefficient multiplying $a_N$ in the $n=N$ recurrence vanishes},
			\\[2mm]
			\Omega_N=0
			\quad\Longleftrightarrow\quad
			\text{log-free continuation of the smaller-root branch},
			\\[2mm]
			\Omega_N=0
			\quad\Longleftrightarrow\quad
			\det M_N=0
			\quad\Longleftrightarrow\quad
			\ell_N=0
			\quad\Longleftrightarrow\quad
			\dim\mathcal S_{\rm smooth}=2.
		\end{gathered}
	}
	\label{eq:fuchs-section-summary}
\end{equation}

\section{Horizon-to-boundary dictionary}
	\label{sec:dictionary}
	
	In Section~\ref{sec:fuchsian}, we analyzed the local Frobenius problem at integer resonance and obtained
	\begin{equation}
		\det M_N=0
		\quad\Longleftrightarrow\quad
		\Omega_N=0
		\quad\Longleftrightarrow\quad
		\ell_N=0
		\quad\Longleftrightarrow\quad
		\dim\mathcal S_{\rm smooth}=2
		\label{eq:dictionary-local-chain}
	\end{equation}
	.
	In this section, we connect this local result to the boundary asymptotic basis and connection problem defined in Section~\ref{sec:setup}.
	
	We first analyze the connection matrix between the horizon Frobenius basis and the boundary-normalized basis away from resonance.
	We then track directly from the recurrence how the nonresonant ingoing solution can become singular in parameter space as an integer resonance is approached.
	After removing this simple pole with the corresponding analytic factor, we compute the value of the regularized ingoing family at resonance and show that the same $\Omega_N$ simultaneously controls both regularized boundary connection coefficients.

	\subsection{Nonresonant horizon basis and boundary basis}
	\label{subsec:dictionary-nonresonant-basis}
	
	First consider a nonresonant parameter value satisfying
	\begin{equation}
		\nu\notin\mathbb Z.
		\label{eq:dictionary-nonresonant}
	\end{equation}
	We take the horizon Frobenius basis defined in Section~\ref{subsec:setup-horizon-branches} to be
	\begin{align}
		H_{\rm in}(x;\nu,\lambda)
		&=
		1+O(x),
		\label{eq:dictionary-Hin}
		\\
		H_{\rm out}(x;\nu,\lambda)
		&=
		x^\nu\bigl(1+O(x)\bigr)
		\label{eq:dictionary-Hout}
	\end{align}
	.
	
	The solution $H_{\rm in}$ is horizon-smooth in ingoing Eddington--Finkelstein coordinates.
	By contrast, when $\nu\notin\mathbb Z$, the noninteger-power behavior of $x^\nu$ implies generically that
	\begin{equation}
		H_{\rm out}\notin C^\infty
		\qquad
		\text{at }x=0.
		\label{eq:dictionary-Hout-nonsmooth}
	\end{equation}
	
	Let $R_1$ and $R_2$ be the two independent global solutions normalized with respect to the source and response channels at the boundary, as in Section~\ref{subsec:setup-boundary-basis}.
	Continuing them to the horizon gives
	\begin{align}
		R_1
		&=
		c_1H_{\rm in}+d_1H_{\rm out},
		\label{eq:dictionary-R1-horizon}
		\\
		R_2
		&=
		c_2H_{\rm in}+d_2H_{\rm out}.
		\label{eq:dictionary-R2-horizon}
	\end{align}
	where $c_i$ and $d_i$ are parameter-dependent connection coefficients.
	
	Define the connection matrix by
	\begin{equation}
		C
		=
		\begin{pmatrix}
			c_1 & c_2\\
			d_1 & d_2
		\end{pmatrix}
	\end{equation}
	and set
	\begin{equation}
		\Delta_c
		:=
		\det C
		=
		c_1d_2-c_2d_1
		\label{eq:dictionary-Delta-c}
	\end{equation}
	.
	If the two horizon solutions and the two boundary solutions are each linearly independent, then
	\begin{equation}
		\Delta_c\neq0.
		\label{eq:dictionary-Delta-nonzero}
	\end{equation}
	Indeed,
	\begin{equation}
		W[R_1,R_2]
		=
		\Delta_c\,
		W[H_{\rm in},H_{\rm out}]
		\label{eq:dictionary-Wronskian}
	\end{equation}
	so independence of the two bases is equivalent to invertibility of the connection matrix.
	
	Solving Eqs.~\eqref{eq:dictionary-R1-horizon} and \eqref{eq:dictionary-R2-horizon} for the horizon basis gives
	\begin{align}
		H_{\rm in}
		&=
		\frac{d_2}{\Delta_c}R_1
		-
		\frac{d_1}{\Delta_c}R_2,
		\label{eq:dictionary-Hin-boundary}
		\\
		H_{\rm out}
		&=
		-\frac{c_2}{\Delta_c}R_1
		+
		\frac{c_1}{\Delta_c}R_2.
		\label{eq:dictionary-Hout-boundary}
	\end{align}
	
	Using the convention of Section~\ref{subsec:setup-boundary-basis}, write the boundary decomposition of the ingoing solution as
	\begin{equation}
		H_{\rm in}
		=
		A R_1+B R_2
		\label{eq:dictionary-Hin-AB}
	\end{equation}
	. Equation~\eqref{eq:dictionary-Hin-boundary} then gives
	\begin{equation}
		A
		=
		\frac{d_2}{\Delta_c},
		\qquad
		B
		=
		-\frac{d_1}{\Delta_c}.
		\label{eq:dictionary-AB-d}
	\end{equation}
	Thus the source coefficient $A$ is related to the coefficient $d_2$ of $H_{\rm out}$ in $R_2$, while the response coefficient $B$ is related to the coefficient $d_1$ of $H_{\rm out}$ in $R_1$.

	\subsection{Zero conditions for the source and response}
	\label{subsec:dictionary-nonresonant-zero}
	
	In Eq.~\eqref{eq:dictionary-R1-horizon}, $H_{\rm in}$ is horizon-smooth whereas $H_{\rm out}$ is not horizon-smooth for $\nu\notin\mathbb Z$.
	Therefore
	\begin{equation}
		R_1\in C^\infty
		\qquad\Longleftrightarrow\qquad
		d_1=0
		\label{eq:dictionary-R1-smooth}
	\end{equation}
	.
	Similarly,
	\begin{equation}
		R_2\in C^\infty
		\qquad\Longleftrightarrow\qquad
		d_2=0.
		\label{eq:dictionary-R2-smooth}
	\end{equation}
	
	Using Eq.~\eqref{eq:dictionary-AB-d} together with $\Delta_c\neq0$ gives the following result.
	
	\begin{proposition}[Nonresonant horizon-to-boundary relation]
		\label{prop:dictionary-nonresonant}
		Suppose $\nu\notin\mathbb Z$, with $R_1,R_2$ linearly independent and $H_{\rm in},H_{\rm out}$ linearly independent.
		Then
		\begin{equation}
			\boxed{
				A=0
				\quad\Longleftrightarrow\quad
				R_2\in C^\infty
				\text{ at the horizon}
			}
			\label{eq:dictionary-A-R2}
		\end{equation}
		and
		\begin{equation}
			\boxed{
				B=0
				\quad\Longleftrightarrow\quad
				R_1\in C^\infty
				\text{ at the horizon}
			}
			\label{eq:dictionary-B-R1}
		\end{equation}
		.
	\end{proposition}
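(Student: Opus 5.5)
The plan is to read off both equivalences from the connection-matrix relations already derived, Eqs.~\eqref{eq:dictionary-R1-horizon}--\eqref{eq:dictionary-AB-d}. Combined with the smoothness criteria \eqref{eq:dictionary-R1-smooth}--\eqref{eq:dictionary-R2-smooth}, the proof reduces to two short chains:
\begin{equation}
A=0\iff d_2=0\iff R_2\in C^\infty,
\qquad
B=0\iff d_1=0\iff R_1\in C^\infty .
\end{equation}

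\textbf{Step 1: $\Delta_c\neq0$.} Linear independence of $H_{\rm in},H_{\rm out}$ gives $W[H_{\rm in},H_{\rm out}]\neq0$. Independence of $R_1,R_2$ gives $W[R_1,R_2]\neq0$. Equation~\eqref{eq:dictionary-Wronskian} then forces $\Delta_c\neq0$. Consequently the formulas $A=d_2/\Delta_c$ and $B=-d_1/\Delta_c$ in \eqref{eq:dictionary-AB-d} are finite, and $A=0\iff d_2=0$ and $B=0\iff d_1=0$.

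\textbf{Step 2: smoothness criteria \eqref{eq:dictionary-R1-smooth}--\eqref{eq:dictionary-R2-smooth}.} The paper asserts these, and they should be justified carefully; this is the step I expect to be the only real obstacle. The direction $d_i=0\Rightarrow R_i=c_iH_{\rm in}$ is immediate, since $H_{\rm in}$ is analytic at $x=0$.

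For the converse, suppose $R_i\in C^\infty$. Then $d_iH_{\rm out}=R_i-c_iH_{\rm in}$ is $C^\infty$. Write $H_{\rm out}=x^\nu S(x)$ with $S$ analytic and $S(0)=1$, so that $x^\nu=H_{\rm out}/S$ near $0$. If $d_i\neq0$, then $x^\nu$ would be $C^\infty$ at $x=0$.

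For $\nu\notin\mathbb Z$ this is impossible. Differentiate $m>\operatorname{Re}\nu$ times: the result $\nu(\nu-1)\cdots(\nu-m+1)\,x^{\nu-m}$ has a nonzero prefactor, because $\nu$ is not an integer. It is unbounded as $x\to0$, contradicting $C^\infty$.

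The case $\operatorname{Re}\nu<0$ is already covered by $m=0$ or $1$. Here I would note that the word ``generically'' in \eqref{eq:dictionary-Hout-nonsmooth} is in fact unconditional once $\nu\notin\mathbb Z$. I would also note that the branch choice of $x^\nu$ along the continuation path does not matter, since only modulus blow-up is used.

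Combining Steps 1 and 2 gives the two boxed equivalences \eqref{eq:dictionary-A-R2} and \eqref{eq:dictionary-B-R1}. Finally, I would remark that rescaling $H_{\rm in}$ by a nonvanishing factor $f$ leaves the zero sets of $A$ and $B$ unchanged, consistent with the normalization discussion in Section~\ref{subsec:setup-global-assumptions}.
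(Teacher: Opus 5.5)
Your proposal is correct and matches the paper's proof: the chain $A=0\iff d_2=0\iff R_2\in C^\infty$ (and likewise $B=0\iff d_1=0\iff R_1\in C^\infty$), using $A=d_2/\Delta_c$, $B=-d_1/\Delta_c$ and $\Delta_c\neq0$ from the Wronskian identity. You also supply an argument the paper only asserts: the nonvanishing prefactor $\nu(\nu-1)\cdots(\nu-m+1)$ times the unbounded $x^{\nu-m}$ shows $x^\nu$ is never $C^\infty$ at $x=0$ when $\nu\notin\mathbb Z$, which justifies the smoothness criteria and shows the word ``generically'' there can be dropped.
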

	
	\begin{proof}
		Equation~\eqref{eq:dictionary-AB-d} gives
		\begin{equation}
			A=0
			\quad\Longleftrightarrow\quad
			d_2=0,
		\end{equation}
		and Eq.~\eqref{eq:dictionary-R2-smooth} shows that
		$d_2=0$ is equivalent to horizon smoothness of $R_2$.
		The second equivalence follows in the same way.
	\end{proof}
	
	The crossed appearance of source and response in Eqs.~\eqref{eq:dictionary-A-R2} and \eqref{eq:dictionary-B-R1} has a simple origin.
	Although $A$ is the coefficient of $R_1$ in the boundary expansion of $H_{\rm in}$, Eq.~\eqref{eq:dictionary-Hin-boundary} shows that its value is determined by $d_2$, the nonsmooth horizon component of $R_2$.
	The analogous crossed relation holds for $B$.
	
	Away from resonance, if $H_{\rm in}$ is rescaled as
	\begin{equation}
		H_{\rm in}
		\longrightarrow
		f(\nu,\lambda)H_{\rm in},
		\qquad
		f(\nu,\lambda)\neq0
	\end{equation}
	then
	\begin{equation}
		(A,B)
		\longrightarrow
		f(\nu,\lambda)(A,B).
	\end{equation}
	Therefore, if $f$ is finite and nonzero in the parameter neighborhood of interest, the zero conditions in Eqs.~\eqref{eq:dictionary-A-R2} and \eqref{eq:dictionary-B-R1} are unchanged.

	\subsection{Parameter-dependent recurrence near integer resonance}
	\label{subsec:dictionary-parameter-recurrence}
	
	We now consider a neighborhood of an integer resonance
	\begin{equation}
		\nu=N,
		\qquad
		N\in\mathbb N
		\label{eq:dictionary-resonance}
	\end{equation}
	.
	We take the parameter neighborhood sufficiently small that it contains no other integer resonance.
	
	We abbreviate the parameters other than the frequency by
	\begin{equation}
		\lambda=(k,\Delta,\mathbf g,\ldots)
	\end{equation}
	and normalize the nonresonant exponent-zero ingoing solution as
	\begin{equation}
		H_{\rm in}(x;\nu,\lambda)
		=
		\sum_{n=0}^{\infty}
		h_n(\nu,\lambda)x^n,
		\qquad
		h_0=1
		\label{eq:dictionary-Hin-series}
	\end{equation}
	.
	
	Keeping the recurrence of Section~\ref{subsec:fuchs-standard} in parameter-dependent form near resonance, we write
	\begin{equation}
		D_n(\nu)h_n
		+
		\sum_{m=0}^{n-1}
		K_{n,m}(\nu,\lambda)h_m
		=
		0,
		\label{eq:dictionary-general-recurrence}
	\end{equation}
	where
	\begin{equation}
		D_n(\nu)
		=
		n(n-\nu)
		\label{eq:dictionary-Dn}
	\end{equation}
	and assume that $K_{n,m}$ is analytic in the neighborhood of interest.
	In the standard Fuchs form,
	\begin{equation}
		K_{n,m}
		=
		mp_{n-m}+q_{n-m}.
	\end{equation}
	
	For $1\leq n<N$,
	\begin{equation}
		D_n(N)
		=
		n(n-N)
		\neq0.
	\end{equation}
	so solving the recurrence successively starting from $h_0=1$ shows that
	\begin{equation}
		h_1,\ldots,h_{N-1}
	\end{equation}
	are analytic near $\nu=N$.
	
	After substituting these lower coefficients into the recurrence at $n=N$, define the remaining quantity by
	\begin{equation}
		\Omega_N(\nu,\lambda)
		:=
		\sum_{m=0}^{N-1}
		K_{N,m}(\nu,\lambda)h_m(\nu,\lambda)
		\label{eq:dictionary-Omega-parameter}
	\end{equation}
	.
	At $\nu=N$, this definition agrees with the $\Omega_N$ of Section~\ref{sec:fuchsian}.
	
	The $n=N$ recurrence therefore becomes
	\begin{equation}
		N(N-\nu)h_N
		+
		\Omega_N(\nu,\lambda)
		=
		0
		\label{eq:dictionary-N-recurrence}
	\end{equation}
	and for $\nu\neq N$ we obtain
	\begin{equation}
		\boxed{
			h_N(\nu,\lambda)
			=
			\frac{\Omega_N(\nu,\lambda)}
			{N(\nu-N)}
		}
		\label{eq:dictionary-hN}
	\end{equation}
	.
	
	Hence, if
	\begin{equation}
		\Omega_N(N,\lambda)\neq0
		\label{eq:dictionary-generic-resonance}
	\end{equation}
	at a generic integer resonance, $h_N$ has a genuine simple pole at $\nu=N$.
	By contrast, if
	\begin{equation}
		\Omega_N(N,\lambda_*)=0
		\label{eq:dictionary-compatible-resonance}
	\end{equation}
	then the numerator and denominator vanish simultaneously.
	 
\subsection{Higher coefficients and the simple parameter pole}
\label{subsec:dictionary-higher-coefficients}

For $n>N$,
\begin{equation}
	D_n(N)
	=
	n(n-N)
	\neq0.
	\label{eq:dictionary-higher-pivot}
\end{equation}
The $n=N+1$ recurrence is therefore
\begin{equation}
	D_{N+1}h_{N+1}
	+
	\sum_{m=0}^{N-1}
	K_{N+1,m}h_m
	+
	K_{N+1,N}h_N
	=
	0
\end{equation}
and hence
\begin{equation}
	h_{N+1}
	=
	-\frac{1}{D_{N+1}}
	\sum_{m=0}^{N-1}
	K_{N+1,m}h_m
	-
	\frac{K_{N+1,N}}{D_{N+1}}h_N.
	\label{eq:dictionary-hNp1-explicit}
\end{equation}
Define
\begin{equation}
	P_{N+1}
	:=
	-\frac{1}{D_{N+1}}
	\sum_{m=0}^{N-1}
	K_{N+1,m}h_m,
	\qquad
	Q_{N+1}
	:=
	-\frac{K_{N+1,N}}{D_{N+1}}
	\label{eq:dictionary-PQ-Np1}
\end{equation}
Then
\begin{equation}
	h_{N+1}
	=
	P_{N+1}
	+
	Q_{N+1}h_N.
	\label{eq:dictionary-hNp1}
\end{equation}

As shown in Section~\ref{subsec:dictionary-parameter-recurrence}, $h_0,\ldots,h_{N-1}$ are analytic near $\nu=N$, and $K_{N+1,m}$ is analytic there as well.
Moreover,
\begin{equation}
	D_{N+1}(N)
	=
	(N+1)\bigl((N+1)-N\bigr)
	=
	N+1
	\neq0
\end{equation}
and hence
$P_{N+1}$ and $Q_{N+1}$ are analytic near $\nu=N$.

Substituting Eq.~\eqref{eq:dictionary-hNp1} into the next recurrence gives
\begin{equation}
	h_{N+2}
	=
	P_{N+2}
	+
	Q_{N+2}h_N
\end{equation}
and repeating the same argument yields inductively
\begin{equation}
	\boxed{
		h_{N+j}
		=
		P_{N+j}(\nu,\lambda)
		+
		Q_{N+j}(\nu,\lambda)h_N,
		\qquad
		j\geq1,
	}
	\label{eq:dictionary-affine-tail}
\end{equation}
where all $P_{N+j}$ and $Q_{N+j}$ are
analytic near $\nu=N$.

Therefore, in coefficients above $h_N$,
no new higher powers such as $h_N^2,h_N^3,\ldots$ are generated.
All possible singular parts are transmitted through the single resonant coefficient $h_N$.

Combined with Eq.~\eqref{eq:dictionary-hN}, this implies that
\begin{equation}
	h_n(\nu,\lambda),
	\qquad
	n\geq N,
\end{equation}
can have at most a simple pole at $\nu=N$.

If a particular $Q_{N+j}(N,\lambda)$ vanishes, the pole residue of that coefficient may vanish as well; the general statement is therefore ``at most a simple pole,'' not necessarily ``a simple pole.''
At a generic resonance with $\Omega_N(N,\lambda)\neq0$, however, $h_N$ itself has a genuine simple pole. 
	
	\subsection{Regularized ingoing family}
	\label{subsec:dictionary-regularization}
	
	To remove this simple pole without introducing an additional zero at resonance, choose an analytic function near $\nu=N$,
	\begin{equation}
		g_N(\nu,\lambda)
		=
		(\nu-N)u_N(\nu,\lambda),
		\qquad
		u_N(N,\lambda)\neq0
		\label{eq:dictionary-regularizer}
	\end{equation}
	so that $g_N$ has exactly one simple zero at $\nu=N$.
	
	Using this factor, define the regularized ingoing family by
	\begin{equation}
		\boxed{
			H_{\rm in}^{\rm reg}
			:=
			g_N H_{\rm in}
		}
		\label{eq:dictionary-Hreg}
	\end{equation}
	 .
	
	Define the coefficients
	\begin{equation}
		\widehat h_n
		:=
		g_Nh_n
		\label{eq:dictionary-hhat}
	\end{equation}
	so that
	\begin{equation}
		H_{\rm in}^{\rm reg}
		=
		\sum_{n=0}^{\infty}
		\widehat h_nx^n.
	\end{equation}
	
	For $n<N$, $h_n$ is finite at $\nu=N$, while $g_N(N,\lambda)=0$. Hence
	\begin{equation}
		\widehat h_0(N,\lambda)
		=
		\cdots
		=
		\widehat h_{N-1}(N,\lambda)
		=
		0.
		\label{eq:dictionary-lower-vanish}
	\end{equation}
	
	At $n=N$, by contrast, Eqs.~\eqref{eq:dictionary-hN} and \eqref{eq:dictionary-regularizer} give
	\begin{align}
		\widehat h_N
		&=
		g_Nh_N
		\nonumber\\
		&=
		(\nu-N)u_N
		\frac{\Omega_N}{N(\nu-N)}
		\nonumber\\
		&=
		\frac{u_N}{N}\Omega_N.
	\end{align}
	Therefore
	\begin{equation}
		\boxed{
			\widehat h_N(N,\lambda)
			=
			\frac{u_N(N,\lambda)}{N}
			\Omega_N(N,\lambda)
		}
		\label{eq:dictionary-hhatN}
	\end{equation}
	 .
	
	Multiplying Eq.~\eqref{eq:dictionary-affine-tail} by $g_N$ also gives
	\begin{equation}
		\widehat h_{N+j}
		=
		g_NP_{N+j}
		+
		Q_{N+j}\widehat h_N.
		\label{eq:dictionary-hhat-tail}
	\end{equation}
	At $\nu=N$, the first term vanishes, so
	\begin{equation}
		\widehat h_{N+j}(N,\lambda)
		=
		Q_{N+j}(N,\lambda)
		\widehat h_N(N,\lambda),
		\qquad
		j\geq1.
		\label{eq:dictionary-hhat-tail-resonance}
	\end{equation}
	
	Thus, at exact resonance, the entire Taylor tail of the regularized ingoing family is determined by the single coefficient $\widehat h_N(N,\lambda)$.

	\subsection{Identity at exact resonance}
	\label{subsec:dictionary-resonance-identity}
	
	At exact resonance, normalize the Frobenius solution associated with the larger indicial root $N$ as
	\begin{equation}
		H_{\rm out}^{(N)}(x;\lambda)
		=
		x^N
		\left(
		1+b_1x+b_2x^2+\cdots
		\right)
		\label{eq:dictionary-HoutN}
	\end{equation}
	Its Taylor coefficients are
	\begin{equation}
		0,\ldots,0,
		\underbrace{1}_{n=N},
		b_1,b_2,\ldots
	\end{equation}
	 .
	
	For $n>N$, all recurrence coefficients $D_n(\nu)$ multiplying $h_n$ are nonzero, and hence
	the higher coefficients are uniquely determined once the $n=N$ coefficient is specified.
	On the other hand, Eqs.~\eqref{eq:dictionary-lower-vanish},
	\eqref{eq:dictionary-hhatN},
	and \eqref{eq:dictionary-hhat-tail-resonance} show that
	$H_{\rm in}^{\rm reg}(N,\lambda)$ also has
	all coefficients below $n=N$ equal to zero, and
	its entire higher-order tail is determined by the single coefficient at $n=N$.
	
	Therefore, uniqueness of the recurrence gives the following identity.
	
	\begin{theorem}[Identity for the regularized ingoing family at exact resonance]
		\label{thm:dictionary-resonance-identity}
		Assume the local parameter conditions of Section~\ref{subsec:setup-global-assumptions}.
		Then, at exact integer resonance $\nu=N$,
		\begin{equation}
			\boxed{
				H_{\rm in}^{\rm reg}(x;N,\lambda)
				=
				\frac{u_N(N,\lambda)}{N}
				\Omega_N(N,\lambda)
				H_{\rm out}^{(N)}(x;\lambda)
			}
			\label{eq:dictionary-resonance-identity}
		\end{equation}
		holds.
	\end{theorem}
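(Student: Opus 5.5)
The plan is to prove the identity first at the level of Taylor coefficients, using the uniqueness of the recurrence above order $N$. Then we upgrade the formal statement to an identity between actual analytic solutions, using analytic dependence of Frobenius series on parameters.

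\textbf{Step 1: coefficient-wise limit.} By \eqref{eq:dictionary-lower-vanish}, \eqref{eq:dictionary-hhatN} and \eqref{eq:dictionary-hhat-tail-resonance}, the limiting coefficients at $\nu=N$ are
\[
\widehat h_n(N,\lambda)=0 \quad (n<N),
\qquad
\widehat h_N(N,\lambda)=\frac{u_N}{N}\Omega_N,
\qquad
\widehat h_{N+j}(N,\lambda)=Q_{N+j}(N,\lambda)\,\widehat h_N(N,\lambda).
\]

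\textbf{Step 2: identify $Q_{N+j}(N,\lambda)$ with $b_j$.} I would show that $Q_{N+j}(N,\lambda)=b_j$, the coefficients of $H_{\rm out}^{(N)}$. Setting $\nu=N$ in the recurrence \eqref{eq:dictionary-general-recurrence}, $Q_{N+j}$ is defined by
\[
D_{N+j}(N)\,Q_{N+j}+K_{N+j,N}+\sum_{m=1}^{j-1}K_{N+j,N+m}\,Q_{N+m}=0 .
\]
This is obtained by collecting the $h_N$-linear parts in the affine induction of Section~\ref{subsec:dictionary-higher-coefficients}. It is exactly the recurrence satisfied by the coefficients of $x^N(1+b_1x+\cdots)$, with $b_0=1$ and pivots $D_{N+j}(N)=j(N+j)\neq0$. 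Uniqueness then gives $Q_{N+j}=b_j$. I will note in passing that the terms $g_NP_{N+j}$ vanish at $\nu=N$ because $P_{N+j}$ is analytic there. Hence the formal series of $H_{\rm in}^{\rm reg}(x;N,\lambda)$ equals $\frac{u_N}{N}\Omega_N$ times the formal series of $H_{\rm out}^{(N)}$.

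\textbf{Step 3: upgrade to an identity of analytic functions (main obstacle).} Coefficient-wise limits of a family of convergent series do not by themselves give convergence of the limit series to the limit function. I would use that $g_N H_{\rm in}$ is a solution whose initial data at a fixed ordinary point $x_0$, with $0<|x_0|<\rho$, depend analytically on $\nu$. To establish this, I would prove uniform majorant bounds $|\widehat h_n(\nu,\lambda)|\le C r^{-n}$ for $\nu$ in a small closed disc around $N$. On $|\nu-N|=\epsilon$ the function $H_{\rm in}$ is a genuine convergent Frobenius series, and the standard Cauchy-majorant argument for the recurrence (with $|D_n(\nu)|\ge c\,n^2$ for $n>N$) gives such bounds. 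The maximum principle then extends them into the disc, since each $\widehat h_n$ is analytic there after removing the simple pole.

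With these bounds, the series $\sum_n \widehat h_n(\nu,\lambda)x^n$ converges uniformly on $|x|\le r'<r$ and $|\nu-N|\le\epsilon$. The sum is therefore jointly analytic, and by continuity of $L$ in $\nu$ it solves the resonant equation at $\nu=N$. Its Taylor coefficients are those found in Steps 1 and 2, so the two sides of \eqref{eq:dictionary-resonance-identity} agree on a disc. Analytic continuation along $\Gamma$, under (G3), then extends the identity globally. This also covers the case $\Omega_N(N,\lambda)=0$, where both sides vanish identically.
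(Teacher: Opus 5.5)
Your proposal is correct and has the same overall structure as the paper. Steps 1--2 are the paper's proof of the theorem. You make explicit, via the $h_N$-linear part of the affine tail, that $Q_{N+j}(N,\lambda)=b_j$; the paper compresses this into ``uniqueness of the higher coefficients.'' Step 3 plays the role of Appendix~\ref{app:frobenius-convergence}.

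The one genuine difference is how the uniform majorant $|\widehat h_n|\le Cr^{-n}$ is obtained.

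\textbf{The paper's route.} It first shows, by induction on the regularized recurrence $D_n(\nu)\widehat h_n+\sum_{m<n}K_{n,m}\widehat h_m=0$, that every $\widehat h_n$ is analytic across resonance. It then runs the Cauchy-majorant estimate directly on that same regularized recurrence, over a compact parameter set that contains the resonant point. This works for three reasons: the only vanishing pivot $D_N$ has already been absorbed into $\widehat h_N=u_N\Omega_N/N$; the remaining pivots satisfy $|D_n|\ge n^2/2$ for large $n$; and the finitely many low-order coefficients are bounded by analyticity.

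\textbf{Your route.} You run the standard nonresonant estimate only on the circle $|\nu-N|=\epsilon$, where every pivot is nonzero and $g_N$ is bounded. You then carry the bound into the disc by applying the maximum modulus principle to each analytic $\widehat h_n$.

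\textbf{Trade-offs.} Your argument needs no estimate at the resonant point itself and reuses only the textbook nonresonant Frobenius bound. In exchange, it relies on the separately established analyticity of each $\widehat h_n$ (from the affine tail) and adds a maximum-principle step, with uniformity in $\lambda$ coming from compactness on the circle. The paper's argument is more self-contained, since a single recurrence delivers both analyticity and the bound.

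Your framing of what Step 3 must achieve is also sharper than the paper's remark. Convergence of the limiting series at $\nu=N$ already follows from Step 2, because it is a multiple of the convergent Frobenius series $H_{\rm out}^{(N)}$. The real issue is identifying that series with the limit of the actual functions $g_NH_{\rm in}$, and this is what the uniform bound secures.
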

	
	\begin{proof}
		All Taylor coefficients of $H_{\rm in}^{\rm reg}(N,\lambda)$ below $n=N$ vanish, while the coefficient at $n=N$ is
		\begin{equation}
			\frac{u_N(N,\lambda)}{N}
			\Omega_N(N,\lambda)
		\end{equation}
		 .
		The solution $H_{\rm out}^{(N)}$ satisfies the same differential equation at resonance and
		is normalized to have coefficient $1$ at $n=N$.
		For $n>N$, the recurrence is nondegenerate, so uniqueness of the higher coefficients
		implies Eq.~\eqref{eq:dictionary-resonance-identity}.
	\end{proof} 
The recurrence argument above shows that
the two sides of Eq.~\eqref{eq:dictionary-resonance-identity}
have identical Taylor coefficients at $x=0$.
To interpret this as an identity between actual analytic solutions,
one must verify that the corresponding Frobenius series converge in a common neighborhood of the horizon.
Appendix~\ref{app:frobenius-convergence}
uses a majorant argument for the parameter-dependent recurrence to establish
parameter analyticity of the regularized Frobenius coefficients and
local uniform convergence in a common $x$-disk.
Therefore, Eq.~\eqref{eq:dictionary-resonance-identity} is
an identity not only of formal power series but also of actual local analytic solutions.

	\subsection{Boundary connection coefficients at resonance}
	\label{subsec:dictionary-boundary-resonance}
	
	We now continue Theorem~\ref{thm:dictionary-resonance-identity} to the boundary.
	We use the continuation path fixed in Section~\ref{subsec:setup-global-assumptions} and assume that, near resonance, the boundary basis $R_1,R_2$ can be chosen to depend analytically on the parameters.
	
	Define the boundary decomposition of the regularized ingoing family by
	\begin{equation}
		H_{\rm in}^{\rm reg}
		=
		\widehat A(\nu,\lambda)R_1
		+
		\widehat B(\nu,\lambda)R_2
		\label{eq:dictionary-Hreg-boundary}
	\end{equation}
	.
	As specified in Section~\ref{subsec:setup-global-assumptions},
	$\widehat A$ and $\widehat B$ are
	regularized connection coefficient functions distinct from the ordinary nonresonant coefficients $A,B$.
	
	Write the larger-root solution at resonance in the same boundary basis as
	\begin{equation}
		H_{\rm out}^{(N)}
		=
		A_{\rm out}R_1
		+
		B_{\rm out}R_2
		\label{eq:dictionary-HoutN-boundary}
	\end{equation}
	Since $H_{\rm out}^{(N)}$ is a nonzero solution and $R_1,R_2$ are independent,
	\begin{equation}
		(A_{\rm out},B_{\rm out})
		\neq
		(0,0).
		\label{eq:dictionary-AoutBout}
	\end{equation}
	
	Substituting Theorem~\ref{thm:dictionary-resonance-identity} into Eq.~\eqref{eq:dictionary-HoutN-boundary} and comparing the coefficients of $R_1$ and $R_2$ gives
	\begin{align}
		\widehat A(N,\lambda)
		&=
		\frac{u_N(N,\lambda)}{N}
		\Omega_N(N,\lambda)A_{\rm out},
		\label{eq:dictionary-Ahat}
		\\
		\widehat B(N,\lambda)
		&=
		\frac{u_N(N,\lambda)}{N}
		\Omega_N(N,\lambda)B_{\rm out}.
		\label{eq:dictionary-Bhat}
	\end{align}
	
	Therefore, we obtain the following global statement.
	
	\begin{theorem}[Horizon-to-boundary connection theorem at resonance]
		\label{thm:dictionary-global}
		Assume the standing assumptions of Section~\ref{subsec:setup-global-assumptions}, and suppose that near exact resonance $\nu=N$ the boundary basis $R_1,R_2$ can be chosen to depend analytically on the parameters, with
		\begin{equation}
			W[R_1,R_2]\neq0
		\end{equation}
		Then
		\begin{equation}
			\boxed{
				\Omega_N(N,\lambda)=0
				\quad\Longleftrightarrow\quad
				\widehat A(N,\lambda)
				=
				\widehat B(N,\lambda)
				=
				0
			}
			\label{eq:dictionary-Omega-ABhat}
		\end{equation}
		holds.
	\end{theorem}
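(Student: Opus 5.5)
The plan is to read the equivalence directly off Eqs.~\eqref{eq:dictionary-Ahat}--\eqref{eq:dictionary-Bhat}, so the real work is to make sure those two formulas are legitimate, i.e.\ that continuing Theorem~\ref{thm:dictionary-resonance-identity} to the boundary is well defined and that $\widehat A,\widehat B$ are well-defined analytic functions with well-defined values at $\nu=N$.

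First, I would establish the analytic dependence. By the construction in Section~\ref{subsec:dictionary-regularization} and the majorant argument of the appendix, the local germ $H_{\rm in}^{\rm reg}(x;\nu,\lambda)$ is analytic jointly in $(x,\nu,\lambda)$ in a common $x$-disk. This holds including at $\nu=N$, since the simple pole carried by $h_N$ and the tail $h_{N+j}$ is cancelled by the factor $g_N$. Continuation along the fixed path $\Gamma$ of (G3) preserves analytic parameter dependence, because the equation has analytic coefficients away from the singular points. Next, expand in the analytic boundary basis of (G5). The condition $W[R_1,R_2]\neq0$ lets us write the coefficients by Wronskian formulas:
\begin{equation}
\widehat A=\frac{W[H_{\rm in}^{\rm reg},R_2]}{W[R_1,R_2]},\qquad
\widehat B=-\frac{W[H_{\rm in}^{\rm reg},R_1]}{W[R_1,R_2]},
\end{equation}
each evaluated at an ordinary point $r_0$ on $\Gamma$. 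These formulas show that $\widehat A$ and $\widehat B$ are analytic near $(N,\lambda)$ and that their values at $\nu=N$ are exactly the coefficients of $H_{\rm in}^{\rm reg}(x;N,\lambda)$. Applying the same formulas to $H_{\rm out}^{(N)}$ and using Theorem~\ref{thm:dictionary-resonance-identity} together with linearity of the Wronskian then yields Eqs.~\eqref{eq:dictionary-Ahat}--\eqref{eq:dictionary-Bhat}.

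With those formulas in hand, the two directions are short. If $\Omega_N(N,\lambda)=0$, both right-hand sides vanish, so $\widehat A=\widehat B=0$. Conversely, suppose $\widehat A=\widehat B=0$ but $\Omega_N\neq0$. Since $u_N(N,\lambda)\neq0$ and $N\neq0$, the formulas would force $A_{\rm out}=B_{\rm out}=0$, and hence $H_{\rm out}^{(N)}=0$. This contradicts $H_{\rm out}^{(N)}=x^N(1+O(x))\not\equiv0$ combined with the independence of $R_1,R_2$, which is Eq.~\eqref{eq:dictionary-AoutBout}. An equivalent way to see this direction: $\widehat A=\widehat B=0$ means $H_{\rm in}^{\rm reg}(\cdot;N,\lambda)\equiv0$, so its $x^N$ Taylor coefficient $\frac{u_N}{N}\Omega_N$ vanishes.

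The main obstacle I expect is the first step: justifying that evaluating $\widehat A,\widehat B$ at $\nu=N$ commutes with continuation to the boundary. If the boundary basis itself degenerated at resonance, a pole in $R_{1,2}$ could compensate a zero. Assumption (G5) is what excludes this. I would make this precise with the Wronskian representation above, since it needs only analyticity of the solutions at the ordinary point $r_0$, not any specific asymptotic form at $\mathcal I$.
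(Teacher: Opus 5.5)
Your proposal is correct and follows the paper's proof. Like the paper, you continue the identity $H_{\rm in}^{\rm reg}(N,\lambda)=\frac{u_N}{N}\Omega_N H_{\rm out}^{(N)}$ to the boundary to get $(\widehat A,\widehat B)=\frac{u_N}{N}\Omega_N(A_{\rm out},B_{\rm out})$, then use $u_N(N,\lambda)\neq0$ and $(A_{\rm out},B_{\rm out})\neq(0,0)$. Your Wronskian representation $\widehat A=W[H_{\rm in}^{\rm reg},R_2]/W[R_1,R_2]$ is simply a more explicit justification of the continuation step, which the paper takes directly from assumption (G5) and the appendix.
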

	
	\begin{proof}
		If $\Omega_N=0$, Eqs.~\eqref{eq:dictionary-Ahat} and
		\eqref{eq:dictionary-Bhat} give
		$\widehat A=\widehat B=0$ .
		
		Conversely, suppose that
		\begin{equation}
			\widehat A=\widehat B=0.
		\end{equation}
		Since $u_N(N,\lambda)\neq0$, Eqs.~\eqref{eq:dictionary-Ahat} and
		\eqref{eq:dictionary-Bhat} imply
		\begin{equation}
			\Omega_N A_{\rm out}=0,
			\qquad
			\Omega_N B_{\rm out}=0.
		\end{equation}
		Since $(A_{\rm out},B_{\rm out})\neq(0,0)$, it follows that
		\begin{equation}
			\Omega_N=0.
		\end{equation}
	\end{proof}
	
	Combining this theorem with Theorem~\ref{thm:fuchs-local-equivalence} gives
	\begin{equation}
		\boxed{
			\begin{gathered}
				\det M_N=0
				\quad\Longleftrightarrow\quad
				\Omega_N=0
				\quad\Longleftrightarrow\quad
				\ell_N=0
				\quad\Longleftrightarrow\quad
				\dim\mathcal S_{\rm smooth}=2
				\\[1mm]
				\Longleftrightarrow\quad
				\widehat A=\widehat B=0
			\end{gathered}
		}
		\label{eq:dictionary-full-chain}
	\end{equation}
	.
	
	Moreover, if $R_1,R_2$ can be continued to the horizon and $W[R_1,R_2]\neq0$, then
	\begin{equation}
		\boxed{
			\Omega_N=0
			\quad\Longleftrightarrow\quad
			R_1,R_2\in C^\infty
			\text{ at the horizon}
		}
		\label{eq:dictionary-simultaneous-smoothness}
	\end{equation}
	 .
	
	Indeed, if $\Omega_N=0$,
	Theorem~\ref{thm:fuchs-local-equivalence} implies that
	the local smooth solution space is two-dimensional, and hence every local solution is
	a linear combination of two smooth Frobenius solutions.
	Therefore both $R_1$ and $R_2$ are horizon-smooth.
	
	Conversely, if both $R_1$ and $R_2$ are horizon-smooth and
	$W[R_1,R_2]\neq0$, then there exist two independent smooth local solutions.
	Therefore
	\begin{equation}
		\dim\mathcal S_{\rm smooth}=2
	\end{equation}
	and Theorem~\ref{thm:fuchs-local-equivalence} again implies that
	$\Omega_N=0$ .
	
	Therefore, at exact resonance,
	\begin{equation}
		\boxed{
			R_1,R_2\in C^\infty
			\quad\Longleftrightarrow\quad
			\widehat A=\widehat B=0
		}
		\label{eq:dictionary-smooth-zero}
	\end{equation}
	is the resulting global dictionary.
	The connection between the two statements is provided by
	Theorem~\ref{thm:fuchs-local-equivalence} and
	Theorem~\ref{thm:dictionary-global}.

	\subsection{Two distinct statements at exact resonance}
	\label{subsec:dictionary-distinction}
	
	Finally, to avoid conflating the nonresonant and resonant results, we make two distinctions explicit.
	
	First, Proposition~\ref{prop:dictionary-nonresonant},
	\begin{equation}
		A=0
		\Longleftrightarrow
		R_2\in C^\infty,
		\qquad
		B=0
		\Longleftrightarrow
		R_1\in C^\infty
		\label{eq:dictionary-nonresonant-repeat}
	\end{equation}
	holds in the region
	\begin{equation}
		\nu\notin\mathbb Z.
	\end{equation}
	Its proof uses the fact that $H_{\rm out}\notin C^\infty$.
	
	At exact resonance $\nu=N$, however,
	\begin{equation}
		H_{\rm out}^{(N)}
		=
		x^N(1+O(x))
	\end{equation}
	is itself analytic and horizon-smooth.
	Therefore Eq.~\eqref{eq:dictionary-nonresonant-repeat} cannot be applied literally at exact resonance to infer $\widehat A=\widehat B=0$.
	
	The simultaneous zero at exact resonance,
	\begin{equation}
		\widehat A=\widehat B=0
	\end{equation}
	instead follows directly from Theorem~\ref{thm:dictionary-resonance-identity},
	\begin{equation}
		H_{\rm in}^{\rm reg}(N,\lambda)
		=
		\frac{u_N(N,\lambda)}{N}
		\Omega_N(N,\lambda)
		H_{\rm out}^{(N)}
	\end{equation}
	 .
	
	Second,
	\begin{equation}
		\widehat A(P_*)=\widehat B(P_*)=0
		\label{eq:dictionary-hat-zero}
	\end{equation}
	does not mean that a single nonzero resonant solution has both boundary coefficients equal to zero.
	
	If $R_1,R_2$ are independent, a nonzero solution of the ODE at resonance must instead be written as
	\begin{equation}
		R_{\rm res}
		=
		A_*R_1+B_*R_2,
		\qquad
		(A_*,B_*)\neq(0,0)
		\label{eq:dictionary-actual-resonant}
	\end{equation}
	 .
	
	On the other hand, when $\Omega_N(P_*)=0$, Theorem~\ref{thm:dictionary-resonance-identity} gives
	\begin{equation}
		H_{\rm in}^{\rm reg}(P_*)
		=
		0.
		\label{eq:dictionary-Hreg-zero}
	\end{equation}
	Therefore
	\begin{equation}
		\widehat A(P_*)=\widehat B(P_*)=0
	\end{equation}
	expresses the fact that the parameter-dependent regularized ingoing family itself
	has the zero solution as its value at $P_*$, as expressed in the boundary basis.
	This zero coefficient pair does not represent the projective class $[A:B]$ defined by the boundary coefficient pair of
	a nonzero resonant solution.
	
	The resonant differential equation itself
	still has two independent smooth solutions when $\Omega_N=0$.
	Which nonzero linear combination of those solutions
	is selected by a particular parameter-space limit of the nonresonant ingoing solution
	is a separate question.
	This question is addressed in Section~\ref{sec:resonant-limit}.

	\subsection{Independence of the factor removing the simple parameter pole}
	\label{subsec:dictionary-reg-normalization}
	
	The analytic factor $g_N$ chosen in Eq.~\eqref{eq:dictionary-regularizer} to remove the simple parameter pole is not unique.
	Consider two such factors with the same simple zero,
	\begin{equation}
		g_N
		=
		(\nu-N)u_N,
		\qquad
		\widetilde g_N
		=
		(\nu-N)\widetilde u_N
	\end{equation}
	with
	\begin{equation}
		u_N(N,\lambda)\neq0,
		\qquad
		\widetilde u_N(N,\lambda)\neq0.
	\end{equation}
	Then
	\begin{equation}
		f(\nu,\lambda)
		:=
		\frac{\widetilde g_N}{g_N}
		=
		\frac{\widetilde u_N}{u_N}
		\label{eq:dictionary-f-reg}
	\end{equation}
	is analytic near resonance and satisfies
	\begin{equation}
		f(N,\lambda)\neq0.
	\end{equation}
	
	Therefore
	\begin{equation}
		\widetilde H_{\rm in}^{\rm reg}
		=
		fH_{\rm in}^{\rm reg}
	\end{equation}
	and the corresponding boundary coefficients obey
	\begin{equation}
		(\widetilde{\widehat A},
		\widetilde{\widehat B})
		=
		f(\widehat A,\widehat B).
	\end{equation}
	Since $f(N,\lambda)\neq0$,
	\begin{equation}
		\widehat A=\widehat B=0
		\quad\Longleftrightarrow\quad
		\widetilde{\widehat A}
		=
		\widetilde{\widehat B}
		=
		0.
		\label{eq:dictionary-reg-invariance}
	\end{equation} 
	Therefore, the simultaneous-zero condition is identical for
	all choices that remove the generic simple parameter pole with a single zero.
	The relevant requirement is that the pole-removing factor have a simple zero, and not a higher-order zero, at $\nu=N$.
	For example, multiplication by $(\nu-N)^2$ introduces an additional zero beyond what is needed to remove the $1/(\nu-N)$ pole. Even at a generic resonance with $\Omega_N(N,\lambda)\neq0$, such a choice would force the regularized family to vanish at resonance and could therefore create a spurious simultaneous zero unrelated to $\Omega_N=0$. Such choices are excluded.
	
	The results of this section may be summarized as follows.
	Away from resonance,
	\begin{equation}
		\boxed{
			A=0
			\Longleftrightarrow
			R_2\in C^\infty,
			\qquad
			B=0
			\Longleftrightarrow
			R_1\in C^\infty.
		}
		\label{eq:dictionary-summary-nonres}
	\end{equation}
	At integer resonance, the same local solvability function controls the global boundary data through
	\begin{equation}
		\boxed{
			H_{\rm in}^{\rm reg}(N,\lambda)
			=
			\frac{u_N(N,\lambda)}{N}
			\Omega_N(N,\lambda)
			H_{\rm out}^{(N)}
		}
		\label{eq:dictionary-summary-identity}
	\end{equation}
	and
	\begin{equation}
		\boxed{
			\Omega_N=0
			\quad\Longleftrightarrow\quad
			\widehat A=\widehat B=0.
		}
		\label{eq:dictionary-summary-zero}
	\end{equation}
	
	Therefore local Fuchsian resonance and the simultaneous source--response zero at the boundary
	are not independent phenomena;
	when the parameter-dependent ingoing family is followed to resonance,
	they are linked by the same $\Omega_N$.

	 \section{Resonant limits of the ingoing solution}
	 \label{sec:resonant-limit}
	 
Section~\ref{sec:dictionary} showed that, at integer resonance, the regularized ingoing family satisfies
\begin{equation}
    H_{\rm in}^{\rm reg}(N,\lambda)
    =
    \frac{u_N(N,\lambda)}{N}
    \Omega_N(N,\lambda)
    H_{\rm out}^{(N)}
\end{equation}
and that
\begin{equation}
    \Omega_N=0
    \quad\Longleftrightarrow\quad
    \widehat A=\widehat B=0.
\end{equation}
	 
At $\Omega_N=0$, however, the value of $H_{\rm in}^{\rm reg}$ at the resonant point is the zero solution. This does not mean that the nonzero solutions of the differential equation at resonance disappear. By Theorem~\ref{thm:fuchs-local-equivalence}, two independent smooth horizon solutions exist at this point.
	 
A distinct question remains: if the nonresonant ingoing solution normalized by $h_0=1$ is taken to exact resonance along a specified curve in parameter space, which nonzero element of the two-dimensional smooth solution space is selected?
	 
We analyze this question directly from the recurrence at $n=N$. The key point is that the ratio of the first-order variation of the solvability function $\Omega_N$ to that of the recurrence coefficient $D_N$, which multiplies the $N$th Frobenius coefficient, determines the resonant Frobenius coefficient selected in the limit.

\subsection{Free Frobenius coefficient at exact resonance}
\label{subsec:resonant-free-coordinate}
	 
Let $P$ denote a point in parameter space, and let the resonant point be
\begin{equation}
    P_*=(\omega_N,\lambda_*).
\end{equation}
Thus
\begin{equation}
    \nu(P_*)=N,
    \qquad
    \Omega_N(P_*)=0.
    \label{eq:resonant-point}
\end{equation}
	 
Using the notation of Section~\ref{subsec:dictionary-parameter-recurrence}, the recurrence at $n=N$ takes the form, near resonance,
\begin{equation}
    D_N(P)h_N(P)+\Omega_N(P)=0,
    \label{eq:resonant-N-recurrence}
\end{equation}
where
\begin{equation}
    D_N(P):=N\bigl(N-\nu(P)\bigr).
    \label{eq:resonant-DN}
\end{equation}
	 
At exact resonance,
\begin{equation}
    D_N(P_*)=0,
    \qquad
    \Omega_N(P_*)=0,
\end{equation}
so Eq.~\eqref{eq:resonant-N-recurrence} reduces to
\begin{equation}
    0\cdot h_N+0=0.
\end{equation}
Hence $h_N$ is no longer determined by the recurrence.
	 
For $n<N$,
\begin{equation}
    D_n(P_*)=n(n-N)\neq0,
\end{equation}
so once $h_0=1$ is fixed, the coefficients
\begin{equation}
    h_1^*,\ldots,h_{N-1}^*
\end{equation}
are uniquely determined. Here
\begin{equation}
    h_n^*:=h_n(P_*)=h_n(N,\lambda_*),
    \qquad 1\leq n<N,
\end{equation}
denotes the lower Frobenius coefficients at the resonant point $P_*$.
	 
At exact resonance, define the smaller-root solution for which the coefficient of $x^N$ is chosen to vanish by
\begin{equation}
    H_0^{(N)}(x)
    =
    1+h_1^*x+\cdots+h_{N-1}^*x^{N-1}
    +0\cdot x^N+O(x^{N+1}).
    \label{eq:resonant-H0}
\end{equation}
	 
Because $\Omega_N(P_*)=0$, Theorem~\ref{thm:fuchs-local-equivalence} implies that this solution is analytic and contains no logarithmic term.
	 
The larger-root Frobenius solution is normalized, as in Section~\ref{subsec:dictionary-resonance-identity}, by
\begin{equation}
    H_{\rm out}^{(N)}(x)
    =
    x^N\bigl(1+b_1^*x+b_2^*x^2+\cdots\bigr).
    \label{eq:resonant-Hout}
\end{equation}
	 
Since the two solutions begin with $x^0$ and $x^N$, respectively, they are linearly independent. Therefore every resonant solution satisfying the normalization $h_0=1$ can be written uniquely as
\begin{equation}
    \boxed{
        R_{\rm res}(x)
        =
        H_0^{(N)}(x)
        +
        \alpha_N H_{\rm out}^{(N)}(x)
    }
    \label{eq:resonant-general-solution}
\end{equation}
	 
Here $\alpha_N$ is the Frobenius coefficient at order $n=N$ that remains free at exact resonance. Once the normalizations in Eqs.~\eqref{eq:resonant-H0} and \eqref{eq:resonant-Hout} are fixed, the meaning of $\alpha_N$ is fixed as well.
	 
At exact resonance, both $H_0^{(N)}$ and $H_{\rm out}^{(N)}$ are horizon-smooth. Horizon smoothness alone therefore does not determine $\alpha_N$. The value actually selected is determined by how the nonresonant ingoing family approaches $P_*$.
	 
The solution $H_{\rm out}^{(N)}$ is also naturally related to the resonant limit of the second nonresonant Frobenius solution. Away from resonance,
\begin{equation}
    H_{\rm out}(x;P)
    =
    x^{\nu(P)}\bigl(1+O(x)\bigr),
\end{equation}
and as $\nu(P)\to N$ its leading behavior tends to
\begin{equation}
    x^{\nu(P)}\longrightarrow x^N.
\end{equation}
Thus $\alpha_N$ may be interpreted as the coefficient measuring the component that comes from the second nonresonant Frobenius solution in the resonant limit.

\subsection{General one-parameter continuation}
\label{subsec:resonant-general-curve}
	 
Consider a $C^1$ curve in parameter space approaching $P_*$,
\begin{equation}
    \gamma(s),
    \qquad
    \gamma(0)=P_*.
    \label{eq:resonant-gamma}
\end{equation}
Assume that for sufficiently small $s\neq0$, the punctured neighborhood lies in the nonresonant region.
	 
Normalize the nonresonant ingoing solution by $h_0=1$. Equation~\eqref{eq:resonant-N-recurrence} gives
\begin{equation}
    h_N\bigl(\gamma(s)\bigr)
    =
    -\frac{\Omega_N\bigl(\gamma(s)\bigr)}{D_N\bigl(\gamma(s)\bigr)}.
    \label{eq:resonant-hN-ratio}
\end{equation}
	 
At the resonant point, both the numerator and the denominator vanish. To determine the first-order resonant limit, assume that the curve crosses the resonance surface nondegenerately to first order:
\begin{equation}
    \left.
    \frac{d}{ds}D_N\bigl(\gamma(s)\bigr)
    \right|_{s=0}\neq0.
    \label{eq:resonant-transverse-D}
\end{equation}
	 
Since $D_N\circ\gamma$ and $\Omega_N\circ\gamma$ are $C^1$,
\begin{align}
    D_N\bigl(\gamma(s)\bigr)
    &=
    s\left.
    \frac{d}{ds}D_N\bigl(\gamma(s)\bigr)
    \right|_{s=0}+o(s),
    \label{eq:resonant-D-expand}
    \\
    \Omega_N\bigl(\gamma(s)\bigr)
    &=
    s\left.
    \frac{d}{ds}\Omega_N\bigl(\gamma(s)\bigr)
    \right|_{s=0}+o(s).
    \label{eq:resonant-Omega-expand}
\end{align}
	 
Therefore the limit in Eq.~\eqref{eq:resonant-hN-ratio} exists and is
\begin{equation}
    \boxed{
        \alpha_N^{(\gamma)}
        :=
        \lim_{s\to0}h_N\bigl(\gamma(s)\bigr)
        =
        -
        \frac{
            \left.\dfrac{d}{ds}\Omega_N\bigl(\gamma(s)\bigr)\right|_{s=0}
        }{
            \left.\dfrac{d}{ds}D_N\bigl(\gamma(s)\bigr)\right|_{s=0}
        }
    }
    \label{eq:resonant-alpha-general}
\end{equation}
	 
This coefficient determines the full resonant solution. For $n<N$, the coefficients are analytic near $P_*$, so
\begin{equation}
    h_n\bigl(\gamma(s)\bigr)\longrightarrow h_n^*,
    \qquad n<N.
\end{equation}
Moreover, the affine relation from Section~\ref{subsec:dictionary-higher-coefficients},
\begin{equation}
    h_{N+j}=P_{N+j}+Q_{N+j}h_N,
    \qquad j\geq1,
\end{equation}
has $P_{N+j}$ and $Q_{N+j}$ analytic near resonance. Hence, once $h_N$ has the finite limit in Eq.~\eqref{eq:resonant-alpha-general}, all higher coefficients have unique finite limits as well.
	 
The limiting coefficients satisfy the resonant recurrence and have $n=N$ coefficient $\alpha_N^{(\gamma)}$. Therefore Eq.~\eqref{eq:resonant-general-solution} gives
\begin{equation}
    \boxed{
        R_{\rm sel}^{(\gamma)}
        =
        H_0^{(N)}
        +
        \alpha_N^{(\gamma)}H_{\rm out}^{(N)}
    }
    \label{eq:resonant-selected-solution}
\end{equation}
	 
Applying the same local majorant argument used in Appendix~\ref{app:frobenius-convergence}, restricted to the curve $\gamma$, shows that this coefficientwise limit is the limit of actual local solutions in a sufficiently small common neighborhood of the horizon.
	 
We summarize the result as follows.
	 
\begin{theorem}[One-parameter resonant limit]
    \label{thm:resonant-path-limit}
    Let $P_*$ be a resonant point satisfying
    \begin{equation}
        D_N(P_*)=0,
        \qquad
        \Omega_N(P_*)=0.
    \end{equation}
    Let $\gamma(s)$ be a $C^1$ curve approaching $P_*$ and suppose that Eq.~\eqref{eq:resonant-transverse-D} holds. Then the resonant limit of the nonresonant ingoing solution normalized by $h_0=1$ is
    \begin{equation}
        R_{\rm sel}^{(\gamma)}
        =
        H_0^{(N)}
        +
        \alpha_N^{(\gamma)}H_{\rm out}^{(N)},
    \end{equation}
    where
    \begin{equation}
        \alpha_N^{(\gamma)}
        =
        -
        \frac{
            \left.\dfrac{d}{ds}\Omega_N(\gamma(s))\right|_{s=0}
        }{
            \left.\dfrac{d}{ds}D_N(\gamma(s))\right|_{s=0}
        }.
        \label{eq:resonant-path-theorem}
    \end{equation}
\end{theorem}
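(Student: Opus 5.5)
The plan is to work coefficientwise along the curve and then upgrade to convergence of actual solutions. Set $\delta(s):=D_N(\gamma(s))$ and $\varpi(s):=\Omega_N(\gamma(s))$. Both are $C^1$ in $s$, being compositions of the analytic maps $D_N=N(N-\nu)$ and $\Omega_N$ from Eq.~\eqref{eq:dictionary-Omega-parameter} with the $C^1$ curve $\gamma$. Both vanish at $s=0$ by hypothesis.

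First I check that $\gamma(s)$ is nonresonant for small $s\neq0$. Since $\delta'(0)\neq0$ by Eq.~\eqref{eq:resonant-transverse-D}, $\delta(s)\neq0$ on a punctured interval. Restricting further so that no other integer resonance is met, every $D_n(\gamma(s))$ is nonzero there, and the ingoing series with $h_0=1$ is uniquely defined.

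On this interval Eq.~\eqref{eq:resonant-N-recurrence} gives $h_N(\gamma(s))=-\varpi(s)/\delta(s)$. I write $\varpi(s)=s\,\varpi'(0)+o(s)$ and $\delta(s)=s\,\delta'(0)+o(s)$, cancel the factor $s$, and let $s\to0$. This yields the limit $-\varpi'(0)/\delta'(0)$, which is Eq.~\eqref{eq:resonant-path-theorem}.

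The remaining coefficients follow from earlier results:
\begin{itemize}
\item For $n<N$, the $h_n$ are analytic near $P_*$, so $h_n(\gamma(s))\to h_n^*$.
\item For $n>N$, I use the affine tail $h_{N+j}=P_{N+j}+Q_{N+j}h_N$ of Eq.~\eqref{eq:dictionary-affine-tail}. There $P_{N+j}$ and $Q_{N+j}$ are analytic at $P_*$, so $h_{N+j}(\gamma(s))\to P_{N+j}(P_*)+Q_{N+j}(P_*)\alpha_N^{(\gamma)}$.
\end{itemize}

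Next I identify the limit with a resonant solution. Passing to the limit in each recurrence equation \eqref{eq:dictionary-general-recurrence} is legitimate, since each equation involves finitely many coefficients with continuous dependence. So the limiting sequence solves the recurrence at $P_*$, with $h_0=1$, lower coefficients $h_n^*$, and $N$th coefficient $\alpha_N^{(\gamma)}$. The series $H_0^{(N)}+\alpha_N^{(\gamma)}H_{\rm out}^{(N)}$ has the same data at orders $0,\dots,N$. Because $D_n(P_*)\neq0$ for $n>N$, the higher coefficients are uniquely determined by the lower ones, so the two series agree. Here $H_0^{(N)}$ is well defined and log-free by Theorem~\ref{thm:fuchs-local-equivalence}, since $\Omega_N(P_*)=0$. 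This is the same uniqueness argument used in Theorem~\ref{thm:dictionary-resonance-identity}.

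The main obstacle is the final upgrade from coefficientwise convergence to convergence of actual local solutions. For this I would invoke the majorant estimate of Appendix~\ref{app:frobenius-convergence}, restricted to parameters on $\gamma$ near $P_*$. Along the curve, $|h_N(\gamma(s))|$ stays bounded because it converges. The tail coefficients then satisfy a uniform geometric bound $|h_n|\le C\rho^{-n}$ on a common disk, obtained by feeding the bound on $h_N$ and the analytic $P_{N+j},Q_{N+j}$ into the majorant recursion. The uniform bound and coefficientwise convergence together give locally uniform convergence of $H_{\rm in}(x;\gamma(s))$ to $R_{\rm sel}^{(\gamma)}$ by dominated convergence on $|x|<\rho'<\rho$.

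The point needing the most care is that the appendix bound must depend only on an upper bound for $|h_0|,\dots,|h_N|$ and on uniform bounds for $p,q$. It must not depend on the small pivot $D_N$. This holds because the majorant recursion only divides by $D_n$ with $n>N$, and those pivots are bounded below by $n(n-N)-|N-\nu|\,n\ge n/2$ near resonance.
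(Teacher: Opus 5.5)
Your proposal is correct and follows essentially the same route as the paper: the ratio $h_N(\gamma(s))=-\Omega_N(\gamma(s))/D_N(\gamma(s))$ evaluated with first-order expansions, convergence of the lower coefficients and of the affine tail $h_{N+j}=P_{N+j}+Q_{N+j}h_N$, identification of the limit through uniqueness of the resonant recurrence above order $N$, and an upgrade to actual solutions by the appendix majorant argument restricted to $\gamma$. Your additional details correctly fill in points the paper states only tersely: you derive nonresonance of the punctured curve from the transversality condition, and you note that the majorant along $\gamma$ must be run on the unregularized $h_n$, using boundedness of $h_N$ and only the pivots $D_n$ with $n>N$.
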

	 
Equation~\eqref{eq:resonant-path-theorem} is invariant under a nonzero analytic rescaling of the recurrence normalization. Indeed, if the entire $N$th recurrence is multiplied by an analytic function $f(P)\neq0$, then
\begin{equation}
    D_N\longrightarrow fD_N,
    \qquad
    \Omega_N\longrightarrow f\Omega_N.
\end{equation}
Since $D_N=\Omega_N=0$ at $P_*$,
\begin{align}
    d(fD_N)|_{P_*}&=f(P_*)\,dD_N|_{P_*},\\
    d(f\Omega_N)|_{P_*}&=f(P_*)\,d\Omega_N|_{P_*},
\end{align}
so the ratio of the two derivatives is unchanged.
	 
If instead
\begin{equation}
    \left.\frac{d}{ds}D_N(\gamma(s))\right|_{s=0}=0,
\end{equation}
then Eq.~\eqref{eq:resonant-path-theorem} does not apply. In that case the higher-order terms of $D_N\circ\gamma$ and $\Omega_N\circ\gamma$ must be compared, and the selected resonant coefficient can depend on second- or higher-order jets.

\subsection{Frequency continuation at fixed background}
\label{subsec:resonant-frequency}
	 
Now fix the background so that the surface gravity $\kappa$ remains constant under the parameter variation. Taking $\omega$ as a local parameter along the curve gives
\begin{equation}
    \nu(\omega)=\frac{i\omega}{\kappa},
\end{equation}
and hence
\begin{equation}
    D_N(\omega)
    =
    N\left(N-\frac{i\omega}{\kappa}\right).
    \label{eq:resonant-DN-omega}
\end{equation}
The resonant frequency is
\begin{equation}
    \omega_N=-iN\kappa,
\end{equation}
and
\begin{equation}
    \left.\frac{dD_N}{d\omega}\right|_{\omega_N}
    =
    -\frac{iN}{\kappa}\neq0.
    \label{eq:resonant-DN-omega-derivative}
\end{equation}
	 
Therefore Theorem~\ref{thm:resonant-path-limit} reduces to
\begin{equation}
    \boxed{
        \alpha_N^{(\gamma)}
        =
        \frac{\kappa}{iN}
        \left.\frac{d}{d\omega}\Omega_N\bigl(\gamma(\omega)\bigr)\right|_{\omega=\omega_N}
    }
    \label{eq:resonant-alpha-omega}
\end{equation}
	 
The derivative here is generally a total derivative along the curve $\gamma$, rather than the partial derivative $\partial_\omega\Omega_N$. For example, if
\begin{equation}
    \gamma(\omega)
    =
    \bigl(\omega,k(\omega),\Delta(\omega),\mathbf g(\omega)\bigr),
\end{equation}
then
\begin{align}
    \left.\frac{d}{d\omega}\Omega_N\bigl(\gamma(\omega)\bigr)\right|_{\omega_N}
    =\Bigg[&
    \partial_\omega\Omega_N
    +k'(\omega_N)\partial_k\Omega_N
    +\Delta'(\omega_N)\partial_\Delta\Omega_N
    \nonumber\\
    &+\sum_A g_A'(\omega_N)\partial_{g_A}\Omega_N
    \Bigg]_{P_*}.
    \label{eq:resonant-total-derivative}
\end{align}
	 
In particular, if all nonfrequency parameters are held fixed,
\begin{equation}
    k=k_*,
    \qquad
    \Delta=\Delta_*,
    \qquad
    \mathbf g=\mathbf g_*,
\end{equation}
then
\begin{equation}
    \boxed{
        \alpha_N^{\rm fixed}
        =
        \frac{\kappa}{iN}
        \left.\partial_\omega\Omega_N\right|_{P_*}
    }
    \label{eq:resonant-alpha-fixed}
\end{equation}
	 
It is important to distinguish uniqueness of the fixed-parameter continuation from the condition $\alpha_N^{\rm fixed}=0$. Fixing $(k,\Delta,\mathbf g)$ determines a unique trajectory in parameter space, so Eq.~\eqref{eq:resonant-alpha-fixed} gives a unique value. That value, however, need not vanish in general.
	 
Thus
\begin{equation}
    \text{fixed-parameter continuation}
    \quad\Longrightarrow\quad
    \alpha_N^{\rm fixed}\text{ is uniquely determined},
\end{equation}
whereas
\begin{equation}
    \alpha_N^{\rm fixed}=0
\end{equation}
is an additional condition.

\subsection{An additional selection condition at exact resonance}
\label{subsec:resonant-ingoing-representative}
	 
In Eq.~\eqref{eq:resonant-H0}, $H_0^{(N)}$ is the resonant solution for which the coefficient at $n=N$ is chosen to vanish. Thus
\begin{equation}
    H_0^{(N)}
    =
    1+h_1^*x+\cdots+h_{N-1}^*x^{N-1}
    +0\cdot x^N+O(x^{N+1}),
\end{equation}
and in this normalization it contains no $H_{\rm out}^{(N)}$ component.
	 
On the other hand, taking the nonresonant ingoing solution to exact resonance along a curve $\gamma$ selects, as shown in Section~\ref{subsec:resonant-general-curve},
\begin{equation}
    R_{\rm sel}^{(\gamma)}
    =
    H_0^{(N)}
    +
    \alpha_N^{(\gamma)}H_{\rm out}^{(N)}.
    \label{eq:resonant-selected-repeat}
\end{equation}
An important distinction is required here. At exact resonance,
\begin{equation}
    H_{\rm out}^{(N)}=x^N\bigl(1+O(x)\bigr)
\end{equation}
is analytic at the horizon, so horizon smoothness alone cannot exclude a component proportional to $H_{\rm out}^{(N)}$. However, $H_{\rm out}^{(N)}$ is obtained by taking the second nonresonant Frobenius solution
\begin{equation}
    H_{\rm out}(x;P)=x^{\nu(P)}\bigl(1+O(x)\bigr)
\end{equation}
to resonance. If, in addition to smoothness, one imposes the physical requirement that the resonant representative contain no component along the solution obtained by taking the second nonresonant Frobenius branch to resonance, then one must have
\begin{equation}
    R_{\rm sel}^{(\gamma)}=H_0^{(N)}.
\end{equation}
Importantly, this condition does not follow from $\Omega_N(P_*)=0$ or from $C^\infty$ horizon smoothness. The condition $\Omega_N(P_*)=0$ guarantees the existence of a two-dimensional smooth resonant solution space, whereas excluding the component associated with the second nonresonant Frobenius branch is an additional physical requirement that selects one particular resonant solution within that space.

Therefore the limit along a parameter-space curve $\gamma$ agrees exactly with $H_0^{(N)}$ if and only if
\begin{equation}
    \boxed{\alpha_N^{(\gamma)}=0}
    \label{eq:resonant-alpha-zero}
\end{equation}

Under the nondegeneracy condition of Theorem~\ref{thm:resonant-path-limit},
\begin{equation}
    \left.\frac{d}{ds}D_N\bigl(\gamma(s)\bigr)\right|_{s=0}\neq0,
\end{equation}
we have
\begin{equation}
    \alpha_N^{(\gamma)}
    =
    -\frac{
        \left.\dfrac{d}{ds}\Omega_N\bigl(\gamma(s)\bigr)\right|_{s=0}
    }{
        \left.\dfrac{d}{ds}D_N\bigl(\gamma(s)\bigr)\right|_{s=0}
    }.
\end{equation}
Since the denominator is nonzero,
\begin{equation}
    \boxed{
        \alpha_N^{(\gamma)}=0
        \quad\Longleftrightarrow\quad
        \left.\frac{d}{ds}\Omega_N\bigl(\gamma(s)\bigr)\right|_{s=0}=0
    }
    \label{eq:resonant-preservation-condition}
\end{equation}
In solution-space terms, $\alpha_N^{(\gamma)}=0$ removes the component coming from the second nonresonant Frobenius solution. In parameter-space terms,
\begin{equation}
    \left.d\Omega_N\right|_{P_*}(\dot\gamma_*)=0
\end{equation}
is the corresponding first-order condition on the approach curve. Thus a general limit along a parameter-space curve and this additional physical condition are not the same in general: the former may select
\begin{equation}
    H_0^{(N)}+\alpha_N^{(\gamma)}H_{\rm out}^{(N)},
\end{equation}
whereas the latter additionally requires
\begin{equation}
    \boxed{\alpha_N^{(\gamma)}=0}.
\end{equation}
This shows that the condition
\begin{equation}
    \Omega_N(P_*)=0
\end{equation}
alone does not determine which nonzero resonant solution is selected as the limit of the nonresonant ingoing family. It guarantees two independent smooth horizon solutions, while the coefficient $\alpha_N^{(\gamma)}$ selected within that two-dimensional space is controlled by the first-order variation of $\Omega_N$ along the approach curve.
	 
Equivalently, Eq.~\eqref{eq:resonant-preservation-condition} may be written in differential notation on parameter space as
\begin{equation}
    \left.\frac{d}{ds}\Omega_N\bigl(\gamma(s)\bigr)\right|_{s=0}
    =
    \left.d\Omega_N\right|_{P_*}(\dot\gamma_*),
    \qquad
    \dot\gamma_*
    :=
    \left.\frac{d\gamma}{ds}\right|_{s=0}.
    \label{eq:resonant-differential}
\end{equation}
Here $\dot\gamma_*$ is simply the first-order direction of the curve approaching $P_*$ in parameter space. Hence Eq.~\eqref{eq:resonant-preservation-condition} can also be expressed as
\begin{equation}
    \alpha_N^{(\gamma)}=0
    \quad\Longleftrightarrow\quad
    \left.d\Omega_N\right|_{P_*}(\dot\gamma_*)=0.
    \label{eq:resonant-tangent-condition}
\end{equation}
	 
The point is that this condition does not require
\begin{equation}
    \Omega_N\bigl(\gamma(s)\bigr)\equiv0
\end{equation}
along the entire curve. It requires only the first-order condition at $P_*$,
\begin{equation}
    \left.\frac{d}{ds}\Omega_N\bigl(\gamma(s)\bigr)\right|_{s=0}=0.
\end{equation}
Thus $\Omega_N$ must be stationary to first order along the chosen curve as it passes through $P_*$. Nonzero higher-order variations of $\Omega_N$ are fully compatible with Eq.~\eqref{eq:resonant-preservation-condition}.

	 \subsection{Moving momentum channel}
	 \label{subsec:resonant-moving-k}
	 
A useful special case is obtained by fixing $\Delta$ and the background while allowing the momentum to vary with the frequency:
\begin{equation}
    \Delta=\Delta_*,
    \qquad
    \mathbf g=\mathbf g_*,
    \qquad
    k=k(\omega),
    \qquad
    k(\omega_N)=k_*.
    \label{eq:resonant-moving-k-path}
\end{equation}
	 
Then
\begin{equation}
    \left.\frac{d}{d\omega}\Omega_N(\gamma(\omega))\right|_{\omega_N}
    =
    \left[\partial_\omega\Omega_N+k'(\omega_N)\partial_k\Omega_N\right]_{P_*},
    \label{eq:resonant-moving-k-total}
\end{equation}
and therefore
\begin{equation}
    \boxed{
        \alpha_N^{(k)}
        =
        \frac{\kappa}{iN}
        \left[\partial_\omega\Omega_N+k'(\omega_N)\partial_k\Omega_N\right]_{P_*}
    }
    \label{eq:resonant-alpha-moving-k}
\end{equation}
	 
Requiring $\alpha_N^{(k)}=0$, so that the limit agrees with the resonant solution $H_0^{(N)}$, gives
\begin{equation}
    \left[\partial_\omega\Omega_N+k'(\omega_N)\partial_k\Omega_N\right]_{P_*}=0.
    \label{eq:resonant-moving-k-condition}
\end{equation}
	 
In particular, if
\begin{equation}
    \left.\partial_k\Omega_N\right|_{P_*}\neq0,
    \label{eq:resonant-k-nondegenerate}
\end{equation}
then
\begin{equation}
    \boxed{
        k'(\omega_N)
        =
        -\frac{(\partial_\omega\Omega_N)_{P_*}}{(\partial_k\Omega_N)_{P_*}}
    }
    \label{eq:resonant-k-slope}
\end{equation}
	 
Thus, in this case, the slope is not arbitrary: requiring the limit to select $H_0^{(N)}$ fixes the first-order momentum--frequency slope through the pole-skipping point.
	 
Several degenerate cases should be distinguished.
	 
First, if
\begin{equation}
    (\partial_k\Omega_N)_{P_*}=0,
    \qquad
    (\partial_\omega\Omega_N)_{P_*}\neq0,
\end{equation}
then no finite value of $k'(\omega_N)$ can satisfy Eq.~\eqref{eq:resonant-moving-k-condition}.
	 
Second, if
\begin{equation}
    (\partial_k\Omega_N)_{P_*}
    =
    (\partial_\omega\Omega_N)_{P_*}
    =0,
\end{equation}
then $k'(\omega_N)$ is not determined at first order. One must then examine the second-order expansion or higher-order jets.
	 
Third, in an isotropic radial equation the momentum dependence is often analytic in
\begin{equation}
    q:=k^2
\end{equation}
rather than in $k$ itself. In particular, at $k_*=0$,
\begin{equation}
    \partial_k\Omega_N=2k\,\partial_q\Omega_N,
\end{equation}
so $(\partial_k\Omega_N)_{P_*}=0$ may hold automatically. In this case it is more natural to use $q=k^2$ as the basic spectral variable and write
\begin{equation}
    \boxed{
        q'(\omega_N)
        =
        -\frac{(\partial_\omega\Omega_N)_{P_*}}{(\partial_q\Omega_N)_{P_*}}
    }
    \label{eq:resonant-q-slope}
\end{equation}
	 
If the direction of the full vector momentum is also allowed to vary independently for $d>1$, the single scalar condition
\begin{equation}
    d\Omega_N|_{P_*}(\dot\gamma_*)=0
\end{equation}
cannot determine all components of $\mathbf k'$. Equation~\eqref{eq:resonant-k-slope} therefore applies directly to a scalar momentum channel with fixed momentum direction, or to cases in which isotropy leaves only $k^2$ as the independent spectral variable.

	\subsection{Limits along parameter-space curves and projective classes of boundary coefficients}
	\label{subsec:resonant-boundary-limit}
	
Finally, we distinguish explicitly the simultaneous vanishing of the regularized boundary connection coefficients found in Section~\ref{sec:dictionary} from the nonzero resonant solution obtained here along a curve in parameter space, and we describe how the local resonant limit is transmitted to the boundary coefficients.
	
Continue the two independent smooth horizon solutions $H_0^{(N)}$ and $H_{\rm out}^{(N)}$ at exact resonance to the boundary and write
\begin{align}
    H_0^{(N)}&=A_0R_1+B_0R_2,
    \label{eq:resonant-H0-boundary}
    \\
    H_{\rm out}^{(N)}&=A_{\rm out}R_1+B_{\rm out}R_2.
    \label{eq:resonant-Hout-boundary}
\end{align}
Here $R_1$ and $R_2$ are the independent boundary-normalized solutions defined in Section~\ref{subsec:setup-boundary-basis}.
	
Since $H_0^{(N)}$ and $H_{\rm out}^{(N)}$ are also independent, the two boundary coefficient vectors
\begin{equation}
    (A_0,B_0),
    \qquad
    (A_{\rm out},B_{\rm out})
\end{equation}
are not proportional.
	
As shown above, taking the nonresonant ingoing solution normalized by $h_0=1$ to exact resonance along a parameter-space curve $\gamma$ selects the nonzero resonant solution
\begin{equation}
    R_{\rm sel}^{(\gamma)}
    =
    H_0^{(N)}
    +
    \alpha_N^{(\gamma)}H_{\rm out}^{(N)}.
    \label{eq:resonant-selected-solution-boundary}
\end{equation}
With the present Frobenius normalization, $\alpha_N^{(\gamma)}$ is the coefficient at order $n=N$ selected at resonance by the curve $\gamma$.
	
Substituting Eqs.~\eqref{eq:resonant-H0-boundary} and \eqref{eq:resonant-Hout-boundary} into Eq.~\eqref{eq:resonant-selected-solution-boundary} gives
\begin{align}
    R_{\rm sel}^{(\gamma)}
    &=
    \left(A_0+\alpha_N^{(\gamma)}A_{\rm out}\right)R_1
    \nonumber\\
    &\quad+
    \left(B_0+\alpha_N^{(\gamma)}B_{\rm out}\right)R_2.
    \label{eq:resonant-selected-boundary}
\end{align}
We therefore define the boundary coefficients of the resonant solution selected by $\gamma$ as
\begin{align}
    A_\gamma&:=A_0+\alpha_N^{(\gamma)}A_{\rm out},
    \label{eq:resonant-Agamma}
    \\
    B_\gamma&:=B_0+\alpha_N^{(\gamma)}B_{\rm out}.
    \label{eq:resonant-Bgamma}
\end{align}
	
These coefficients must not be confused with the regularized coefficients $\widehat A,\widehat B$ of Section~\ref{sec:dictionary}. To display the relation between the two pairs, write at nonresonant points $P=\gamma(s)$, $s\neq0$,
\begin{equation}
    H_{\rm in}(P)=A(P)R_1+B(P)R_2.
\end{equation}
Multiplication by the factor $g_N$ that removes the simple parameter pole gives
\begin{equation}
    H_{\rm in}^{\rm reg}(P)=g_N(P)H_{\rm in}(P)
\end{equation}
and hence
\begin{equation}
    H_{\rm in}^{\rm reg}(P)
    =
    \widehat A(P)R_1+\widehat B(P)R_2,
\end{equation}
where
\begin{equation}
    \boxed{
        \widehat A(P)=g_N(P)A(P),
        \qquad
        \widehat B(P)=g_N(P)B(P)
    }
    \label{eq:resonant-hat-vs-ordinary}
\end{equation}
	
At the pole-skipping point $P_*$,
\begin{equation}
    g_N(P_*)=0,
    \qquad
    \Omega_N(P_*)=0,
\end{equation}
and the identity at exact resonance established in Section~\ref{sec:dictionary} gives
\begin{equation}
    H_{\rm in}^{\rm reg}(P_*)=0.
\end{equation}
Therefore
\begin{equation}
    \boxed{\widehat A(P_*)=\widehat B(P_*)=0.}
    \label{eq:resonant-hat-zero}
\end{equation}
This equation states that the regularized family has the zero solution as its value at $P_*$.
	
By contrast, $R_{\rm sel}^{(\gamma)}$ is the limit along the parameter-space curve $\gamma$ of the nonresonant ingoing family with the normalization $h_0=1$, and hence
\begin{equation}
    R_{\rm sel}^{(\gamma)}\neq0.
\end{equation}
Since $R_1$ and $R_2$ are independent,
\begin{equation}
    \boxed{(A_\gamma,B_\gamma)\neq(0,0).}
    \label{eq:resonant-Agamma-Bgamma-nonzero}
\end{equation}
Thus the boundary coefficient pair of the resonant solution obtained along the curve defines a well-defined projective class
\begin{equation}
    \boxed{[A_\gamma:B_\gamma]\in\mathbb{CP}^1.}
    \label{eq:resonant-projective-limit}
\end{equation}
	
The important point is that
\begin{equation}
    (\widehat A(P_*),\widehat B(P_*))=(0,0)
\end{equation}
does not define a projective point; in particular, $[0:0]$ is not an element of $\mathbb{CP}^1$. At the exact point, the regularized coefficient pair by itself carries no information about a boundary projective class. That class is obtained only after a curve approaching $P_*$ is specified and the corresponding limit is taken.
	
This distinction is especially transparent at the level of the coefficients. Suppose that along a curve $\gamma$,
\begin{equation}
    A(\gamma(s))\longrightarrow A_\gamma,
    \qquad
    B(\gamma(s))\longrightarrow B_\gamma.
\end{equation}
At the same time,
\begin{equation}
    g_N(\gamma(s))\longrightarrow0.
\end{equation}
Equation~\eqref{eq:resonant-hat-vs-ordinary} then implies
\begin{align}
    \widehat A(\gamma(s))
    &=g_N(\gamma(s))A(\gamma(s))\longrightarrow0,\\
    \widehat B(\gamma(s))
    &=g_N(\gamma(s))B(\gamma(s))\longrightarrow0.
\end{align}
Thus it is possible simultaneously to have
\begin{equation}
    (\widehat A,\widehat B)\longrightarrow(0,0)
\end{equation}
and
\begin{equation}
    (A,B)\longrightarrow(A_\gamma,B_\gamma)\neq(0,0).
\end{equation}
There is no contradiction between these two statements.
	
Moreover, for $s\neq0$ one has $g_N(\gamma(s))\neq0$, so
\begin{equation}
    \frac{\widehat B(\gamma(s))}{\widehat A(\gamma(s))}
    =
    \frac{B(\gamma(s))}{A(\gamma(s))}.
    \label{eq:resonant-ratio-equality-off}
\end{equation}
Hence, if $A_\gamma\neq0$,
\begin{equation}
    \boxed{
        \lim_{s\to0}
        \frac{\widehat B(\gamma(s))}{\widehat A(\gamma(s))}
        =
        \frac{B_\gamma}{A_\gamma}
        =
        \frac{B_0+\alpha_N^{(\gamma)}B_{\rm out}}{A_0+\alpha_N^{(\gamma)}A_{\rm out}}
    }
    \label{eq:resonant-boundary-ratio}
\end{equation}
	
Thus, at the exact pole-skipping point, $\widehat A$ and $\widehat B$ vanish simultaneously and their ratio is not defined pointwise. Once an approach curve $\gamma$ is specified, however, the limit of the $0/0$ ratio generally has the definite value $B_\gamma/A_\gamma$.
	
If instead
\begin{equation}
    A_\gamma=0,
    \qquad
    B_\gamma\neq0,
\end{equation}
then the projective class obtained along $\gamma$ is
\begin{equation}
    [A_\gamma:B_\gamma]=[0:1],
\end{equation}
corresponding to the pole case $B/A=\infty$. Thus finite ratios and the pole case are naturally unified by regarding the projective class of the boundary coefficient pair along $\gamma$ as a point of $\mathbb{CP}^1$.
	
The dependence on the approach curve is now explicit. If two curves $\gamma_1$ and $\gamma_2$ select
\begin{equation}
    \alpha_N^{(\gamma_1)}\neq\alpha_N^{(\gamma_2)},
\end{equation}
then generically
\begin{equation}
    [A_{\gamma_1}:B_{\gamma_1}]
    \neq
    [A_{\gamma_2}:B_{\gamma_2}].
\end{equation}
Thus even at the same exact pole-skipping point, different curves can select different nonzero resonant bulk solutions and different projective classes $[A_{\gamma_i}:B_{\gamma_i}]$.
	
The results of this section can be summarized in three stages. First,
\begin{equation}
    \boxed{D_N(P_*)=0}
\end{equation}
fixes the integer resonance. Second,
\begin{equation}
    \boxed{\Omega_N(P_*)=0}
\end{equation}
ensures the existence of two independent smooth horizon directions at exact resonance. Third, under the first-order nondegeneracy condition,
\begin{equation}
    \boxed{
        \alpha_N^{(\gamma)}
        =
        -\frac{
            \left.\dfrac{d}{ds}\Omega_N(\gamma(s))\right|_{s=0}
        }{
            \left.\dfrac{d}{ds}D_N(\gamma(s))\right|_{s=0}
        }
    }
    \label{eq:resonant-summary-alpha}
\end{equation}
determines the actual resonant solution selected by the curve $\gamma$ within that two-dimensional smooth solution space.
	
Consequently, the simultaneous zero
\begin{equation}
    \widehat A(P_*)=\widehat B(P_*)=0
\end{equation}
from Section~\ref{sec:dictionary} and the nonzero projective pair
\begin{equation}
    [A_\gamma:B_\gamma]\in\mathbb{CP}^1
\end{equation}
obtained here are not contradictory statements.

	 \section{Pole-skipping and holographic Green functions}
	 \label{sec:pole-skipping-green}
	 
In the preceding sections we analyzed separately the resonant Frobenius problem at the horizon and the connection problem at the boundary. Section~\ref{sec:dictionary} showed that at exact resonance
\begin{equation}
    \Omega_N(P_*)=0
    \quad\Longleftrightarrow\quad
    \widehat A(P_*)=\widehat B(P_*)=0,
    \label{eq:green-hat-simultaneous-zero}
\end{equation}
whereas Section~\ref{sec:resonant-limit} showed that, once a parameter-space curve $\gamma$ is specified, the nonresonant ingoing family normalized by $h_0=1$ selects the nonzero resonant solution
\begin{equation}
    R_{\rm sel}^{(\gamma)}
    =
    H_0^{(N)}
    +
    \alpha_N^{(\gamma)}H_{\rm out}^{(N)}.
    \label{eq:green-selected-resonant}
\end{equation}
	 
In this section we connect these two results directly to the pole-skipping structure of the holographic retarded Green function. In particular, we distinguish the regularized $0/0$ at the exact point from the finite or infinite Green-function limit obtained along a specified approach curve.

\subsection{Retarded Green function and boundary coefficient ratio}
\label{subsec:green-retarded-ratio}
	 
At a nonresonant parameter value, continue the ingoing solution normalized by $h_0=1$ to the boundary and write
\begin{equation}
    H_{\rm in}=AR_1+BR_2.
    \label{eq:green-ordinary-decomposition}
\end{equation}
According to the convention fixed in Section~\ref{subsec:setup-boundary-basis}, $R_1$ is the source-normalized solution and $R_2$ is the response-normalized solution.
	 
In the standard holographic prescription, up to an overall normalization and a convention-dependent prefactor, the retarded Green function is
\begin{equation}
    \boxed{G_R\propto\frac{B}{A}}
    \label{eq:green-retarded-ratio}
\end{equation}
\cite{SonStarinets2002,HerzogSon2003,SkenderisVanRees2008}.
	 
Thus $A=0$ with $B\neq0$ gives a pole, whereas $B=0$ with $A\neq0$ gives a zero. At a generic nonresonant point, the ingoing horizon condition selects a unique solution up to overall normalization, and therefore the ratio $B/A$ is uniquely determined.
	 
Near resonance, Section~\ref{sec:dictionary} introduced the factor $g_N$ that removes the simple parameter pole and defined
\begin{equation}
    H_{\rm in}^{\rm reg}=g_NH_{\rm in}.
\end{equation}
Hence, if
\begin{equation}
    H_{\rm in}^{\rm reg}=\widehat A R_1+\widehat B R_2,
\end{equation}
then
\begin{equation}
    \widehat A=g_NA,
    \qquad
    \widehat B=g_NB.
    \label{eq:green-hat-ordinary-relation}
\end{equation}
	 
Away from resonance $g_N\neq0$, so
\begin{equation}
    \boxed{\frac{B}{A}=\frac{\widehat B}{\widehat A}}
    \label{eq:green-hat-ratio-off-resonance}
\end{equation}
Thus multiplication by the pole-removing factor does not change the nonresonant retarded Green-function ratio.

\subsection{Regularized \texorpdfstring{$0/0$}{0/0} at the exact pole-skipping point}
\label{subsec:green-zero-zero}
	 
Let
\begin{equation}
    P_*=(\omega_N,\lambda_*)
\end{equation}
be a resonant point satisfying
\begin{equation}
    D_N(P_*)=0,
    \qquad
    \Omega_N(P_*)=0.
    \label{eq:green-pole-skipping-local}
\end{equation}
	 
The first condition is the integer Frobenius resonance condition, while the second is the solvability condition at the resonant order. By the horizon-to-boundary result of Section~\ref{sec:dictionary},
\begin{equation}
    \Omega_N(P_*)=0
    \quad\Longleftrightarrow\quad
    \widehat A(P_*)=\widehat B(P_*)=0.
    \label{eq:green-pole-skipping-boundary}
\end{equation}
	 
Therefore, at the exact point the regularized source--response ratio formally takes the form
\begin{equation}
    \frac{\widehat B(P_*)}{\widehat A(P_*)}=\frac{0}{0}.
    \label{eq:green-zero-zero-form}
\end{equation}
	 
It is essential that Eq.~\eqref{eq:green-zero-zero-form} not be interpreted as saying that a single nonzero resonant solution has both source and response coefficients equal to zero. In fact,
\begin{equation}
    H_{\rm in}^{\rm reg}(P_*)=0,
\end{equation}
so $\widehat A(P_*)=\widehat B(P_*)=0$ are the boundary coefficients of the zero solution.
	 
Consequently, the expression
\begin{equation}
    [\widehat A(P_*):\widehat B(P_*)]
\end{equation}
does not define a projective class at the exact point. In particular, $[0:0]$ is not an element of $\mathbb{CP}^1$.
	 
The relation between Green-function nonuniqueness and degeneracy of the near-horizon ingoing solution has been established in previous holographic analyses \cite{BlakeDavisonVegh2020,NatsuumeOkamura2019,NatsuumeOkamura2020}. In the present formulation, the $0/0$ must be understood together with the distinction involving the regularized family above. The regularized coefficient values at the exact point do not by themselves select a boundary source--response ratio; to determine such a ratio one must additionally specify how $P_*$ is approached.

	 \subsection{Green-function limits along curves in parameter space}
	 \label{subsec:green-pathwise-limit}
	 
Fix a curve in parameter space,
\begin{equation}
    \gamma(s),
    \qquad
    \gamma(0)=P_*.
\end{equation}
As shown in Section~\ref{sec:resonant-limit}, under the first-order nondegeneracy condition
\begin{equation}
    \left.\frac{d}{ds}D_N(\gamma(s))\right|_{s=0}\neq0,
    \label{eq:green-path-nondegenerate-D}
\end{equation}
the curve selects
\begin{equation}
    \alpha_N^{(\gamma)}
    =
    -\frac{
        \left.\dfrac{d}{ds}\Omega_N(\gamma(s))\right|_{s=0}
    }{
        \left.\dfrac{d}{ds}D_N(\gamma(s))\right|_{s=0}
    }.
    \label{eq:green-alpha-path}
\end{equation}
	 
The nonzero solution selected at exact resonance is therefore
\begin{equation}
    R_{\rm sel}^{(\gamma)}
    =
    H_0^{(N)}
    +
    \alpha_N^{(\gamma)}H_{\rm out}^{(N)}.
\end{equation}
	 
Continuing the two horizon solutions at resonance to the boundary, write
\begin{align}
    H_0^{(N)}&=A_0R_1+B_0R_2,\\
    H_{\rm out}^{(N)}&=A_{\rm out}R_1+B_{\rm out}R_2.
\end{align}
Then
\begin{equation}
    R_{\rm sel}^{(\gamma)}=A_\gamma R_1+B_\gamma R_2,
\end{equation}
where
\begin{align}
    A_\gamma&=A_0+\alpha_N^{(\gamma)}A_{\rm out},
    \label{eq:green-Agamma}\\
    B_\gamma&=B_0+\alpha_N^{(\gamma)}B_{\rm out}.
    \label{eq:green-Bgamma}
\end{align}
	 
Since $R_{\rm sel}^{(\gamma)}$ is nonzero and $R_1,R_2$ are independent,
\begin{equation}
    (A_\gamma,B_\gamma)\neq(0,0).
\end{equation}
Hence
\begin{equation}
    [A_\gamma:B_\gamma]\in\mathbb{CP}^1
\end{equation}
is well defined.
	 
For $s\neq0$, Eq.~\eqref{eq:green-hat-ratio-off-resonance} holds. Therefore, when $A_\gamma\neq0$,
\begin{equation}
    \boxed{
        \lim_{s\to0}G_R(\gamma(s))
        \propto
        \lim_{s\to0}
        \frac{\widehat B(\gamma(s))}{\widehat A(\gamma(s))}
        =
        \frac{B_\gamma}{A_\gamma}
    }
    \label{eq:green-pathwise-limit}
\end{equation}
	 
Thus, although the pointwise ratio at the exact point is undefined because
\begin{equation}
    \frac{\widehat B}{\widehat A}=\frac00,
\end{equation}
choosing a curve $\gamma$ generally determines a well-defined projective class $[A_\gamma:B_\gamma]$ for the limiting boundary coefficient pair.
	 
Using Eqs.~\eqref{eq:green-Agamma} and \eqref{eq:green-Bgamma}, this limit can be written as
\begin{equation}
    \boxed{
        G_R^{(\gamma)}
        \propto
        \frac{B_0+\alpha_N^{(\gamma)}B_{\rm out}}{A_0+\alpha_N^{(\gamma)}A_{\rm out}}
    }
    \label{eq:green-alpha-ratio}
\end{equation}
Here $G_R^{(\gamma)}$ denotes the Green-function limit obtained by taking the nonresonant retarded solution to the pole-skipping point along the parameter-space curve $\gamma$. As Eq.~\eqref{eq:green-alpha-path} shows, for unrestricted approach curves $\alpha_N^{(\gamma)}$ can be nonzero, and different curves can select different Green-function limits.

If one imposes instead the additional physical condition described in Section~\ref{subsec:resonant-ingoing-representative}, namely that the resonant solution contain no component coming from the second nonresonant Frobenius solution, then
\begin{equation}
    \alpha_N^{(\gamma)}=0.
\end{equation}
The selected resonant solution is then $H_0^{(N)}$, and its boundary coefficient pair defines the projective class
\begin{equation}
    [A_0:B_0]\in\mathbb{CP}^1.
\end{equation}
If $A_0\neq0$, the retarded Green-function ratio under this additional condition is
\begin{equation}
    \boxed{
        G_R^{(\alpha=0)}\propto\frac{B_0}{A_0}
    }
    \label{eq:green-branch-preserving}
\end{equation}
whereas if $A_0=0$ and $B_0\neq0$, the corresponding projective class is $[0:1]$. Thus the ambiguity associated with arbitrary curves is not in conflict with the additional physical condition $\alpha_N^{(\gamma)}=0$: the former describes the general limit along a parameter-space curve, while the latter selects one particular resonant solution among the allowed possibilities.

\subsection{Pole, zero, and finite limits along parameter-space curves}
\label{subsec:green-pole-zero-directions}
	 
Equation~\eqref{eq:green-alpha-ratio} shows directly that different curves approaching the pole-skipping point can select qualitatively different boundary limits.
	 
If
\begin{equation}
    A_\gamma=0,
    \qquad
    B_\gamma\neq0,
\end{equation}
then
\begin{equation}
    [A_\gamma:B_\gamma]=[0:1]
\end{equation}
and
\begin{equation}
    G_R^{(\gamma)}=\infty,
\end{equation}
corresponding to a pole.
	 
Conversely, if
\begin{equation}
    B_\gamma=0,
    \qquad
    A_\gamma\neq0,
\end{equation}
then
\begin{equation}
    [A_\gamma:B_\gamma]=[1:0]
\end{equation}
and
\begin{equation}
    G_R^{(\gamma)}=0,
\end{equation}
corresponding to a zero.
	 
In all other cases with
\begin{equation}
    A_\gamma B_\gamma\neq0,
\end{equation}
one obtains a finite nonzero limit,
\begin{equation}
    0<\left|\frac{B_\gamma}{A_\gamma}\right|<\infty.
\end{equation}
	 
It is therefore misleading to characterize a pole-skipping point as carrying only a pole or only a zero. At the exact point the regularized source and response coefficients vanish simultaneously, so their ratio is undetermined. A pole, a zero, or a finite value is selected only after an approach direction in parameter space is specified.
	  
\subsection{Different values of \texorpdfstring{$\alpha_N$}{alpha N} select different boundary projective classes}
\label{subsec:green-alpha-injectivity}
	
We now make explicit how the dependence on the approach curve is transmitted to the boundary coefficients. At exact resonance, the solution selected by a curve $\gamma$ is
\begin{equation}
    R_{\rm sel}^{(\gamma)}
    =
    H_0^{(N)}
    +
    \alpha_N^{(\gamma)}H_{\rm out}^{(N)},
\end{equation}
and all such solutions are normalized by the same condition $h_0=1$ at the horizon.
	
After continuation to the boundary,
\begin{equation}
    R_{\rm sel}^{(\gamma)}=A_\gamma R_1+B_\gamma R_2,
\end{equation}
where
\begin{align}
    A_\gamma&=A_0+\alpha_N^{(\gamma)}A_{\rm out},\\
    B_\gamma&=B_0+\alpha_N^{(\gamma)}B_{\rm out}.
\end{align}
	
Suppose two different curves $\gamma_1$ and $\gamma_2$ select
\begin{equation}
    \alpha_N^{(\gamma_1)}\neq\alpha_N^{(\gamma_2)}.
\end{equation}
If they gave the same projective class of the boundary coefficient pair, then there would exist a constant $c\neq0$ such that
\begin{equation}
    (A_{\gamma_1},B_{\gamma_1})
    =
    c\,(A_{\gamma_2},B_{\gamma_2}).
\end{equation}
	
Because $R_1$ and $R_2$ are independent, this would imply for the global solutions that
\begin{equation}
    R_{\rm sel}^{(\gamma_1)}
    =
    c\,R_{\rm sel}^{(\gamma_2)}.
    \label{eq:green-proportional-selected}
\end{equation}
	
Both solutions, however, are normalized by $h_0=1$ at the horizon. Comparing their constant terms in Eq.~\eqref{eq:green-proportional-selected} therefore gives
\begin{equation}
    c=1.
\end{equation}
It follows that
\begin{equation}
    H_0^{(N)}+\alpha_N^{(\gamma_1)}H_{\rm out}^{(N)}
    =
    H_0^{(N)}+\alpha_N^{(\gamma_2)}H_{\rm out}^{(N)},
\end{equation}
and hence
\begin{equation}
    \left(\alpha_N^{(\gamma_1)}-\alpha_N^{(\gamma_2)}\right)
    H_{\rm out}^{(N)}=0.
\end{equation}
	
Since $H_{\rm out}^{(N)}$ is a nonzero solution, this would require
\begin{equation}
    \alpha_N^{(\gamma_1)}=\alpha_N^{(\gamma_2)},
\end{equation}
contrary to the assumption. Therefore
\begin{equation}
    \boxed{
        \alpha_N^{(\gamma_1)}\neq\alpha_N^{(\gamma_2)}
        \quad\Longrightarrow\quad
        [A_{\gamma_1}:B_{\gamma_1}]
        \neq
        [A_{\gamma_2}:B_{\gamma_2}]
    }
    \label{eq:green-different-alpha-different-projective}
\end{equation}
	
Thus, if different curves select different Frobenius coefficients $\alpha_N^{(\gamma)}$ at the horizon, that difference cannot disappear under continuation to the boundary. Different values of $\alpha_N^{(\gamma)}$ select different projective classes of the source--response coefficient pair and therefore correspond to different limits of
\begin{equation}
    G_R^{(\gamma)}\propto\frac{B_\gamma}{A_\gamma}
\end{equation}
along the two parameter-space curves.

	\subsection{First-order directional ambiguity}
	\label{subsec:green-directional-ambiguity}
	
The simultaneous-zero condition
\begin{equation}
    \widehat A(P_*)=\widehat B(P_*)=0
\end{equation}
by itself does not imply that different first-order directions approaching the pole-skipping point must yield different Green-function limits. If the leading linear terms of the two regularized coefficients are proportional, the first-order limit of their ratio can be the same along different directions. This distinction between near-horizon degeneracy and boundary limiting behavior is also important in pole-skipping classifications \cite{AhnClassifying2021}.
	
Consider the simplest two-parameter slice. Fix the background and all other parameters and use $(\omega,k)$ as local coordinates. Near the pole-skipping point
\begin{equation}
    P_*=(\omega_N,k_*),
\end{equation}
write
\begin{align}
    \widehat A(\omega,k)
    &=a_\omega\,\delta\omega+a_k\,\delta k+O(\delta P^2),
    \label{eq:green-Ahat-linear}\\
    \widehat B(\omega,k)
    &=b_\omega\,\delta\omega+b_k\,\delta k+O(\delta P^2),
    \label{eq:green-Bhat-linear}
\end{align}
where
\begin{equation}
    \delta\omega:=\omega-\omega_N,
    \qquad
    \delta k:=k-k_*,
\end{equation}
and
\begin{align}
    a_\omega&:=(\partial_\omega\widehat A)_{P_*},
    &a_k&:=(\partial_k\widehat A)_{P_*},\\
    b_\omega&:=(\partial_\omega\widehat B)_{P_*},
    &b_k&:=(\partial_k\widehat B)_{P_*}.
\end{align}
	
Suppose a curve approaching $P_*$ satisfies to first order
\begin{equation}
    \delta k=v\,\delta\omega+o(\delta\omega),
    \label{eq:green-approach-slope}
\end{equation}
where $v$ is its first-order slope. Different values $v_1,v_2,\ldots$ represent different first-order approach directions.
	
Substituting Eq.~\eqref{eq:green-approach-slope} into Eqs.~\eqref{eq:green-Ahat-linear} and \eqref{eq:green-Bhat-linear} gives
\begin{align}
    \widehat A&=(a_\omega+va_k)\delta\omega+o(\delta\omega),\\
    \widehat B&=(b_\omega+vb_k)\delta\omega+o(\delta\omega).
\end{align}
Hence, along a direction for which $a_\omega+va_k\neq0$,
\begin{equation}
    \boxed{
        G_R^{(v)}\propto\frac{b_\omega+vb_k}{a_\omega+va_k}
    }
    \label{eq:green-slope-ratio}
\end{equation}
Thus a fixed slope $v$ determines a Green-function limit associated with that direction.
	
The remaining question is whether different slopes select different limits. Define the right-hand side of Eq.~\eqref{eq:green-slope-ratio} by
\begin{equation}
    F(v):=\frac{b_\omega+vb_k}{a_\omega+va_k}.
\end{equation}
Then
\begin{equation}
    \frac{dF}{dv}
    =
    \frac{a_\omega b_k-a_k b_\omega}{(a_\omega+va_k)^2}.
    \label{eq:green-slope-derivative}
\end{equation}
	
Therefore, if
\begin{equation}
    \boxed{
        \det
        \begin{pmatrix}
            \partial_\omega\widehat A & \partial_k\widehat A\\
            \partial_\omega\widehat B & \partial_k\widehat B
        \end{pmatrix}_{P_*}
        =a_\omega b_k-a_k b_\omega\neq0
    }
    \label{eq:green-jacobian-nondegenerate}
\end{equation}
then
\begin{equation}
    \frac{dF}{dv}\neq0
\end{equation}
wherever $a_\omega+va_k\neq0$. The Green-function limit therefore genuinely depends on the first-order approach slope $v$.
	
Equation~\eqref{eq:green-jacobian-nondegenerate} is thus a sufficient local nondegeneracy condition for different first-order approach directions to select different Green-function limits.
	
In particular, if for some slope $v=v_{\rm pole}$,
\begin{equation}
    a_\omega+v_{\rm pole}a_k=0
\end{equation}
while
\begin{equation}
    b_\omega+v_{\rm pole}b_k\neq0,
\end{equation}
then that direction gives a pole rather than a finite ratio,
\begin{equation}
    G_R^{(v_{\rm pole})}=\infty.
\end{equation}
Similarly, if
\begin{equation}
    b_\omega+v_{\rm zero}b_k=0,
    \qquad
    a_\omega+v_{\rm zero}a_k\neq0,
\end{equation}
then
\begin{equation}
    G_R^{(v_{\rm zero})}=0,
\end{equation}
corresponding to a zero.
	
Conversely, if
\begin{equation}
    a_\omega b_k-a_kb_\omega=0,
\end{equation}
then the leading linear forms of $\widehat A$ and $\widehat B$ may be proportional, and the first-order Green-function ratio need not depend on the approach slope. This does not exclude directional dependence at higher order. In that case the linear terms in Eqs.~\eqref{eq:green-Ahat-linear} and \eqref{eq:green-Bhat-linear} are insufficient, and quadratic or higher-order dependence on the parameters must be examined.
	
If the radial equation is analytic in
\begin{equation}
    q:=k^2
\end{equation}
rather than in $k$, particularly when $k_*=0$, then it is more natural to use $(\omega,q)$ as local spectral coordinates. All of the preceding formulas apply with $k$ replaced by $q$.

\subsection{Horizon-to-boundary structure of pole-skipping}
\label{subsec:green-summary}
	 
Combining the preceding results, the pole-skipping structure can be separated into three levels.
	 
First, in the horizon Frobenius recurrence,
\begin{equation}
    \boxed{D_N(P_*)=0}
\end{equation}
fixes the integer resonance.
	 
Second, at that resonance,
\begin{equation}
    \boxed{\Omega_N(P_*)=0}
\end{equation}
implies
\begin{equation}
    \det M_N=0,
    \qquad
    \ell_N=0,
    \qquad
    \dim\mathcal S_{\rm smooth}=2,
\end{equation}
and, simultaneously at the boundary,
\begin{equation}
    \boxed{\widehat A(P_*)=\widehat B(P_*)=0.}
\end{equation}
Thus the regularized source--response ratio is $0/0$ at the exact point.
	 
Finally, once an approach curve $\gamma$ is specified,
\begin{equation}
    \boxed{
        \alpha_N^{(\gamma)}
        =
        -\frac{
            \left.\dfrac{d}{ds}\Omega_N(\gamma(s))\right|_{s=0}
        }{
            \left.\dfrac{d}{ds}D_N(\gamma(s))\right|_{s=0}
        }
    }
\end{equation}
selects the actual nonzero resonant solution within the two-dimensional smooth horizon solution space, and continuation of that solution to the boundary determines the Green-function limit along the chosen parameter-space curve,
\begin{equation}
    \boxed{
        G_R^{(\gamma)}
        \propto
        \frac{B_0+\alpha_N^{(\gamma)}B_{\rm out}}{A_0+\alpha_N^{(\gamma)}A_{\rm out}}
    }.
\end{equation}
	 
The horizon-to-boundary dictionary established here therefore does more than place the rank loss of the horizon recurrence matrix next to a boundary $0/0$. The same solvability function $\Omega_N$ connects
	 
\begin{equation}
    \boxed{
        \begin{gathered}
            \text{resonant Frobenius solvability},\\
            \text{the two-dimensional smooth horizon solution space},\\
            \text{simultaneous vanishing of }\widehat A\text{ and }\widehat B,\\
            \text{and selection of the resonant Green-function limit}\\
            \text{along a parameter-space curve}
        \end{gathered}
    }
\end{equation}
within a single parameter-dependent structure.

	 \section{Discussion}
	 \label{sec:discussion}
	 
We have analyzed the relation between integer Frobenius resonance and the boundary $0/0$ structure of holographic pole-skipping for a general second-order scalar radial problem with a nonextremal horizon. The main goal was not to rederive the well-known local horizon degeneracy itself, but to formulate how that local degeneracy is transmitted to independently defined source and response channels at the boundary as a general horizon-to-boundary connection problem.

\subsection{Local resonance and the global horizon-to-boundary relation}
\label{subsec:discussion-local-global}
	 
At exact integer resonance,
\begin{equation}
    \nu=N,
    \qquad
    N\in\mathbb N,
\end{equation}
the resonant order of the horizon Frobenius recurrence degenerates because
\begin{equation}
    D_N=0.
\end{equation}
When the smaller-root solution is normalized by $h_0=1$, the solvability condition at that order is
\begin{equation}
    \Omega_N=0.
\end{equation}
	 
The local equivalence obtained in Section~\ref{sec:fuchsian} is
\begin{equation}
    \boxed{
        \det M_N=0
        \quad\Longleftrightarrow\quad
        \Omega_N=0
        \quad\Longleftrightarrow\quad
        \ell_N=0
        \quad\Longleftrightarrow\quad
        \dim\mathcal S_{\rm smooth}=2
    }
    \label{eq:discussion-local-equivalence}
\end{equation}
Integer-separated indicial roots, logarithmic Frobenius solutions, and solvability of the resonant recurrence are classical features of Fuchsian theory. The role of Eq.~\eqref{eq:discussion-local-equivalence} here is to connect explicitly, within a single normalization and notation, the recurrence-matrix rank loss commonly used in the pole-skipping literature with these classical Frobenius data.
	 
The main global step comes next. Away from resonance, writing
\begin{equation}
    H_{\rm in}=AR_1+BR_2,
\end{equation}
and assuming that the boundary basis is linearly independent, we obtained
\begin{equation}
    \boxed{
        A=0
        \quad\Longleftrightarrow\quad
        R_2\in C^\infty\text{ at the horizon}
    }
\end{equation}
and
\begin{equation}
    \boxed{
        B=0
        \quad\Longleftrightarrow\quad
        R_1\in C^\infty\text{ at the horizon}
    }.
\end{equation}
Thus the boundary conditions $A=0$ and $B=0$ are interpreted as the conditions under which the opposite boundary-normalized solution lies in the ingoing horizon-smooth direction.
	 
As resonance is approached, the nonresonant ingoing Frobenius coefficients can develop a simple parameter pole. Multiplying by the factor $g_N$ required by the recurrence to remove this pole and defining
\begin{equation}
    H_{\rm in}^{\rm reg}=g_NH_{\rm in},
\end{equation}
we obtain at exact resonance the identity
\begin{equation}
    \boxed{
        H_{\rm in}^{\rm reg}(x;N,\lambda)
        =
        \frac{u_N(N,\lambda)}{N}
        \Omega_N(N,\lambda)
        H_{\rm out}^{(N)}(x;\lambda)
    }
    \label{eq:discussion-exact-identity}
\end{equation}
	 
Continuing this identity to the boundary gives
\begin{align}
    \widehat A(N,\lambda)
    &=
    \frac{u_N(N,\lambda)}{N}\Omega_N(N,\lambda)A_{\rm out},\\
    \widehat B(N,\lambda)
    &=
    \frac{u_N(N,\lambda)}{N}\Omega_N(N,\lambda)B_{\rm out}.
\end{align}
Since $H_{\rm out}^{(N)}$ is nonzero and $R_1,R_2$ are independent, it follows that
\begin{equation}
    \boxed{
        \Omega_N(P_*)=0
        \quad\Longleftrightarrow\quad
        \widehat A(P_*)=\widehat B(P_*)=0
    }
    \label{eq:discussion-global-dictionary}
\end{equation}
	 
The local solvability function $\Omega_N$ therefore connects the degeneracy of the smooth horizon solution space and the simultaneous vanishing of the regularized source and response coefficients at the boundary within a single parameter-dependent connection problem, rather than leaving them as two unrelated phenomena.

	 \subsection{Exact \texorpdfstring{$0/0$}{0/0} and resonant limits along parameter-space curves}
	 \label{subsec:discussion-zero-path}
	 
A central interpretive point is to distinguish the two kinds of boundary coefficients that appear at a pole-skipping point.
	 
At the exact point $P_*$, if
\begin{equation}
    \Omega_N(P_*)=0,
\end{equation}
then
\begin{equation}
    H_{\rm in}^{\rm reg}(P_*)=0,
\end{equation}
and hence
\begin{equation}
    \boxed{
        (\widehat A(P_*),\widehat B(P_*))=(0,0)
    }.
\end{equation}
This pair consists of the boundary coefficients of the zero solution, and $[0:0]$ does not define a projective point in $\mathbb{CP}^1$. Thus the regularized source--response ratio is not determined pointwise at the exact pole-skipping point.
	 
By contrast, choose a parameter-space curve
\begin{equation}
    \gamma(s),
    \qquad
    \gamma(0)=P_*,
\end{equation}
and suppose that
\begin{equation}
    \left.\frac{d}{ds}D_N(\gamma(s))\right|_{s=0}\neq0.
\end{equation}
Then the limit of the nonresonant ingoing family normalized by $h_0=1$ is
\begin{equation}
    R_{\rm sel}^{(\gamma)}
    =
    H_0^{(N)}
    +
    \alpha_N^{(\gamma)}H_{\rm out}^{(N)},
\end{equation}
where
\begin{equation}
    \boxed{
        \alpha_N^{(\gamma)}
        =
        -\frac{
            \left.\dfrac{d}{ds}\Omega_N(\gamma(s))\right|_{s=0}
        }{
            \left.\dfrac{d}{ds}D_N(\gamma(s))\right|_{s=0}
        }
    }
    \label{eq:discussion-alpha}
\end{equation}
	 
Here $R_{\rm sel}^{(\gamma)}$ is not the zero solution but an actual nonzero resonant solution normalized by $h_0=1$. Writing at the boundary
\begin{equation}
    R_{\rm sel}^{(\gamma)}=A_\gamma R_1+B_\gamma R_2,
\end{equation}
we have
\begin{equation}
    (A_\gamma,B_\gamma)\neq(0,0),
\end{equation}
so the projective class
\begin{equation}
    [A_\gamma:B_\gamma]\in\mathbb{CP}^1
\end{equation}
is well defined.
	 
With the standard holographic normalization,
\begin{equation}
    G_R\propto\frac{B}{A},
\end{equation}
and therefore the Green-function limit selected by the curve $\gamma$ is
\begin{equation}
    \boxed{
        G_R^{(\gamma)}
        \propto
        \frac{B_\gamma}{A_\gamma}
        =
        \frac{B_0+\alpha_N^{(\gamma)}B_{\rm out}}{A_0+\alpha_N^{(\gamma)}A_{\rm out}}
    }
    \label{eq:discussion-green-path}
\end{equation}
	 
Consequently,
\begin{equation}
    (\widehat A,\widehat B)=(0,0)
\end{equation}
and
\begin{equation}
    (A_\gamma,B_\gamma)\neq(0,0)
\end{equation}
are not contradictory. The former is the exact-point value of the parameter-dependent family after multiplication by the factor that removes the simple parameter pole; the latter consists of the boundary coefficients of the actual resonant solution selected when the $0/0$ point is approached along a specified direction.
	 
This distinction gives a bulk solution-space interpretation of the direction dependence of pole-skipping. One may additionally impose the physical condition that the resonant solution contain no component coming from the second nonresonant Frobenius solution. Although $H_{\rm out}^{(N)}$ is smooth at exact resonance, it is obtained by taking that second nonresonant Frobenius solution to resonance. Under the additional condition one selects
\begin{equation}
    \alpha_N^{(\gamma)}=0.
\end{equation}
For a first-order nondegenerate curve this is equivalent to
\begin{equation}
    \left.d\Omega_N\right|_{P_*}(\dot\gamma_*)=0.
\end{equation}
Thus arbitrary approach curves may select different values of $\alpha_N^{(\gamma)}$ and different projective classes $[A_\gamma:B_\gamma]$ within the two-dimensional smooth resonant solution space, whereas the additional physical condition selects the solution with $\alpha_N^{(\gamma)}=0$. This selection is additional physical input and does not follow automatically from local Fuchsian solvability or horizon smoothness.

The simultaneous zero does not, by itself, guarantee nontrivial first-order direction dependence. For example, in an $(\omega,k)$ parameter slice, if
\begin{equation}
    \det
    \begin{pmatrix}
        \partial_\omega\widehat A & \partial_k\widehat A\\
        \partial_\omega\widehat B & \partial_k\widehat B
    \end{pmatrix}_{P_*}
    \neq0,
    \label{eq:discussion-direction-condition}
\end{equation}
then different first-order approach slopes select different Green-function limits. If this determinant vanishes, however, linear order alone is insufficient to decide the direction dependence and higher-order parameter dependence must be examined.

\subsection{Three distinct roles}
\label{subsec:discussion-three-levels}
	 
The preceding results are most simply summarized by distinguishing the roles of three quantities.
	 
First,
\begin{equation}
    \boxed{D_N(P_*)=0}
\end{equation}
determines the resonance at which the indicial roots differ by an integer.
	 
Second,
\begin{equation}
    \boxed{\Omega_N(P_*)=0}
\end{equation}
determines solvability of the resonant recurrence, removes the logarithmic term at exact resonance, and ensures the existence of two independent smooth horizon solutions. Through horizon-to-boundary continuation it simultaneously gives
\begin{equation}
    \widehat A(P_*)=\widehat B(P_*)=0.
\end{equation}
	 
Third,
\begin{equation}
    \boxed{
        \left.\frac{d}{ds}\Omega_N(\gamma(s))\right|_{s=0}
    }
\end{equation}
is not a new solvability condition at the exact point. Rather, it determines which solution in the already existing two-dimensional smooth resonant solution space is selected as the limit of the nonresonant ingoing family along the parameter-space curve $\gamma$.
	 
Thus the three roles may be summarized as
\begin{equation}
    \boxed{
        \begin{array}{rcl}
            D_N(P_*)=0
            &:&
            \text{resonance},
            \\[2mm]
            \Omega_N(P_*)=0
            &:&
            \text{two smooth resonant solutions and }\widehat A(P_*)=\widehat B(P_*)=0,
            \\[2mm]
            \left.d\Omega_N\right|_{P_*}(\dot\gamma_*)
            &:&
            \text{selection of a solution along the approach curve }\gamma
        \end{array}
    }
    \label{eq:discussion-three-level-summary}
\end{equation}
	 
This separation prevents one from conflating the solution-space degeneracy at exact resonance with the parameter-dependent limiting procedure.

\subsection{Scope and assumptions}
\label{subsec:discussion-scope}
	 
The argument does not rely on a particular hypergeometric equation or on a closed-form solution. Its essential inputs are the local Frobenius recurrence, parameter-dependent analytic continuation, and the existence of two independent boundary channels.
	 
More specifically, the analysis uses the following structures.
	 
First, the horizon is nonextremal, and the metric coefficients and radial-equation coefficients are assumed analytic near the horizon in ingoing Eddington--Finkelstein coordinates. Under these conditions the horizon is a regular singular point with indicial roots
\begin{equation}
    0,
    \qquad
    \nu=\frac{i\omega}{\kappa}.
\end{equation}
	 
Second, the radial problem is a second-order linear scalar equation. Hence the local and boundary solution spaces are both two-dimensional, and the connection problem between the two bases can be described by a $2\times2$ matrix.
	 
Third, the boundary must admit two independent asymptotic channels corresponding to source and response. The same argument applies at an irregular singular boundary provided one fixes an asymptotic ray or Stokes sector and the two independent asymptotic channels and their connection coefficients are well defined.
	 
Fourth, the global continuation data between the horizon and the boundary must be fixed. If additional singular points, branch cuts, or nontrivial monodromy occur in between, the continuation path and the required monodromy data must be included as part of the global problem.
	 
Fifth, near resonance the boundary basis and the horizon-to-boundary connection map must be chosen to depend analytically on the parameters. 	 
These assumptions restrict the domain of the theorem, but importantly they do not require exact solvability. The mechanism established here depends on the local Fuchsian structure and the global connection between solution bases rather than on a particular special-function representation.

\subsection{Limitations and extensions of the present framework}
\label{subsec:discussion-outlook}
	 
The results are restricted to second-order scalar problems satisfying the assumptions above. Several important generalizations require separate analysis.
	 
First, the present argument does not apply directly to an extremal horizon. We have assumed a nonextremal horizon with
\begin{equation}
    A_{\rm met}'(r_h)\neq0,
\end{equation}
which yields a finite surface gravity $\kappa$ and a regular-singular Frobenius structure. In an extremal limit the near-horizon singularity structure and frequency dependence can change, so the resonance mechanism based on
\begin{equation}
    D_n=n(n-\nu)
\end{equation}
should not be assumed without a new local analysis. Extremal pole-skipping must therefore be reconsidered from the singularity structure appropriate to the extremal horizon.
	 
Second, for coupled perturbations a single scalar recurrence and a single solvability function $\Omega_N$ may not describe the full resonant structure. When several field components are coupled, the recurrence becomes a matrix system and solvability at the resonant order must be generalized to the corresponding linear-algebra problem. The general viewpoint of connecting local rank loss with boundary-channel data through a global connection problem remains natural, but the scalar theorem proved here does not by itself establish the corresponding matrix statement.
	 
Third, for higher-order radial equations or problems with more than two independent boundary channels, the boundary coefficient classes may have to be described in a higher-dimensional projective space rather than in $\mathbb{CP}^1$. The structure of resonant limits along parameter-space curves and of the associated boundary ambiguity can then be richer than in the present problem with two independent boundary solutions.
	 
Fourth, if the first-order directional ambiguity is degenerate, so that the determinant in Eq.~\eqref{eq:discussion-direction-condition} vanishes, higher parameter jets of $\widehat A$ and $\widehat B$ become relevant. If a pole-skipping point cannot be classified by a linear intersection, the natural next step is to analyze its quadratic or higher-order zero structure.
	 
These generalizations lie beyond the scope of the present paper, but the current results make clear which quantities must be tracked: extract the solvability data from the local resonant recurrence, remove the parameter singularity of the parameter-dependent ingoing family, continue that family to the independent boundary solutions, and compare the resulting projective class of the boundary coefficient pair.

\subsection{Conclusion}
\label{subsec:discussion-conclusion}
	 
The central result of this work is an explicit connection between local horizon resonance and the boundary data of pole-skipping.
	 
At exact resonance we obtain the horizon-to-boundary dictionary
\begin{equation}
    \boxed{
        \small
        \det M_N=0
        \Longleftrightarrow
        \Omega_N=0
        \Longleftrightarrow
        \ell_N=0
        \Longleftrightarrow
        \dim\mathcal S_{\rm smooth}=2
        \Longleftrightarrow
        \widehat A=\widehat B=0
    }
    \label{eq:discussion-final-dictionary}
\end{equation}
	 
We have also distinguished the simultaneous zero at the exact point from the boundary coefficients of a nonzero solution. The former states that, after multiplication by the analytic factor that removes the simple parameter pole, the regularized ingoing family has the zero solution as its value at the exact point. By contrast, once an approach curve is chosen, the nonresonant ingoing family with the normalization $h_0=1$ has the nonzero resonant limit
\begin{equation}
    R_{\rm sel}^{(\gamma)}
    =
    H_0^{(N)}
    +
    \alpha_N^{(\gamma)}H_{\rm out}^{(N)},
\end{equation}
whose boundary coefficient pair defines
\begin{equation}
    [A_\gamma:B_\gamma]\in\mathbb{CP}^1.
\end{equation}
	 
The $0/0$ ambiguity of pole-skipping is therefore not separate from the local enlargement of the smooth horizon solution space at exact resonance. Under the stated global continuation assumptions, the same resonant solvability data connect the local horizon degeneracy, the simultaneous vanishing of the regularized boundary connection coefficients, and the resonant Green-function limit along an approach curve within a single structure.

\appendix
	
\section{Convergence of parameter-dependent Frobenius series}
\label{app:frobenius-convergence}
This appendix verifies that the parameter-dependent Frobenius series used in Section~\ref{sec:dictionary} converge in a common neighborhood of the horizon. In particular, we show that the coefficients of the regularized ingoing family extend analytically across resonance in parameter space and that the resulting series converges locally uniformly with respect to the parameters.
	
The purpose of this result is to promote the coefficientwise relation obtained in Theorem~\ref{thm:dictionary-resonance-identity} to an identity between actual analytic solutions.

	\subsection{Parameter-dependent Fuchs equation}
	\label{app:frobenius-setup}
	
Abbreviate the parameters by
\begin{equation}
    \zeta=(\nu,\lambda).
\end{equation}
Choose a sufficiently small complex parameter neighborhood $U$ of an integer resonance
\begin{equation}
    \zeta_*=(N,\lambda_*),
    \qquad N\in\mathbb N.
\end{equation}
Near the horizon, write the radial equation as
\begin{equation}
    L_\zeta R
    :=
    x^2R''+xp(x;\zeta)R'+q(x;\zeta)R=0.
    \label{eq:app-frobenius-equation}
\end{equation}
Here
\begin{equation}
    p(x;\zeta)=\sum_{j=0}^{\infty}p_j(\zeta)x^j,
    \qquad
    q(x;\zeta)=\sum_{j=0}^{\infty}q_j(\zeta)x^j,
    \label{eq:app-frobenius-pq}
\end{equation}
and we assume that for some $a>0$, both $p$ and $q$ are analytic in $x$ and $\zeta$ on
\begin{equation}
    |x|<a,
    \qquad
    \zeta\in U.
    \label{eq:app-common-domain}
\end{equation}
	
Keeping the standard Fuchs normalization used in Section~\ref{sec:dictionary},
\begin{equation}
    p_0(\zeta)=1-\nu,
    \qquad
    q_0(\zeta)=0.
\end{equation}
The recurrence for the exponent-zero Frobenius branch is therefore
\begin{equation}
    D_n(\nu)h_n
    +
    \sum_{m=0}^{n-1}K_{n,m}(\zeta)h_m
    =0,
    \label{eq:app-frobenius-recurrence}
\end{equation}
where
\begin{equation}
    D_n(\nu)=n(n-\nu),
    \label{eq:app-Dn}
\end{equation}
and
\begin{equation}
    K_{n,m}(\zeta)
    =
    mp_{n-m}(\zeta)+q_{n-m}(\zeta).
    \label{eq:app-Knm}
\end{equation}
	
Shrinking $U$ if necessary, assume
\begin{equation}
    |\nu-N|<\frac12.
    \label{eq:app-nu-neighborhood}
\end{equation}
Then the only positive-integer order at which the recurrence coefficient $D_n(\nu)$ can vanish in $U$ is $n=N$.

\subsection{Analyticity of the regularized coefficients}
\label{app:frobenius-coeff-analyticity}
	
Let the nonresonant exponent-zero solution be
\begin{equation}
    H_{\rm in}(x;\zeta)
    =
    \sum_{n=0}^{\infty}h_n(\zeta)x^n,
    \qquad h_0=1.
    \label{eq:app-Hin}
\end{equation}
	
For $1\leq n<N$,
\begin{equation}
    D_n(N)=n(n-N)\neq0,
\end{equation}
so successive solution of the recurrence shows that
\begin{equation}
    h_0,h_1,\ldots,h_{N-1}
\end{equation}
are analytic near $\zeta_*$.
	
At $n=N$,
\begin{equation}
    N(N-\nu)h_N+\Omega_N(\zeta)=0,
\end{equation}
so
\begin{equation}
    h_N(\zeta)
    =
    \frac{\Omega_N(\zeta)}{N(\nu-N)}.
    \label{eq:app-hN-pole}
\end{equation}
	
As in Section~\ref{subsec:dictionary-regularization}, choose an analytic factor
\begin{equation}
    g_N(\zeta)
    =
    (\nu-N)u_N(\zeta),
    \qquad
    u_N(\zeta_*)\neq0,
    \label{eq:app-gN}
\end{equation}
which removes the simple parameter pole, and define
\begin{equation}
    H_{\rm in}^{\rm reg}
    =
    g_NH_{\rm in}
    =
    \sum_{n=0}^{\infty}\widehat h_n(\zeta)x^n.
    \label{eq:app-Hreg}
\end{equation}
Thus
\begin{equation}
    \widehat h_n=g_Nh_n.
\end{equation}
	
\begin{proposition}
    \label{prop:app-coeff-analyticity}
    For every $n\geq0$, the coefficient $\widehat h_n(\zeta)$ can be defined analytically in a sufficiently small parameter neighborhood containing $\zeta_*$. 
\end{proposition}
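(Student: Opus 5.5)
I will argue by induction on $n$, splitting the index range into three parts: $n<N$, $n=N$, and $n>N$. For $n<N$ the argument is direct. Each $h_n$ is obtained from the recurrence \eqref{eq:app-frobenius-recurrence} by dividing by $D_n(\nu)=n(n-\nu)$. On $U$, the bound \eqref{eq:app-nu-neighborhood} gives $|n-\nu|\geq |n-N|-\tfrac12\geq\tfrac12$, so $D_n$ is nonvanishing and analytic there. The $K_{n,m}$ are analytic on $U$ because the $p_j,q_j$ are (Cauchy estimates on $|x|<a$). By induction, $h_0,\dots,h_{N-1}$ are therefore analytic on all of $U$, and so is $\widehat h_n=g_Nh_n$, since $g_N$ is analytic.

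For $n=N$, the remainder $\Omega_N(\zeta)=\sum_{m<N}K_{N,m}h_m$ is analytic on $U$ as a finite sum of analytic functions. On the complement of the hypersurface $\{\nu=N\}$, \eqref{eq:app-hN-pole} gives $\widehat h_N=(\nu-N)u_N\cdot\Omega_N/[N(\nu-N)]=u_N\Omega_N/N$. I take this right-hand side as the definition of $\widehat h_N$ on all of $U$. It is analytic, and it agrees with $g_Nh_N$ off the resonant hypersurface. By the identity theorem the extension is unique, because $\{\nu\neq N\}$ is dense and connected in $U$ (after shrinking $U$ to a polydisk). This is exactly the cancellation that \eqref{eq:dictionary-hhatN} records.

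For $n>N$, I would avoid the affine representation \eqref{eq:dictionary-affine-tail} and instead work with the regularized recurrence directly. Multiplying \eqref{eq:app-frobenius-recurrence} by $g_N$ gives
\begin{equation*}
D_n(\nu)\widehat h_n+\sum_{m=0}^{n-1}K_{n,m}\widehat h_m=0 ,
\end{equation*}
which is linear and homogeneous in the $\widehat h$'s. For $n>N$ we again have $|D_n|\geq n/2>0$ on $U$. Strong induction then shows that $\widehat h_n=-D_n^{-1}\sum_{m<n}K_{n,m}\widehat h_m$ is analytic on $U$ once all lower $\widehat h_m$ are, and the base cases are the coefficients already handled for $n\le N$. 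This extension also coincides with $g_Nh_n$ off resonance, since both satisfy the same recurrence with the same data there.

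The one step that needs care is the $n=N$ step, specifically the requirement that a single neighborhood $U$ works for every $n$. The induction only ever divides by $D_n$ with $n\neq N$, and \eqref{eq:app-nu-neighborhood} bounds all of these away from zero uniformly in $n$. So no shrinking of $U$ depending on $n$ is needed, and this uniform lower bound is what the subsequent majorant argument will use.
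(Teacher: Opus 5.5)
Your proposal is correct and follows the same route as the paper's proof. It uses direct analyticity of $g_Nh_n$ for $n<N$, the cancellation $\widehat h_N=u_N\Omega_N/N$ at $n=N$, and induction on the $g_N$-multiplied recurrence for $n>N$; the extra details you supply (the uniform bound $|D_n|\geq n/2$ for $n\neq N$ from $|\nu-N|<\tfrac12$, and the identity-theorem argument that the extension across $\{\nu=N\}$ agrees with $g_Nh_n$ off resonance) are exactly what the paper leaves implicit.
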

	
\begin{proof}
For $n<N$, both $h_n$ and $g_N$ are analytic, so $\widehat h_n=g_Nh_n$ is analytic.
		
At $n=N$, Eq.~\eqref{eq:app-hN-pole} gives
\begin{equation}
    \widehat h_N
    =
    g_Nh_N
    =
    \frac{u_N(\zeta)}{N}\Omega_N(\zeta),
    \label{eq:app-hhatN}
\end{equation}
which is again analytic.
		
Now let $n>N$. By taking $U$ sufficiently small,
\begin{equation}
    D_n(\nu)\neq0,
    \qquad n>N,
    \qquad \zeta\in U.
\end{equation}
Multiplying Eq.~\eqref{eq:app-frobenius-recurrence} by $g_N$ gives
\begin{equation}
    D_n(\nu)\widehat h_n
    +
    \sum_{m=0}^{n-1}K_{n,m}(\zeta)\widehat h_m
    =0.
    \label{eq:app-hhat-recurrence}
\end{equation}
Therefore
\begin{equation}
    \widehat h_n
    =
    -\frac{1}{D_n(\nu)}
    \sum_{m=0}^{n-1}K_{n,m}(\zeta)\widehat h_m.
    \label{eq:app-hhat-rec-solved}
\end{equation}
If $\widehat h_0,\ldots,\widehat h_{n-1}$ are analytic, the right-hand side is analytic, and hence so is $\widehat h_n$. The result follows by induction.
\end{proof}
	
Thus, although the nonresonant coefficients $h_n$ themselves can have a parameter pole for $n\geq N$, the coefficients $\widehat h_n$ obtained after removing the simple parameter pole extend analytically across resonance.

\subsection{A common disk of convergence}
\label{app:frobenius-common-disk}
	
Proposition~\ref{prop:app-coeff-analyticity} shows that every coefficient $\widehat h_n(\zeta)$ extends analytically across resonance. We now show that the corresponding power series converges in a common disk in the $x$-plane.
	
\begin{proposition}[Local uniform convergence]
    \label{prop:app-uniform-convergence}
    After choosing a sufficiently small neighborhood $U$ of $\zeta_*$, there exists $r>0$ such that
    \begin{equation}
        H_{\rm in}^{\rm reg}(x;\zeta)
        =
        \sum_{n=0}^{\infty}\widehat h_n(\zeta)x^n
    \end{equation}
    is defined for
    \begin{equation}
        |x|<r,
        \qquad
        \zeta\in U.
    \end{equation}
    More precisely, for every compact subset $K\Subset U$ and every $0<r_0<r$, the series converges uniformly on
    \begin{equation}
        K\times\{|x|\leq r_0\}.
    \end{equation}
\end{proposition}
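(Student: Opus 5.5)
The plan is a Cauchy majorant argument applied to the regularized recurrence \eqref{eq:app-hhat-recurrence}, with the first $N$ coefficients treated as given analytic seed data. Shrink $U$ so that its closure is compact, all functions involved are analytic on a neighborhood of $\overline U$, and \eqref{eq:app-nu-neighborhood} holds. Fix $0<\rho<a$. By the Cauchy estimates, applied uniformly on $\overline U$, there is a constant $M$ with
\begin{equation}
	|p_j(\zeta)|\le M\rho^{-j},\qquad |q_j(\zeta)|\le M\rho^{-j},\qquad j\ge1,\ \zeta\in\overline U .
\end{equation}
Hence $|K_{n,m}(\zeta)|\le M(m+1)\rho^{-(n-m)}$. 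By Proposition~\ref{prop:app-coeff-analyticity}, the coefficients $\widehat h_0,\dots,\widehat h_N$ are analytic on $U$, so they are bounded on $\overline U$ by some constant $C_0$.

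The key estimate is a lower bound on the pivots for $n>N$. Since $|\nu-N|<\tfrac12$, we have $|n-\nu|\ge n-N-\tfrac12$. For $n\ge N+1$ this gives $|D_n(\nu)|\ge n(n-N-\tfrac12)\ge c\,n^2$ for a constant $c>0$ depending only on $N$; indeed $(n-N-\tfrac12)/n$ is bounded below by $1/(2N+2)$ on $n\ge N+1$. Therefore
\begin{equation}
	\frac{|K_{n,m}|}{|D_n|}\le \frac{M(m+1)}{c\,n^2}\,\rho^{-(n-m)}\le \frac{M}{c\,n}\,\rho^{-(n-m)},
\end{equation}
uniformly in $\zeta\in\overline U$. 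The $1/n$ decay here is what absorbs the factor $m\le n$ coming from the derivative term.

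Next, define a majorant sequence $B_n$ by setting $B_n=C_0$ for $n\le N$ and, for $n>N$,
\begin{equation}
	B_n=\frac{M}{c\,n}\sum_{m=0}^{n-1}\rho^{-(n-m)}B_m .
\end{equation}
By induction from \eqref{eq:app-hhat-rec-solved}, $|\widehat h_n(\zeta)|\le B_n$ on $\overline U$. It then suffices to show $B_n\le C\,\sigma^{-n}$ for some $\sigma>0$. The cleanest way is to compare $B_n$ with $\rho^{-n}$ times a sequence of the form
\begin{equation}
	e_n=\frac{M'}{n}\sum_{m<n}e_m ,
\end{equation}
which grows at most polynomially, like $n^{M'}$. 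This can be seen either by a direct induction with the ansatz $e_n\le C\,n^{M'}$, or from the standard generating-function bound for the Frobenius majorant $x\,E'(x)\le \dots$ used in the classical proof of convergence of Frobenius series in Coddington--Levinson. Either route gives $|\widehat h_n|\le C\,n^{M'}\rho^{-n}$, so the series converges for $|x|<r:=\rho$, uniformly on $K\times\{|x|\le r_0\}$ for every $r_0<\rho$. Uniformity is automatic, since all bounds are uniform on $\overline U\supset K$. By Weierstrass, the sum is then jointly analytic in $(x,\zeta)$ and solves $L_\zeta$, because it does so coefficientwise.

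The main obstacle is making the majorant induction close with a constant that is uniform in $\zeta$, even though $D_N$ vanishes inside $U$. I expect this to be handled by two choices. First, start the induction only at $n=N+1$, using the already-analytic seeds $\widehat h_0,\dots,\widehat h_N$, whose bound rests on \eqref{eq:app-hhatN}. Second, use the neighborhood condition \eqref{eq:app-nu-neighborhood} to bound $|D_n|$ below by $c\,n^2$ uniformly. If the polynomial-growth step for $e_n$ proves awkward, I would instead shrink $\rho$ slightly to $\rho'<\rho$ and absorb $n^{M'}$ into $(\rho/\rho')^n$.
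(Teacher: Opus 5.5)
Your proposal is correct and follows essentially the same route as the paper: Cauchy estimates for $p_j,q_j$, a quadratic lower bound on $|D_n|$ away from $n=N$, an inductive majorant for the regularized recurrence, and the Weierstrass $M$-test. The only difference is how the induction closes. The paper takes $r<a_1$, so the factor $\theta^k$ with $\theta=r/a_1<1$ gives $(An+B)/|D_n|\le 1$ for large $n$ and a pure geometric bound $Cr^{-n}$; this is your fallback of shrinking $\rho$. You instead keep the full Cauchy radius and absorb the resulting $n^{M'}$ growth, which yields the same conclusion, since either way any radius below $a$ is attainable.
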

	
\begin{proof}
Fix $K\Subset U$ and choose
\begin{equation}
    0<a_1<a.
\end{equation}
Since $p$ and $q$ are continuous on $K\times\{|x|=a_1\}$,
\begin{equation}
    M_p:=\sup_{\substack{\zeta\in K\\|x|=a_1}}|p(x;\zeta)|<\infty,
    \qquad
    M_q:=\sup_{\substack{\zeta\in K\\|x|=a_1}}|q(x;\zeta)|<\infty.
\end{equation}
By the Cauchy coefficient estimate, for every $j\geq0$ and $\zeta\in K$,
\begin{equation}
    |p_j(\zeta)|\leq M_pa_1^{-j},
    \qquad
    |q_j(\zeta)|\leq M_qa_1^{-j}.
    \label{eq:app-Cauchy-bound}
\end{equation}
		
Since $K$ is compact,
\begin{equation}
    V:=\sup_{\zeta\in K}|\nu|<\infty.
\end{equation}
For sufficiently large $n$, for example
\begin{equation}
    n\geq n_0>2V,
\end{equation}
we have
\begin{align}
    |D_n(\nu)|
    &=n|n-\nu|\nonumber\\
    &\geq n(n-|\nu|)\nonumber\\
    &\geq\frac12n^2.
    \label{eq:app-D-lower-bound}
\end{align}
		
Choose
\begin{equation}
    0<r<a_1
\end{equation}
and set
\begin{equation}
    \theta:=\frac{r}{a_1}<1.
\end{equation}
Assume inductively that
\begin{equation}
    |\widehat h_m(\zeta)|\leq Cr^{-m}.
    \label{eq:app-majorant-assumption}
\end{equation}
		
Setting $k=n-m$, Eqs.~\eqref{eq:app-Knm} and \eqref{eq:app-Cauchy-bound} give
\begin{equation}
    |K_{n,m}(\zeta)|
    \leq
    (mM_p+M_q)a_1^{-(n-m)}.
\end{equation}
Therefore Eq.~\eqref{eq:app-hhat-rec-solved} implies
\begin{align}
    |\widehat h_n|
    &\leq
    \frac{1}{|D_n|}
    \sum_{m=0}^{n-1}|K_{n,m}|\,|\widehat h_m|\nonumber\\
    &\leq
    \frac{Cr^{-n}}{|D_n|}
    \sum_{k=1}^{n}\left[(n-k)M_p+M_q\right]\theta^k.
    \label{eq:app-majorant-step}
\end{align}
		
Since
\begin{equation}
    (n-k)M_p+M_q\leq nM_p+M_q,
\end{equation}
we obtain
\begin{align}
    \sum_{k=1}^{n}\left[(n-k)M_p+M_q\right]\theta^k
    &\leq(nM_p+M_q)\sum_{k=1}^{n}\theta^k\nonumber\\
    &\leq(nM_p+M_q)\frac{\theta}{1-\theta}.
\end{align}
Hence
\begin{equation}
    \sum_{k=1}^{n}\left[(n-k)M_p+M_q\right]\theta^k
    \leq An+B,
    \label{eq:app-sum-bound}
\end{equation}
where
\begin{equation}
    A=\frac{M_p\theta}{1-\theta},
    \qquad
    B=\frac{M_q\theta}{1-\theta}.
\end{equation}
Using Eq.~\eqref{eq:app-D-lower-bound}, for sufficiently large $n$,
\begin{equation}
    \frac{An+B}{|D_n|}\leq1.
\end{equation}
Therefore
\begin{equation}
    |\widehat h_n(\zeta)|\leq Cr^{-n}.
    \label{eq:app-final-majorant}
\end{equation}
		
The remaining finitely many coefficients with $n<n_0$ are bounded on $K$ by Proposition~\ref{prop:app-coeff-analyticity}. By choosing $C$ sufficiently large, Eq.~\eqref{eq:app-final-majorant} can therefore be made to hold for every $n\geq0$.
		
For any
\begin{equation}
    0<r_0<r,
\end{equation}
we then have
\begin{equation}
    |\widehat h_n(\zeta)x^n|
    \leq
    C\left(\frac{r_0}{r}\right)^n,
    \qquad
    |x|\leq r_0,
    \qquad
    \zeta\in K.
\end{equation}
The right-hand side is a geometric series, so the Weierstrass $M$-test implies that $\sum_n\widehat h_nx^n$ converges uniformly on $K\times\{|x|\leq r_0\}$.
\end{proof}

	\subsection{Analytic solution identity}
	\label{app:frobenius-analytic-identity}
	
By Propositions~\ref{prop:app-coeff-analyticity} and \ref{prop:app-uniform-convergence},
\begin{equation}
    H_{\rm in}^{\rm reg}(x;\zeta)
    =
    \sum_{n=0}^{\infty}\widehat h_n(\zeta)x^n
\end{equation}
defines a local solution analytic in both $x$ and $\zeta$ on a sufficiently small common disk. Each term $\widehat h_n(\zeta)x^n$ is analytic in $(x,\zeta)$, and the series converges locally uniformly on compact subsets by Proposition~\ref{prop:app-uniform-convergence}; hence its sum is analytic.
	
At exact resonance $\nu=N$, Section~\ref{sec:dictionary} gives
\begin{equation}
    \widehat h_0=\cdots=\widehat h_{N-1}=0
\end{equation}
and
\begin{equation}
    \widehat h_N
    =
    \frac{u_N(N,\lambda)}{N}\Omega_N(N,\lambda).
\end{equation}
All coefficients with $n>N$ are then uniquely determined from this $n=N$ datum by the resonant recurrence.
	
Normalize the larger-root Frobenius solution as
\begin{equation}
    H_{\rm out}^{(N)}(x;\lambda)
    =
    x^N\left(1+b_1x+b_2x^2+\cdots\right).
\end{equation}
The Taylor coefficients of the two sides then agree at every order. Since both series actually converge, uniqueness of Taylor series implies that
\begin{equation}
    \boxed{
        H_{\rm in}^{\rm reg}(x;N,\lambda)
        =
        \frac{u_N(N,\lambda)}{N}\Omega_N(N,\lambda)
        H_{\rm out}^{(N)}(x;\lambda)
    }
    \label{eq:app-final-resonance-identity}
\end{equation}
holds as an identity between analytic functions in a sufficiently small neighborhood of the horizon.
	
Thus the identity at exact resonance used in Section~\ref{sec:dictionary} is not merely a formal recurrence relation. It is an equality between actual local analytic solutions of the same second-order radial equation.

\backmatter

\bmhead{Funding}

This research was supported by the Basic Science Research Program through
the National Research Foundation of Korea (NRF), funded by the Ministry of
Education (Grant No. RS-2026-25560158).

\bibliography{references_32_audited}

@article{SonStarinets2002,
  author = {Son, Dam T. and Starinets, Andrei O.},
  title = {Minkowski-space correlators in {AdS/CFT} correspondence: recipe and applications},
  journal = {JHEP},
  volume = {09},
  pages = {042},
  year = {2002},
  doi = {10.1088/1126-6708/2002/09/042},
  eprint = {hep-th/0205051},
  archivePrefix = {arXiv}
}

@article{HerzogSon2003,
  author = {Herzog, Christopher P. and Son, Dam T.},
  title = {Schwinger--Keldysh propagators from {AdS/CFT} correspondence},
  journal = {JHEP},
  volume = {03},
  pages = {046},
  year = {2003},
  doi = {10.1088/1126-6708/2003/03/046},
  eprint = {hep-th/0212072},
  archivePrefix = {arXiv}
}

@article{SkenderisVanRees2008,
  author = {Skenderis, Kostas and van Rees, Balt C.},
  title = {Real-Time Gauge/Gravity Duality},
  journal = {Phys. Rev. Lett.},
  volume = {101},
  pages = {081601},
  year = {2008},
  doi = {10.1103/PhysRevLett.101.081601},
  eprint = {0805.0150},
  archivePrefix = {arXiv},
  primaryClass = {hep-th}
}

@article{VanRees2009,
  author = {van Rees, Balt C.},
  title = {Real-time gauge/gravity duality and ingoing boundary conditions},
  journal = {Nucl. Phys. B Proc. Suppl.},
  volume = {192--193},
  pages = {193--196},
  year = {2009},
  doi = {10.1016/j.nuclphysbps.2009.07.078},
  eprint = {0902.4010},
  archivePrefix = {arXiv},
  primaryClass = {hep-th}
}

@article{KovtunStarinets2005,
  author = {Kovtun, Pavel K. and Starinets, Andrei O.},
  title = {Quasinormal modes and holography},
  journal = {Phys. Rev. D},
  volume = {72},
  pages = {086009},
  year = {2005},
  doi = {10.1103/PhysRevD.72.086009},
  eprint = {hep-th/0506184},
  archivePrefix = {arXiv}
}

@article{Maldacena1998,
  author = {Maldacena, Juan M.},
  title = {The Large {N} limit of superconformal field theories and supergravity},
  journal = {Adv. Theor. Math. Phys.},
  volume = {2},
  pages = {231--252},
  year = {1998},
  doi = {10.4310/ATMP.1998.v2.n2.a1},
  eprint = {hep-th/9711200},
  archivePrefix = {arXiv}
}

@article{Witten1998,
  author = {Witten, Edward},
  title = {Anti-de Sitter space and holography},
  journal = {Adv. Theor. Math. Phys.},
  volume = {2},
  pages = {253--291},
  year = {1998},
  doi = {10.4310/ATMP.1998.v2.n2.a2},
  eprint = {hep-th/9802150},
  archivePrefix = {arXiv}
}

@article{GubserKlebanovPolyakov1998,
  author = {Gubser, Steven S. and Klebanov, Igor R. and Polyakov, Alexander M.},
  title = {Gauge theory correlators from non-critical string theory},
  journal = {Phys. Lett. B},
  volume = {428},
  pages = {105--114},
  year = {1998},
  doi = {10.1016/S0370-2693(98)00377-3},
  eprint = {hep-th/9802109},
  archivePrefix = {arXiv}
}

@article{GrozdanovSchalmScopelliti2018,
  author = {Grozdanov, Sa{\v{s}}o and Schalm, Koenraad and Scopelliti, Vincenzo},
  title = {Black Hole Scrambling from Hydrodynamics},
  journal = {Phys. Rev. Lett.},
  volume = {120},
  pages = {231601},
  year = {2018},
  doi = {10.1103/PhysRevLett.120.231601},
  eprint = {1710.00921},
  archivePrefix = {arXiv},
  primaryClass = {hep-th}
}

@article{BlakeDavisonGrozdanovLiu2018,
  author = {Blake, Mike and Davison, Richard A. and Grozdanov, Sa{\v{s}}o and Liu, Hong},
  title = {Many-body chaos and energy dynamics in holography},
  journal = {JHEP},
  volume = {10},
  pages = {035},
  year = {2018},
  doi = {10.1007/JHEP10(2018)035},
  eprint = {1809.01169},
  archivePrefix = {arXiv},
  primaryClass = {hep-th}
}

@article{BlakeDavisonVegh2020,
  author = {Blake, Mike and Davison, Richard A. and Vegh, David},
  title = {Horizon constraints on holographic Green's functions},
  journal = {JHEP},
  volume = {01},
  pages = {077},
  year = {2020},
  doi = {10.1007/JHEP01(2020)077},
  eprint = {1904.12883},
  archivePrefix = {arXiv},
  primaryClass = {hep-th}
}

@article{NatsuumeOkamura2019,
  author = {Natsuume, Makoto and Okamura, Takashi},
  title = {Nonuniqueness of Green's functions at special points},
  journal = {JHEP},
  volume = {12},
  pages = {139},
  year = {2019},
  doi = {10.1007/JHEP12(2019)139},
  eprint = {1905.12015},
  archivePrefix = {arXiv},
  primaryClass = {hep-th}
}

@article{NatsuumeOkamura2020,
  author = {Natsuume, Makoto and Okamura, Takashi},
  title = {Holographic chaos, pole-skipping, and regularity},
  journal = {Prog. Theor. Exp. Phys.},
  volume = {2020},
  pages = {013B07},
  year = {2020},
  doi = {10.1093/ptep/ptz155},
  eprint = {1905.12014},
  archivePrefix = {arXiv},
  primaryClass = {hep-th}
}

@article{NatsuumeOkamuraFinite2019,
  author = {Natsuume, Makoto and Okamura, Takashi},
  title = {Pole skipping with finite-coupling corrections},
  journal = {Phys. Rev. D},
  volume = {100},
  pages = {126012},
  year = {2019},
  doi = {10.1103/PhysRevD.100.126012},
  eprint = {1909.09168},
  archivePrefix = {arXiv},
  primaryClass = {hep-th}
}

@article{WuHigherCurvature2019,
  author = {Wu, Xing},
  title = {Higher curvature corrections to pole-skipping},
  journal = {JHEP},
  volume = {12},
  pages = {140},
  year = {2019},
  doi = {10.1007/JHEP12(2019)140},
  eprint = {1909.10223},
  archivePrefix = {arXiv},
  primaryClass = {hep-th}
}

@article{NatsuumeOkamuraZeroTemp2021,
  author = {Natsuume, Makoto and Okamura, Takashi},
  title = {Pole-skipping and zero temperature},
  journal = {Phys. Rev. D},
  volume = {103},
  pages = {066017},
  year = {2021},
  doi = {10.1103/PhysRevD.103.066017},
  eprint = {2011.10093},
  archivePrefix = {arXiv},
  primaryClass = {hep-th}
}

@article{ChoiMezeiSarosi2021,
  author = {Choi, Changha and Mezei, M{\'a}rk and S{\'a}rosi, G{\'a}bor},
  title = {Pole skipping away from maximal chaos},
  journal = {JHEP},
  volume = {02},
  pages = {207},
  year = {2021},
  doi = {10.1007/JHEP02(2021)207},
  eprint = {2010.08558},
  archivePrefix = {arXiv},
  primaryClass = {hep-th}
}

@article{CeplakRamdialVegh2020,
  author = {{\v{C}}eplak, Nejc and Ramdial, Kushala and Vegh, David},
  title = {Fermionic pole-skipping in holography},
  journal = {JHEP},
  volume = {07},
  pages = {203},
  year = {2020},
  doi = {10.1007/JHEP07(2020)203},
  eprint = {1910.02975},
  archivePrefix = {arXiv},
  primaryClass = {hep-th}
}

@article{CeplakVegh2021,
  author = {{\v{C}}eplak, Nejc and Vegh, David},
  title = {Pole-skipping and Rarita--Schwinger fields},
  journal = {Phys. Rev. D},
  volume = {103},
  pages = {106009},
  year = {2021},
  doi = {10.1103/PhysRevD.103.106009},
  eprint = {2101.01490},
  archivePrefix = {arXiv},
  primaryClass = {hep-th}
}

@article{WangWang2022,
  author = {Wang, Diandian and Wang, Zi-Yue},
  title = {Pole Skipping in Holographic Theories with Bosonic Fields},
  journal = {Phys. Rev. Lett.},
  volume = {129},
  pages = {231603},
  year = {2022},
  doi = {10.1103/PhysRevLett.129.231603},
  eprint = {2208.01047},
  archivePrefix = {arXiv},
  primaryClass = {hep-th}
}

@article{AhnHyperbolicScalarVector2020,
  author = {Ahn, Yongjun and Jahnke, Viktor and Jeong, Hyun-Sik and Kim, Keun-Young and Lee, Kyung-Sun and Nishida, Mitsuhiro},
  title = {Pole-skipping of scalar and vector fields in hyperbolic space: conformal blocks and holography},
  journal = {JHEP},
  volume = {09},
  pages = {111},
  year = {2020},
  doi = {10.1007/JHEP09(2020)111}
}

@article{AhnHyperbolicBlackHoles2019,
  author = {Ahn, Yongjun and Jahnke, Viktor and Jeong, Hyun-Sik and Kim, Keun-Young},
  title = {Scrambling in hyperbolic black holes: shock waves and pole-skipping},
  journal = {JHEP},
  volume = {10},
  pages = {257},
  year = {2019},
  doi = {10.1007/JHEP10(2019)257}
}

@article{YuanGe2021,
  author = {Yuan, Haiming and Ge, Xian-Hui},
  title = {Pole-skipping and hydrodynamic analysis in Lifshitz, {AdS}$_2$ and Rindler geometries},
  journal = {JHEP},
  volume = {06},
  pages = {165},
  year = {2021},
  doi = {10.1007/JHEP06(2021)165},
  eprint = {2012.15396},
  archivePrefix = {arXiv},
  primaryClass = {hep-th}
}

@article{Ramirez2021,
  author = {Ramirez, David M.},
  title = {Chaos and pole skipping in {CFT}$_2$},
  journal = {JHEP},
  volume = {12},
  pages = {006},
  year = {2021},
  doi = {10.1007/JHEP12(2021)006}
}

@article{YuanGeKimJiAhn2023,
  author = {Yuan, Haiming and Ge, Xian-Hui and Kim, Keun-Young and Ji, Chang-Woo and Ahn, Yongjun},
  title = {Pole-skipping points in 2D gravity and {SYK} model},
  journal = {JHEP},
  volume = {08},
  pages = {157},
  year = {2023},
  doi = {10.1007/JHEP08(2023)157},
  eprint = {2303.04801},
  archivePrefix = {arXiv},
  primaryClass = {hep-th}
}

@article{AhnClassifying2021,
  author = {Ahn, Yongjun and Jahnke, Viktor and Jeong, Hyun-Sik and Kim, Keun-Young and Lee, Kyung-Sun and Nishida, Mitsuhiro},
  title = {Classifying pole-skipping points},
  journal = {JHEP},
  volume = {03},
  pages = {175},
  year = {2021},
  doi = {10.1007/JHEP03(2021)175},
  eprint = {2010.16166},
  archivePrefix = {arXiv},
  primaryClass = {hep-th}
}

@article{NingWangWang2023,
  author = {Ning, Sirui and Wang, Diandian and Wang, Zi-Yue},
  title = {Pole skipping in holographic theories with gauge and fermionic fields},
  journal = {JHEP},
  volume = {12},
  pages = {084},
  year = {2023},
  doi = {10.1007/JHEP12(2023)084},
  eprint = {2308.08191},
  archivePrefix = {arXiv},
  primaryClass = {hep-th}
}

@article{ChuaHartmanWeng2026,
  author = {Chua, Wan Zhen and Hartman, Thomas and Weng, Wayne W.},
  title = {Replica manifolds, pole skipping, and the butterfly effect},
  journal = {JHEP},
  volume = {05},
  pages = {108},
  year = {2026},
  doi = {10.1007/JHEP05(2026)108},
  eprint = {2504.08139},
  archivePrefix = {arXiv},
  primaryClass = {hep-th}
}

@article{LuRanWu2026,
  author = {Lu, Zhenkang and Ran, Cheng and Wu, Shao-Feng},
  title = {Bulk Spacetime Encoding via Boundary Ambiguities},
  journal = {Phys. Rev. Lett.},
  volume = {136},
  pages = {061603},
  year = {2026},
  doi = {10.1103/gmd3-zt39}
}

@misc{ChounKim2026,
  author = {Choun, Yoon-Seok and Kim, Ki-Seok},
  title = {A Horizon-to-Boundary Dictionary Linking Smooth Horizon Continuation, Pole-Skipping, and {$SL(2,\mathbb{R})$} Lowest-Weight Structure},
  year = {2026},
  note = {arXiv:2607.03800 [hep-th]},
  eprint = {2607.03800},
  archivePrefix = {arXiv},
  primaryClass = {hep-th}
}

@book{Ince1956,
  author = {Ince, Edward L.},
  title = {Ordinary Differential Equations},
  publisher = {Dover Publications},
  address = {New York},
  year = {1956}
}

@book{CoddingtonLevinson1955,
  author = {Coddington, Earl A. and Levinson, Norman},
  title = {Theory of Ordinary Differential Equations},
  publisher = {McGraw--Hill},
  address = {New York},
  year = {1955}
}

\end{document}